\documentclass[a4paper,UKenglish,cleveref, autoref, thm-restate]{lipics-v2021}

\nolinenumbers

\hideLIPIcs  

\title{Clustering Permutations under the Ulam Metric:\\ A Parameterized Complexity Study} 

\titlerunning{Clustering Permutations under the Ulam Metric: A Parameterized Complexity Study} 

\author{Tian Bai}{University of Bergen, Bergen, Norway}{Tian.Bai@uib.no}{https://orcid.org/0000-0003-1669-285X}{Funded by the European Union, GA{\small \#}101126560; Bergen research and training program for future AI leaders across the disciplines, LEAD AI and by the Trond Mohn forskningsstiftelse (grant no. TMS2023TMT01).}
\author{Fedor V. Fomin}{University of Bergen, Bergen, Norway}{Fedor.Fomin@uib.no}{https://orcid.org/0000-0003-1955-4612}{Supported by the European Research Council (ERC) under the European Union's Horizon 2020 research and innovation programme (NewPC grant agreement No.  101199930).\\\begin{minipage}{0.2\textwidth}\includegraphics[width=0.9\textwidth]{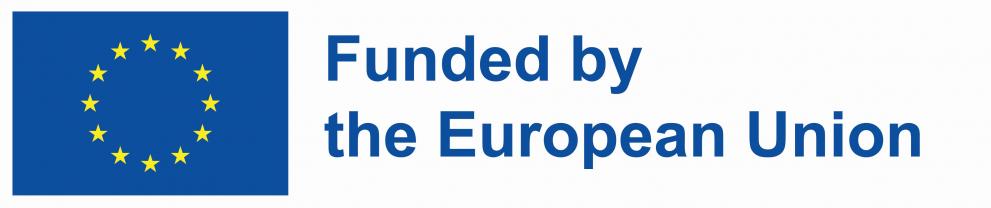}\end{minipage} }
\author{Petr A. Golovach}{University of Bergen, Bergen, Norway}{petr.golovach@uib.no}{https://orcid.org/0000-0002-2619-2990}{Supported by the Research Council of Norway under the BWCA (grant no.~314528) and  Extreme-Algorithms (grant no~355137) projects.}
\author{Yash Hiren More}{University of Bergen, Bergen, Norway}{yash.h.more@uib.no}{https://orcid.org/0000-0003-8651-6686}{Supported by the Trond Mohn forskningsstiftelse (grant no. TMS2023TMT01).}
\author{Simon Wietheger}{TU Wien, Vienna, Austria}{swietheger@ac.tuwien.ac.at}{https://orcid.org/0000-0002-0734-0708}{Supported by the Austrian Science Foundation (FWF, project 10.55776/Y1329).}

\authorrunning{T. Bai, F. V. Fomin, P. A. Golovach, Y. H. More, S. Wietheger} 

\Copyright{Tian Bai, Fedor V. Fomin, Petr. A. Golovach, Yash Hire More, Simon Wietheger} 

\ccsdesc[500]{Theory of computation~Parameterized complexity and exact algorithms}
\ccsdesc[500]{Mathematics of computing~Permutations and combinations}

\keywords{parameterized complexity, Ulam distance, rank aggregation, clustering}

\category{} 

\relatedversion{} 

\acknowledgements{}

\EventEditors{Sayan Bhattacharya, Danupon Nanongkai, Michael Benedikt, and Gabriele Puppis}
\EventNoEds{4}
\EventLongTitle{53rd International Colloquium on Automata, Languages, and Programming (ICALP 2026)}
\EventShortTitle{ICALP 2026}
\EventAcronym{ICALP}
\EventYear{2026}
\EventDate{July 7--10, 2026}
\EventLocation{Royal Holloway, University of London, Egham, United Kingdom}
\EventLogo{}
\SeriesVolume{374}
\ArticleNo{57}

\usepackage{graphicx} 
\usepackage{booktabs,array,amsmath,amsthm,amsfonts,mathtools,xspace,amssymb,utfsym}
\usepackage{complexity}
\usepackage{color,boxedminipage,framed}

\usepackage{footnote,nicefrac,todonotes,xifthen,tabularx}
\usepackage[ruled,linesnumbered]{algorithm2e}

\usepackage{tikz,tcolorbox} 
\usetikzlibrary{arrows.meta}
\usetikzlibrary{decorations.pathreplacing}
\usetikzlibrary{calc,math,patterns,shapes,backgrounds}
\usepackage{algorithm2e}

\usepackage[dvipsnames]{xcolor}

\newtheorem{reduction}{Reduction}

\newlength{\RoundedBoxWidth}
\newsavebox{\GrayRoundedBox}
\newenvironment{GrayBox}[1]%
  {
    \setlength{\RoundedBoxWidth}{.96\textwidth}
    \def\boxheading{#1}
    \begin{lrbox}{\GrayRoundedBox}
      \begin{minipage}{\RoundedBoxWidth}%
  }
  {
      \end{minipage}
    \end{lrbox}
    \begin{center}
      \begin{tikzpicture}
        \node (Text) [
          fill = none,
          rounded corners = 2ex,
          inner sep = 2ex,
          text width = \RoundedBoxWidth
        ]{\usebox{\GrayRoundedBox}};

        \coordinate (X) at (current bounding box.north west);
        \node (Title) [draw = none,
          rectangle,
          inner sep = 3pt,
          anchor = north west,
          fill = none]
        at ($(X) + (6pt, .75em)$) {\boxheading};
        
        \draw[black!20, line width=0.4pt, rounded corners=2ex] 
        let 
          \p1 = (Text.north west),
          \p2 = (Text.north east),
          \p3 = (Text.south east),
          \p4 = (Text.south west),
          \p6 = (Title.east),
        in
          (\x6, \y1) -- (\p2) -- (\p3) -- (\p4) -- (\x1,  \y1 - 2ex);

        \draw[black!20, line width=0.4pt, rounded corners=2ex] 
        let 
          \p1 = (Text.north west),
          \p5 = (Title.west),
        in
          (\x1,  \y1 - 2ex)arc[start angle = 180, end angle = {acos((\x5 - (\x1 + 2ex))/2ex)}, radius = 2ex];

      \end{tikzpicture}
    \end{center}%
  }

\newenvironment{defproblemx}[2][]{\noindent\ignorespaces%
                                \FrameSep=6pt%
                                \parindent=0pt%
                \vspace*{-1.5em}
                \ifthenelse{\isempty{#1}}{%
                  \begin{GrayBox}{\textsc{#2}}%
                }{%
                  \begin{GrayBox}{\textsc{#2}  parameterized by~{#1}}%
                }
                \begin{tabular*}{\textwidth}{@{\hspace{.1em}} >{\itshape} p{.85cm} p{0.9\textwidth} @{}}%
            }{
    \vspace{-5mm}
                \end{tabular*}%
                \end{GrayBox}%
                \ignorespacesafterend
            }

\newcommand{\defproblema}[3]{
  \begin{defproblemx}{#1}
    Input:  & #2 \\
    Task: & #3
  \end{defproblemx}
}%

\newcommand{\ucenterfull}{\textsc{Ulam Metric $k$-Center}\xspace}
\newcommand{\umedianfull}{\textsc{Ulam Metric $k$-Median}\xspace}
\newcommand{\ucenterone}{\textsc{Ulam Metric $1$-Center}\xspace}

\newcommand{\vcfull}{\textsc{Vertex Cover}\xspace}
\newcommand{\cstringfull}{\textsc{Closest String}\xspace}

\newcommand{\yes}{{YES}}
\newcommand{\no}{{NO}}
\newcommand{\LCS}{\mathrm{LCS}}
\newcommand{\yesinstance}{\yes-instance\xspace}
\newcommand{\noinstance}{\no-instance\xspace}
\newcommand{\yesinstances}{\yes-instances\xspace}
\newcommand{\noinstances}{\no-instances\xspace}
\newcommand{\ulamdistance}{Ulam distance\xspace}

\newcommand{\dist}[0]{\mathrm{dist}_{U}}
\newcommand{\distH}[0]{\mathrm{dist}_{H}}
\newcommand{\base}[0]{\mathrm{base}}

\newcommand{\highlight}[1]{%
  \colorbox{black!20}{$\displaystyle#1$}}
\newcommand{\HIGHlight}[1]{%
  \colorbox{black!40}{$\displaystyle#1$}}

\newcommand{\mcm}[3]{\newcommand{#1}[#2]{{\ensuremath{#3}}}} 
\mcm{\Nbb}{0}{\mathbb{N}}
\mcm{\Zbb}{0}{\mathbb{Z}}
\mcm{\Rbb}{0}{\mathbb{R}}
\mcm{\Cbb}{0}{\mathbb{C}}
\mcm{\Qbb}{0}{\mathbb{Q}}
\mcm{\Acal}{0}{\cal A}
\mcm{\Bcal}{0}{\cal B}
\mcm{\Ccal}{0}{\cal C}
\mcm{\Dcal}{0}{\cal D}
\mcm{\Ecal}{0}{\cal E}
\mcm{\Fcal}{0}{\cal F}
\mcm{\Gcal}{0}{\cal G}
\mcm{\Hcal}{0}{\cal H}
\mcm{\Ical}{0}{\cal I}
\mcm{\Jcal}{0}{\cal J}
\mcm{\Kcal}{0}{\cal K}
\mcm{\Lcal}{0}{\cal L}
\mcm{\Mcal}{0}{\cal M}
\mcm{\Ncal}{0}{\cal N}
\mcm{\Ocal}{0}{{\cal O}}
\mcm{\Pcal}{0}{{\cal P}}
\mcm{\Qcal}{0}{{\cal Q}}
\mcm{\Rcal}{0}{{\cal R}}
\mcm{\Scal}{0}{{\cal S}}
\mcm{\Tcal}{0}{{\cal T}}
\mcm{\Ucal}{0}{{\cal U}}
\mcm{\Vcal}{0}{{\cal V}}
\mcm{\Wcal}{0}{{\cal W}}
\mcm{\Xcal}{0}{{\cal X}}
\mcm{\Ycal}{0}{{\cal Y}}
\mcm{\Zcal}{0}{{\cal Z}}

\usepackage{microtype}
\DeclareMathOperator{\col}{\mathrm{col}}
\DeclareMathOperator{\red}{\textls[-70]{\texttt{red}}}
\DeclareMathOperator{\blue}{\textls[-120]{\texttt{blue}}}

\definecolor{cb_orange}{rgb}{0.859 0.427 0.0}
\definecolor{cb_blue}{rgb}{0.0 0.427 0.859}
\definecolor{cb_violet}{rgb}{0.714 0.427 1.0}
\definecolor{cb_dark_seal}{rgb}{0.0 0.286 0.286}
\definecolor{cb_seal}{rgb}{0.0 0.573 0.573}
\definecolor{cb_pink}{rgb}{1.0 0.427 0.714}
\definecolor{cb_rose}{rgb}{1.0 0.714 0.859}
\definecolor{cb_purple}{rgb}{0.286 0.0 0.573}
\definecolor{cb_light_blue}{rgb}{0.427 0.714 1.0}
\definecolor{cb_vlight_blue}{rgb}{0.714 0.859 1.0}
\definecolor{cb_red}{rgb}{0.573 0.0 0.0}
\definecolor{cb_brown}{rgb}{0.573 0.286 0.0}
\definecolor{cb_green}{rgb}{0.141 1.0 0.141}
\definecolor{cb_yellow}{rgb}{1.0 1.0 0.427}

\begin{document}

\maketitle

\begin{abstract}
 Rank aggregation seeks a representative permutation for a collection of rankings and plays a central role in areas such as social choice, information retrieval, and computational biology.
    Two fundamental aggregation tasks are the \emph{center} and \emph{median} problems, which minimize the maximum and the total distance to the input permutations, respectively.
    While these problems are well understood under Kendall's tau and related distances, their parameterized complexity under the Ulam metric, an edit-distance-based metric on permutations, has remained largely unexplored.
    
    In this work, we initiate a systematic study of the parameterized complexity of rank aggregation under the Ulam metric.
    We consider both the center and median problems, as well as their generalizations to the \emph{$k$-center} and \emph{$k$-median} clustering settings, parameterized by the number of centers $k$ and the distance budget $d$ (corresponding to the maximum distance for center variants and the total distance for median variants). Both problems are known to be NP-hard already for $k=1$.
    
    We show that the Ulam $k$-center problem remains NP-hard when $d=1$, but is fixed-parameter tractable when parameterized by $k + d$.
    Our algorithm is based on a novel local-search framework tailored to the non-local nature of Ulam distances.
    We complement this by proving that no polynomial kernel exists for the $k+d$ parameterization unless $\NP \subseteq \coNP/\poly$.
    For the Ulam \emph{$k$-median} problem parameterized by the total distance $d$, we establish $\W[1]$-hardness and provide an $\XP$ algorithm.
    We also provide a polynomial kernel for the parameter $k + d$, which in turn yields a fixed-parameter tractable algorithm.
\end{abstract}

\newpage
\section{Introduction}\label{sec:intro}
    Rank aggregation is the problem of computing a representative ranking from a collection of permutations representing the preferences of individual voters.
The roots for a mathematical investigation of this problem can be traced back to early studies of social choice theory and voting systems like Borda (1781) and Condorcet (1785), as well as later seminal works such as the ones by Arrow~\cite{arrow1951social} or Diaconis and Graham~\cite{diaconisGraham}.
Rank aggregation constitutes a fundamental task with broad applications in social choice theory, information retrieval, computational biology, and database systems~\cite{aggarwal1988io,arrow1951social,dwork2001rank,pevzner2000computational,DasK25}.
Algorithmically, two canonical rank aggregation tasks are to compute the center and the median of permutations.
Given a distance function $\mathrm{dist}(\cdot,\cdot)$ on the space $\Scal_n$ of permutations over $n$ symbols, the \emph{center} of a permutation set
$\Pi \subseteq \Scal_{n}$ is a permutation $\sigma \in \Scal_n$ minimizing
\begin{equation*}
    \max_{\pi \in \Pi} \mathrm{dist}(\pi, \sigma),
\end{equation*}
whereas the \emph{median} of $\Pi$ is a permutation $\sigma \in \Scal_n$ minimizing
\begin{equation*}
    \sum_{\pi \in \Pi} \mathrm{dist}(\pi, \sigma).
\end{equation*}

Among distance measures for rank aggregation, Kendall's tau and the Ulam metric are the two most natural and fundamental metrics, extensively studied in both theory and applications~\cite{ChakrabortyD0S22fairRanking}.
Kendall's tau distance counts the number of pairwise disagreements between two permutations, thereby capturing local inconsistencies in ranking order.
In contrast, the Ulam metric is an edit distance that measures the minimum number of element move operations (deletion and insertion) required to transform one permutation into another.
This metric captures global structural differences by isolating the longest common subsequence (the elements that are not moved) and quantifying how many elements must be relocated to achieve alignment.

The tasks of computing centers and medians under Kendall's tau metric are well understood from different computational perspectives~\cite{ailon2008aggregating,therese2009crossings,dwork2001rank,fagin2003similarity,kenyon2007rank,pevzner2000computational,vanzuylen2007deterministic}.
Finding the optimal center or median under Kendall's tau is $\NP$-hard already if there are just four input permutations~\cite{therese2009crossings,dwork2001rank}.
Additionally, the median under Kendall's tau, also known as \emph{Kemeny rankings}, admits a polynomial-time approximation scheme (PTAS)~\cite{kenyon2007rank}.
Both versions of rank aggregation under Kendall's tau admit fixed-parameter tractable (FPT) algorithms parameterized by the solution cost~\cite{BachmeierBGH_2015,betzler2009kemeny}, and other natural parameters~\cite{betzler2010dodgson, betzler2009parameterized,christian2007lobbying, CunhaSS24,fernau2014socialchoice}.

In contrast, the study of rank aggregation under the Ulam metric is relatively unexplored, especially from the perspective of parameterized complexity.
A key reason is that Ulam distance is a global measure whose combinatorial structure is inherently harder to exploit algorithmically than the local nature of Kendall's tau.
Consequently, some algorithmic work has focused on estimating the distance itself.
Andoni and Nguyen~\cite{AndoniN10} and Naumovitz \textit{et al.}~\cite{NaumovitzSS17} developed sublinear-time approximation schemes to estimate the Ulam distance between two permutations, using connections to calculate the longest common subsequence.

Finding an optimal center under the Ulam metric was shown to be $\NP$-hard over a decade ago~\cite{BachmeierBGH_2015}, while the median problem was only recently proven $\NP$-hard~\cite{FischerGHK_2025}.
A folklore $2$-approximation exists for both the median and center variants under the Ulam metric.
Recently, Chakraborty \textit{et al.}~\cite{chakraborty2021center} obtained a $(1.5 - \varepsilon)$-approximation (with any $\varepsilon > 0$) for the center problem with running time exponential in $m$ (the number of input permutations).
At a similar time, the first approximation algorithm to break the barrier of $2$ for the median problem appeared~\cite{chakraborty2021ulam}.
The subsequent work~\cite{chakraborty2023clustering} extends the approximation framework to the setting of finding $k$ medians, achieving a similar approximation guarantee in time $(k\log(mn))^{\Ocal(k)}mn^3$, where $n$ is the size of the alphabet and $m$ is the number of input permutations.
Very recently, Jaiswal\textit{et al.}~\cite{jaiswal_et_al:LIPIcs.ICALP.2025.100} improved on this work and provide a randomized algorithm running in $\Tilde{O}((2k)^k nd)$ time.
All three works on the median problem provide guarantees close to but below an approximation factor of $2$.

To the best of our knowledge, no prior work has considered Ulam-based aggregation through parameterized exact algorithms, despite the metric's intimate connection to edit distance and longest common subsequence—domains where parameterized techniques have proven highly effective.
The clustering problems under the Ulam metric seem more challenging, as the Ulam metric is a more complex and less local measure than Kendall's tau, which may suggest that the computational landscape of Ulam-based aggregation could be significantly different.
One may notice that the Ulam distance is a smaller parameter than Kendall's tau, as every permutation can be transformed into the other by only moving the elements that are in a different order in the two permutations.
It may happen that two permutations with Ulam distance $1$ could differ in the relative order of an unbounded number of pairs of elements.
This highlights a fundamental structural disparity between the two distance measures, which leads to the following natural question:
\begin{quote}
     \textit{What is the parameterized complexity of rank aggregation under the Ulam metric?}
\end{quote}
   
\subsection{Clustering Under the Ulam Metric: Definitions and Main Results}\label{sec:results}

In this work, we initiate a systematic study of the parameterized complexity of rank aggregation under the Ulam metric.
Beyond the classical parameter $d$, which captures the solution cost (the maximum distance or the sum of distances, respectively), we consider substantially more general variants of the problem: $k$-center and $k$-median clustering under the Ulam metric~\cite{chakraborty2023clustering}.
These extensions, denoted by \ucenterfull{} and \umedianfull{}, are both natural and well established in the rank aggregation literature.
In these problems, one seeks $k$ representative permutations that minimize, respectively, the maximum distance to any assigned permutation ($k$-center) or the total sum of distances ($k$-median).

Formally, let $\Scal_n(\Sigma)$ be the set of all permutations over an alphabet $\Sigma$ of size $n$, and let $\dist(\pi,\sigma)$ be the Ulam distance between $\pi$ and $\sigma$ in $\Scal_n(\Sigma)$ (see \Cref{sec:preliminaries} for the formal definitions).
We define our problems of interest as follows.

\defproblema{\ucenterfull}
{An alphabet $\Sigma$ of size $n$, a set of permutations $\Pi = \{\pi_1,\pi_2,\dots,\pi_m\} \subseteq \Scal_{n}(\Sigma)$, and integers $k$ and $d$.}
{Decide whether there exists a set $S = \{\sigma_1, \sigma_2, \ldots, \sigma_k \} \subseteq \Scal_n(\Sigma)$ of $k$ centers such that
\begin{equation*}
    \max_{\pi \in \Pi} \min_{\sigma \in S} \dist(\pi,\sigma) \le d.
\end{equation*}}

\defproblema{\umedianfull}
{An alphabet $\Sigma$ of size $n$, a multiset of permutations $\Pi = \{\pi_1,\pi_2,\dots,\pi_m\} \subseteq \Scal_n(\Sigma)$, and
integers $k$ and $d$.}
{Decide whether there exists a set $S = \{\sigma_1, \sigma_2, \ldots, \sigma_k\} \subseteq \Scal_n(\Sigma)$ of $k$ medians such that
\begin{equation*}
    \sum_{\pi \in \Pi} \min_{\sigma \in S} \dist(\pi,\sigma) \le d.
\end{equation*}}

We remark that, in both clustering problems, the centers or medians are not required to belong to the input set of permutations~$\Pi$.
Hence, we consider variants of the problems that are referred to as \emph{continuous} in some literature.
Both problems are known to be $\NP$-hard already for $k = 1$: hardness for the center variant was shown by Bachmaier \textit{et al.}~\cite{BachmeierBGH_2015} and for the median by Fischer \textit{et al.}~\cite{FischerGHK_2025}.

We summarize our main results in \Cref{tab:summary-results}.
Most notably, we establish fixed-parameter tractability for both problems when parameterized simultaneously by $k$ and~$d$.
Prior to our work, the parameterized complexity of these two clustering problems with respect to $d$ was not even settled for the much simpler case $k = 1$.

We contrast this tractability result by establishing hardness for parameterization by $d$ alone: \ucenterfull is already $\NP$-hard on instances with $d=1$.
In contrast, \umedianfull is in $\XP$ parameterized by $d$ (that is, solvable in polynomial time for each fixed $d$), but is $\W[1]$-hard for parameter $d$, ruling out fixed-parameter tractability under well-established complexity assumptions.
Finally, we discover yet another difference in the tractability landscapes of \ucenterfull and \umedianfull.
While the latter admits a polynomial-sized kernel, we rule out the existence of such a kernel for \ucenterfull under established complexity assumptions.
Our results for the Ulam metric complement the mature theory developed for Kendall's tau.
We also hope that they provide rigorous foundations for consensus ranking in applications where global structural coherence under the Ulam metric is essential.

In the following, we provide a detailed overview of our results and the techniques employed.
Formal proofs are deferred to Sections~\ref{sec:center}~and~\ref{sec:median}.

{\crefname{theorem}{Thm.}{Thms.}
 \crefname{corollary}{Cor.}{Cors.}
 
\begin{table}[h]
  \centering
  \caption{Summary of our results.}
  \label{tab:summary-results}
  \begin{tabular}{lcc}
    \toprule
    Parameters  & $k$-center & $k$-median \\
    \midrule
    $k$         & $\NP$-hard even if $k=1$ \cite{BachmeierBGH_2015}
                & $\NP$-hard even if $k=1$ \cite{FischerGHK_2025} \\
    $d$         & $\NP$-hard even if $d=1$ (\cref{thm:center_d_paraNPhard})
                & $\W[1]$-hard, $\XP$ (\cref{thm:median_d_whard,thm:median_d_xp}) \\
    $k+d$       & $\FPT$ (\cref{thm:center_dk_fpt})
                & $\FPT$ (\cref{cor:fpt-median})\\
                & no polynomial kernel (\cref{thm:center_dk_no_poly_kernel})
                & polynomial kernel (\cref{thm:median_dk_fpt})\\
    \bottomrule
  \end{tabular}
\end{table}
}

\subsection{Contributions and Techniques for \texorpdfstring{\ucenterfull}{Ulam Metric k-Center}}

The first group of results in our work concerns \ucenterfull.
As previous work already established that it is unlikely to obtain polynomial time algorithms even for instances aiming for $k = 1$ center, it is natural to ask for the tractability on instances where we instead restrict the radius of the centers to $d = 1$.
We answer this question negatively by establishing the $\NP$-hardness of \ucenterfull for $d$ is any constant at least $1$.
This hardness result is established via a reduction from the \textsc{Vertex Cover} problem on triangle-free graphs. 

\begin{restatable}{theorem}{centerNPhard}
    \label{thm:center_d_paraNPhard}
    \ucenterfull is $\NP$-hard even if $d$ is fixed to any constant at least $1$.
\end{restatable}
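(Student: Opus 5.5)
We reduce from \vcfull{} on triangle-free graphs, which remains $\NP$-hard. First we handle $d=1$, then we pad to reach any constant $d\ge 1$. Given a triangle-free graph $G=(V,E)$ and an integer $k$, we build an alphabet $\Sigma$ and a fixed base permutation $\base$ of $\Sigma$. For every vertex $v$ there is a distinguished symbol $x_v$ together with a distinct ``slot'' in $\base$.

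For each edge $e=uv$ we create one input permutation $\pi_e$. It is obtained from $\base$ by two moves: $x_u$ is displaced in one direction and $x_v$ in another, so that $\dist(\pi_e,\base)=2$. The moves are designed so that there are exactly two permutations at Ulam distance $1$ from $\pi_e$ that fit the structure. These are $\sigma_u$, which is $\base$ with only $x_u$ moved to its designated position, and $\sigma_v$, defined symmetrically.

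The intended equivalence is that $G$ has a vertex cover $C$ of size $k$ if and only if the centers $\{\sigma_v : v\in C\}$ cover all $\pi_e$ within distance $1$.

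The forward direction is immediate from the construction. For each edge $e=uv$ with $u\in C$, we have $\dist(\pi_e,\sigma_u)=1$.

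The backward direction is the main work and the expected obstacle. We must show that any permutation $\sigma$ with $\dist(\sigma,\pi_e)\le 1$ for several edges $e$ behaves like $\sigma_v$ for a single vertex $v$ common to all those edges. The argument runs as follows.
\begin{itemize}
\item Every permutation within distance $1$ of $\pi_e$ is $\pi_e$ with a single symbol moved.
\item Suppose $\sigma$ is within distance $1$ of both $\pi_e$ and $\pi_f$. Then $\dist(\pi_e,\pi_f)\le 2$.
\item We design the construction so that $\dist(\pi_e,\pi_f)=4$ unless $e$ and $f$ share a vertex, and $\dist(\pi_e,\pi_f)=2$ when they do. To ensure this, each symbol $x_v$ is placed into a far-away position specific to the edge, for example by inserting $x_u$ next to an edge-specific marker symbol, so that unrelated moves cannot be merged.
\item Consequently, the set of edges covered by one center pairwise share vertices. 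In a triangle-free graph a pairwise intersecting edge set is a star, which is exactly where triangle-freeness is used. Without it, a single center could potentially serve the three edges of a triangle.
\end{itemize}

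It remains to verify that a center serving a star centered at $v$ can be replaced by $\sigma_v$, or at least charged to $v$. Then the centers yield a vertex cover of size at most $k$.

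The delicate part is the Ulam-distance case analysis. We need to confirm that a single move of $\pi_e$ can equal a single move of $\pi_f$ only in the intended way. This follows by comparing the two permutations' symmetric differences of displaced symbols, using LCS arguments. A star with only one edge is trivially charged to either endpoint.

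To handle general constant $d$, we append to every permutation a block of fresh symbols. Each $\pi_e$ receives the same block, but with $d-1$ fixed extra displacements shared by all inputs. Every center must then spend $d-1$ moves to undo this block, or equivalently it absorbs them at no extra cost. An alternative is to replace each edge gadget with $d$ disjoint copies of the two-move gadget, in which case a distance budget of $d$ forces exactly one resolved move per copy. The simplest variant to verify is to prepend $d-1$ symbols that are rearranged identically in all inputs, while forcing the remaining budget analysis to coincide with the $d=1$ case.
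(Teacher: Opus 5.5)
Your overall plan matches the paper's: reduce from Vertex Cover on triangle-free graphs, use one input per edge and a center $\sigma_u$ per vertex, and force each cluster to be a pairwise-intersecting edge set, hence a star. But you never fix the gadget, and the one concrete choice you make breaks it.

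\textbf{The $d=1$ gadget.} Suppose $x_u$ is put at an \emph{edge-specific} position, next to an edge-specific marker. Then $\pi_{uv}$ and $\pi_{uw}$ disagree on $x_u$, $x_v$ and $x_w$. With the far-apart placements you intend, $\dist(\pi_{uv},\pi_{uw})=3$, not $2$. Consequently no permutation is within distance $1$ of both. Moreover, $\sigma_u$ has a single designated position for $x_u$, so it serves at most one incident edge, and the forward direction fails.

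The displacement must depend only on the vertex, and it must make distances exactly computable. The paper does this with two adjacent symbols $v_i\,\bar v_i$ per vertex in the base permutation, and it encodes a vertex by swapping its pair. For such pair-structured permutations, the Ulam distance is exactly the number of pairs with different orientation. A common subsequence can keep both symbols of every agreeing pair but at most one symbol of every disagreeing pair. This gives, with no delicate case analysis:
\begin{itemize}
\item $\dist(\pi_e,\sigma_u)=1$ whenever $u\in e$;
\item distance $2$ between adjacent edges;
\item distance $4$ between disjoint edges.
\end{itemize}
The backward direction also does not require showing that a center ``behaves like'' $\sigma_v$. Since $4>2$, two disjoint edges cannot share a center by the triangle inequality. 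So each cluster is a star in the triangle-free graph, and you simply charge it to its common vertex (either endpoint if the cluster is a single edge).

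\textbf{General $d$.} Your first and third variants fail. A block that is rearranged identically in all inputs costs a center nothing, because the center just copies it. Nothing forces a center to spend $d-1$ moves there, so the full budget $d$ remains available on the core gadget. For $d\ge 2$, the base permutation with the copied block is within distance $2\le d$ of every input. One center then always suffices, and the reduction collapses.

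Your second variant is the correct one, and it is what the paper does. Give each vertex $d$ pairs $v_{i,j}\,\bar v_{i,j}$. Let $\pi_e$ swap all $2d$ pairs of its two endpoints, and let $\sigma_v$ swap the $d$ pairs of $v$. Then $\dist(\pi_e,\sigma_v)=d$ for $v\in e$, adjacent edges are at distance $2d$, and disjoint edges are at distance $4d>2d$. The justification is not that the budget forces one move per copy, but simply that $4d>2d$. With that, the same star argument goes through verbatim.
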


The construction for $d = 1$ is relatively straightforward but closes an important gap: it establishes that parameterization by $d$ alone cannot yield fixed-parameter tractability under standard complexity assumptions.
The main idea of the reduction is to construct an alphabet consisting of two symbols $v, \bar{v}$ per vertex $v$ in the vertex cover instance.
The input set $\Pi$ contains one permutation per edge, all sharing a common base ordering except that the symbol pairs of the two endpoints of the edge are swapped in the respective permutation.
A vertex cover then corresponds to a set of center permutations, each encoding a vertex by swapping the respective symbol pair in the base ordering.
Then each permutation encoding an edge has Ulam distance $1$ to only those permutations encoding vertices to which the edge is incident, which ensures an equivalence between the vertex cover instance and the \ucenterfull instance.
We defer a formal proof to Section~\ref{sec:center}, but provide an exemplary reduction in Figure~\ref{fig:exampleVertexCoverReduction}.

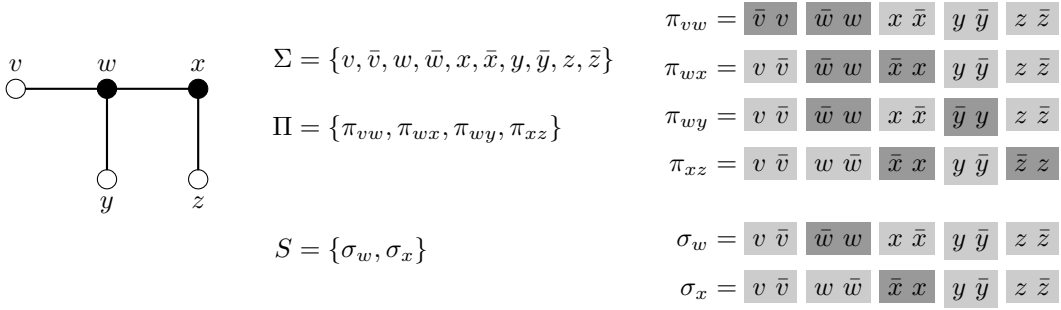
\begin{figure}[t]
    \centering

    \begin{minipage}{.21\textwidth}
        \begin{tikzpicture}[scale = 1.2]
            \draw (0, 2) circle (3pt) node[black,above=3pt] {$v$}; 
            \filldraw (1, 2) circle (3pt) node[black,above=3pt] {$w$}; 
            \filldraw (2, 2) circle (3pt) node[black,above=3pt] {$x$}; 
            \draw (1, 1) circle (3pt) node[black,below=3pt] {$y$}; 
            \draw (2, 1) circle (3pt) node[black,below=3pt] {$z$}; 

            \draw[thick] (1.1, 2) -- (1.9, 2);
            \draw[thick] (1, 1.9) -- (1, 1.1);
            \draw[thick] (2, 1.9) -- (2, 1.1);
            \draw[thick] (0.1, 2) -- (0.9, 2);
        \end{tikzpicture}
    \end{minipage}
    \begin{minipage}{.36\textwidth}
        \begin{align*}
            \Sigma &{}= \{v, \bar{v}, w, \bar{w}, x, \bar{x}, y, \bar{y}, z, \bar{z}\}\\[1em]
            \Pi &{}= \{\pi_{vw}, \pi_{wx}, \pi_{wy}, \pi_{xz}\} \\[3em]
            S &{}= \{\sigma_w, \sigma_x\}
        \end{align*}
    \end{minipage}
    \begin{minipage}{.41\textwidth}
        \begin{align*}
            \pi_{vw} &{}= \HIGHlight{\bar{v}\ v}\ \HIGHlight{\bar{w}\ w}\ \highlight{x\ \bar{x}}\ \highlight{y\ \bar{y}}\ \highlight{z\ \bar{z}}\\
            \pi_{wx} &{}= \highlight{v\ \bar{v}}\ \HIGHlight{\bar{w}\ w}\ \HIGHlight{\bar{x}\ x}\ \highlight{y\ \bar{y}}\ \highlight{z\ \bar{z}}\\
            \pi_{wy} &{}= \highlight{v\ \bar{v}}\ \HIGHlight{\bar{w}\ w}\ \highlight{x\ \bar{x}}\ \HIGHlight{\bar{y}\ y}\ \highlight{z\ \bar{z}}\\
            \pi_{xz} &{}= \highlight{v\ \bar{v}}\ \highlight{w\ \bar{w}}\ \HIGHlight{\bar{x}\ x}\ \highlight{y\ \bar{y}}\ \HIGHlight{\bar{z}\ z}\\[1em]
            \sigma_w &{}= \highlight{v\ \bar{v}}\ \HIGHlight{\bar{w}\ w}\ \highlight{x\ \bar{x}}\ \highlight{y\ \bar{y}}\ \highlight{z\ \bar{z}}\\
            \sigma_x &{}= \highlight{v\ \bar{v}}\ \highlight{w\ \bar{w}}\ \HIGHlight{\bar{x}\ x}\ \highlight{y\ \bar{y}}\ \highlight{z\ \bar{z}}
        \end{align*}
    \end{minipage}
    \caption{Exemplary vertex cover instance with solution $\{w,x\}$ (black vertices) and the constructed instance of \ucenterfull with solution $S = \{\sigma_w, \sigma_x\}$.}
        \label{fig:exampleVertexCoverReduction}
\end{figure}

A nontrivial challenge is that a single permutation—swapping symbol pairs of multiple vertices—might cover several edges without corresponding to any single vertex in a cover.
We circumvent this problem by basing our reduction on triangle-free instances of \textsc{Vertex Cover}, so this case will never occur.
Furthermore, a direct reduction from the $d = 1$ case to general $d$ is non‑trivial; we overcome this by augmenting the $d = 1$ construction with additional corresponding symbols for each vertex.

The $\NP$-hardness of \ucenterfull even if $d = 1$ or if $k = 1$ naturally raises the question of the parameterized complexity of the problem with the combined parameter $k + d$.
The following theorem establishes fixed-parameter tractability with respect to this combined parameter. 

\begin{restatable}{theorem}{centerFPTdk}
    \label{thm:center_dk_fpt}
    \ucenterfull can be solved in $2^{\Ocal(d^{2} k(d + k))} \cdot m n^{1 + o(1)}$  time, where $m\coloneqq |\Pi| $ and $n\coloneqq |\Sigma|$. 
\end{restatable}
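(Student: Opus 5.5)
Our plan is a branching (bounded search tree) algorithm that combines two ingredients. The first is the standard "pick an uncovered input permutation as an anchor" strategy for $k$-center problems. The second is a local-search step that moves a tentative center closer to the true center by applying a bounded number of element moves.

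\textbf{Step 1: anchors.} We maintain up to $k$ tentative centers $\tau_1,\dots,\tau_j$. Each is initialised to an input permutation that is at distance more than $d$ from all current tentative centers. If no such permutation exists, we are done. In any solution $S$, such a permutation $\pi$ is within distance $d$ of some $\sigma\in S$. We guess which $\sigma$ (at most $k$ choices) and set $\tau\coloneqq\pi$, so that $\dist(\tau,\sigma)\le d$.

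\textbf{Step 2: local search.} For each tentative center $\tau$ we keep a budget $\ell$ with the invariant $\dist(\tau,\sigma)\le\ell$, starting at $\ell=d$. Now suppose an input permutation $\pi$ assigned to $\sigma$ in the guessed solution satisfies $\dist(\pi,\tau)>d$. Then $\dist(\pi,\tau)\le \dist(\pi,\sigma)+\dist(\sigma,\tau)\le d+\ell\le 2d$, so $\pi$ and $\tau$ differ by at most $2d$ moves. The key structural lemma to prove is the following: if $\dist(\pi,\sigma)\le d$, $\dist(\tau,\sigma)\le\ell$, and $\dist(\pi,\tau)>d$, then some element moved by $\sigma$ relative to $\tau$ lies in a set $M$ of size $\Ocal(d)$. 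Here $M$ consists of the symbols outside a fixed longest common subsequence of $\pi$ and $\tau$, computable in $n^{1+o(1)}$ time by patience sorting. Moreover, relocating such a symbol to one of $\Ocal(d)$ candidate positions yields $\tau'$ with $\dist(\tau',\sigma)\le\ell-1$. The candidate positions are determined by the neighbours of that symbol in $\pi$, or in the LCS blocks.

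The intuition behind the lemma is as follows. Symbols outside every LCS of $\pi$ and $\sigma$ number at most $d$. If $\sigma$ agreed with $\tau$ on all of $M$, then $\pi$ would be within distance $d$ of $\tau$. Hence $\sigma$ and $\tau$ disagree on an element of $M$, and fixing it makes progress.

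We branch over $|M|\cdot\Ocal(d+k)$ choices. We also guess which cluster the violating $\pi$ belongs to ($k$ options), which explains the $(d+k)$ factor. The recursion depth per center is at most $d$, and the total depth is $\Ocal(dk)$. This gives a search tree of size roughly $(\mathrm{poly}(d,k))^{\Ocal(dk)}$. With a more careful encoding of the positions (branching over subsets of $\Ocal(d)$ symbols and their relative placements), this should fit the claimed bound $2^{\Ocal(d^2k(d+k))}$. Each node checks all $m$ input permutations against at most $k$ centers using $n^{1+o(1)}$-time Ulam distance computations, which gives the $mn^{1+o(1)}$ factor.

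\textbf{Main obstacle.} The hard part is the structural lemma, which must guarantee that a bounded-size, efficiently computable candidate set of corrective moves always contains one that decreases $\dist(\tau,\sigma)$. Ulam distance is non-local: a single move can reorder unboundedly many pairs, and the LCS is not unique. We would first try to argue via a fixed alignment in which $\tau$ and $\sigma$ share a common subsequence of length $n-\ell$. The $\le\ell$ displaced symbols of $\sigma$ must each sit in a gap between LCS symbols. We would then show that one of them is forced by $\pi$ to a position determined by $\pi$'s local order, which gives the $\Ocal(d)$ candidate insertion slots. If the positions cannot be pinned down this way, we would instead guess the entire set of at most $\ell+d$ symbols that move between $\tau$ and $\sigma$ together with their relative order, which costs $2^{\Ocal(d\log d)}$ per step. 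This still remains within the stated exponent.
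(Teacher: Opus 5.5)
\textbf{There is a genuine gap.} Your outer framework (anchors from uncovered inputs, a budget per tentative center, one corrective move per step) is the same as the paper's. However, the structural lemma that carries the argument is false. The complement $M$ of a \emph{fixed} LCS of $\pi$ and $\tau$ can miss every symbol whose relocation brings $\tau$ closer to $\sigma$.

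Take $d=1$, $\pi=x_1x_2y_1y_2$, $\tau=y_1y_2x_1x_2$ and $\sigma=x_1y_1y_2x_2$. Then $\dist(\pi,\sigma)=\dist(\tau,\sigma)=1$ and $\dist(\pi,\tau)=2>d$, and both $x_1x_2$ and $y_1y_2$ are longest common subsequences of $\pi$ and $\tau$. The unique LCS of $\tau$ and $\sigma$ is $y_1y_2x_2$, so moving $x_1$ is the only single move of $\tau$ that makes progress. If your routine returns $x_1x_2$, then $M=\{y_1,y_2\}$ and every branch fails. The center $\sigma'=y_1x_1x_2y_2$ defeats the other choice $M=\{x_1,x_2\}$ symmetrically, since its unique LCS with $\tau$ is $y_1x_1x_2$.

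This breaks the algorithm itself, not only its analysis. With $k=d=1$, $\Pi=\{\tau,\pi,\sigma\}$ has $\sigma$ as its unique center, and $\Pi=\{\tau,\pi,\sigma'\}$ has $\sigma'$ as its unique center. With $\tau$ as first anchor, $\pi$ is the only uncovered input in both, so one of these yes-instances is rejected. Your intuition step is exactly where it breaks: here $\sigma$ moves no symbol of $M$ relative to $\tau$, yet $\dist(\pi,\tau)>d$. The fallback does not repair this either. Guessing the moved set inside $M$ can miss it, as just shown. Guessing it inside $\Sigma$ costs $n^{\Ocal(d)}$. And a relative order does not fix insertion slots among $n$ positions.

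\textbf{What is true, and what the paper uses, is weaker.} For any witnesses $\Sigma_\pi,\Sigma_\tau$ (complements of LCSs with $\sigma$), the union $\Sigma_\pi\cup\Sigma_\tau$ is a vertex cover of the permutation graph $G_{\pi,\tau}$. So it contains \emph{some} inclusion-wise minimal vertex cover $X$ with $d<|X|\le 2d$, and $X$ therefore meets $\Sigma_\tau\setminus\Sigma_\pi$. Since $X$ depends on the unknown $\sigma$, you must do three things:
\begin{enumerate}
\item branch over all at most $4^d$ minimal vertex covers of size at most $2d$;
\item single out some $x\in X\cap(\Sigma_\tau\setminus\Sigma_\pi)$;
\item insert $x$ relative to anchors $b_l,b_r$ flanking $x$ in $\pi$ that are unmoved in \emph{both} $\pi$ and $\tau$. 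The immediate neighbours of $x$ in $\pi$ need not be such anchors.
\end{enumerate}
The paper does the last two steps with consistent $d$-fitting colorings drawn from a $(2n,12d^2+4d)$-universal set (Lemmas~\ref{lem:progress_from_witnesses}--\ref{lem:center_dk_fpt_guide}), obtaining $6d+2$ insertion slots. Plain branching would also suffice. Guess $x\in X$, and guess $b_l,b_r$ among the $2d+1$ nearest symbols on either side of $x$ in $\pi$. Since $|\Sigma_\pi\cup\Sigma_\tau|\le 2d$, some guess is unmoved in both, and for that guess at most $6d+2$ slots lie between $b_l$ and $b_r$ in $\tau$.
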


On first glance, it may seem that the finite number of permutations in our space would allow for an approach based on some simple branching and \emph{local search} around some permutation $\pi \in \Pi$.
This approach is infeasible, however, as for a permutation $\pi$ of length $n \gg d$, there are $n^{\Omega(d)}$ permutations within Ulam distance $d$ from $\pi$, each of which is a potential candidate for the center.
This lower bound follows from a simple counting argument.
One can choose $d$ symbols from the first half of $\pi$ in $\binom{n/2}{d}$ ways and insert them into the second half in at least $(n/2)^d$ distinct ways, yielding a unique permutation at Ulam distance at most $d$ from $\pi$.

A natural alternative is to adapt techniques from string-based clustering, which typically perform bounded local search around a candidate solution and iteratively improve it by resolving ``dissimilarities'' with inputs that exceed the target distance $d$. 
Such a paradigm is well-known and has been highly successful for problems such as \cstringfull ~\cite{GrammNR03} where the ``dissimilarities'' correspond to bits contributing to the Hamming distance.
However, two permutations can admit $\Omega((n/d)^{d})$ distinct longest common subsequences ($\LCS$) of the same maximum length, making it impossible to pinpoint a canonical set of elements to guide local improvements.
Thus, such techniques from string-based clustering cannot directly apply to the Ulam metric, and more involved methods are required.
Our algorithm embarks on a similar approach as the one by Gramm \textit{et al.}~\cite{GrammNR03}. Rather sooner than later, however, the issue of not having a unique withness $\LCS$ requires much more involved techniques and branching procedures tailored to the Ulam metric to establish fixed-parameter tractability.

In our algorithm, we construct a family of \emph{candidate permutations} that are at bounded Ulam distances from hypothetical centers.
These permutations will eventually converge towards the center permutations.
If $k=1$, then such a candidate can be initially chosen as an arbitrary permutation from $\Pi$; note that it should be at the Ulam distance at most $d$ from a center in any \yesinstance of the problem.
For $k\geq 2$, we initiate the set of candidates using the idea behind the standard $2$-approximation for the \textsc{$k$-Center} problem.
On each step, we either conclude that the current set of candidates can serve as centers, that is, each permutation in $\Pi$ is at distance at most $d$ from one of the candidates, or find a \emph{guide permutation} $\pi_g$ at distance at least $d+1$ from all the candidates.
Then we can guess a candidate permutation $\pi_c$ that should be transformed to a center for the cluster containing the guide.
Our crucial result (\Cref{lem:center_dk_fpt_guide}) shows that, given $\pi_c$ and $\pi_g$, we can efficiently (in FPT time) enumerate a set of permutations with distance one to $\pi_c$ such that for any hypothetical center, there is a permutation in the set that is closer to the center than $\pi_c$ was, see Figure~\ref{fig:kcenter-sigma} for a visualization of this idea.
The proof of \Cref{lem:center_dk_fpt_guide} is rather sophisticated and combines the \emph{random separation}~\cite{CaiCC06} and branching techniques.
The random separation is used to highlight elements of $\pi_c$ and $\pi_g$ that should be moved to make these permutations closer to the center and the neighbors of these elements that are not moved.
Then we identify an element of $\pi_c$ that should be moved and, after that, find an appropriate place for this element using branching.
Repeatedly invoking \Cref{lem:center_dk_fpt_guide}, we ensure that the candidate permutations converge to the centers.

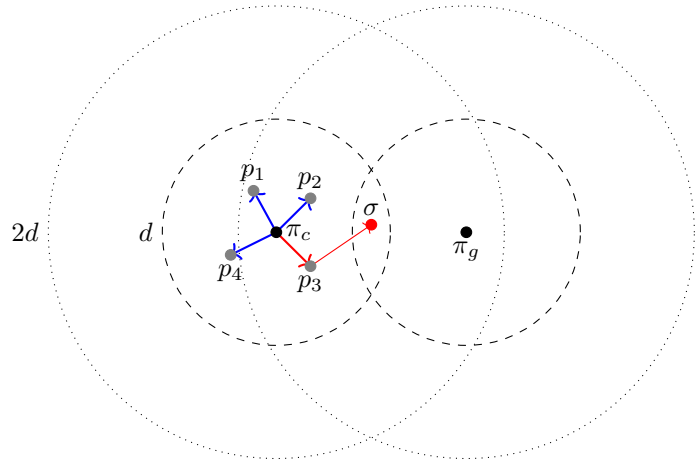
\begin{figure}[t]
    \centering
    \begin{tikzpicture}[scale=1]

        \coordinate (pc) at (0,0);
        \coordinate (pg) at (2.5,0); 
        \coordinate (sigma) at (1.25,0.1); 

        \coordinate (p1) at (-0.3,0.55);
        \coordinate (p2) at (0.45,0.45);
        \coordinate (p4) at (-0.6,-0.3);
        \coordinate (pstar) at (0.45,-0.45); 

        \node[right] at (pc) {$\pi_c$};
        \node[below] at (pg) {$\pi_g$};
        \node[above] at (sigma) {$\sigma$};

        \node[above]  at (p1) {$p_1$};
        \node[above] at (p2) {$p_2$};
        \node[below] at (p4) {$p_4$};
        \node[below] at (pstar) {$p_3$};

        \draw[dashed] (pc) circle (1.5);     
        \draw[dotted] (pc) circle (3);       

        \draw[dashed] (pg) circle (1.5);     
        \draw[dotted] (pg) circle (3);       

        \node[left] at (-1.5,0) {$d$};
        \node[left] at (-3.0,0) {$2d$};

        \draw[->, thick, blue] (pc) -- (p1);
        \draw[->, thick, blue] (pc) -- (p2);
        \draw[->, thick, blue] (pc) -- (p4);
        
        \draw[->, thick, red] (pc) -- (pstar);
        \draw[->, red] (pstar) -- (sigma);

        \filldraw[black] (pc) circle (2pt);
        \filldraw[black] (pg) circle (2pt);
        \filldraw[red]   (sigma) circle (2pt);

        \filldraw[gray] (p1) circle (2pt);
        \filldraw[gray] (p2) circle (2pt);
        \filldraw[gray] (p4) circle (2pt);
        \filldraw[gray] (pstar) circle (2pt);

    \end{tikzpicture}
    \caption{
        Illustration of the key idea behind our FPT algorithm for the $k$-center problem under Ulam distance.
        $\pi_c$ denotes a candidate center, $\pi_g$ a guide permutation with distance between $d$ and $2d$ from $\pi_c$, and $\sigma$ represents the true (optimal) center that covers both $\pi_c$ and $\pi_g$.
        We enumerate in FPT time a set of permutations (shown as $\{p_1, p_2, p_3, p_4\}$) at distance $1$ from $\pi_c$.
        Among these candidates, at least one candidate (shown as $p_3$) reduces the Ulam distance to $\sigma$ and thus corresponds to a correct branch.
        }
    \label{fig:kcenter-sigma}
\end{figure}

\medskip
The fixed-parameter tractability of \ucenterfull with parameter  $k+d$ established in \Cref{thm:center_dk_fpt} immediately leads to the question about the existence of a polynomial kernel for this problem. The following theorem provides a negative answer to this question. 

\begin{restatable}{theorem}{centernokernel} 
    \label{thm:center_dk_no_poly_kernel}
    \ucenterfull does not admit a polynomial-sized kernel for the parameter $k+d$ unless $\mathrm{NP} \subseteq \mathrm{coNP}/ \mathrm{poly}$.
\end{restatable}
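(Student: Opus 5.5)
The plan is to rule out a polynomial kernel with an OR-cross-composition in the sense of Bodlaender, Jansen and Kratsch. The source problem is \ucenterfull\ with $k=1$, which is $\NP$-hard by~\cite{BachmeierBGH_2015}. (Alternatively, we could compose the $d$-fixed instances from \Cref{thm:center_d_paraNPhard}.) We use the polynomial equivalence relation that groups instances by alphabet size $n$, number of permutations $m$, and radius $d$; badly formed strings form one extra class. Given $t$ equivalent instances $(\Sigma_i,\Pi_i,1,d)$, we build one instance with $k=t'$ small and radius $d'$ polynomial in $d+\log t$. It should be a \yesinstance\ if and only if some input instance is.

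The natural first attempt uses $k=t$ and disjoint alphabets with padding, so that each instance lives in its own cluster. This makes $k$ linear in $t$, which is not allowed. So the composition must keep $k$ small and let the \emph{center} select the index $i$.

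The plan is a single center, $k=1$. Take pairwise disjoint copies $\Sigma_1,\dots,\Sigma_t$ of the alphabet and concatenate them as blocks in a fixed base order. For each input permutation $\pi\in\Pi_i$, build a permutation that has $\pi$ on block $i$ and the identity $\iota_j$ on every other block $j$. Because the blocks are separated, Ulam distance is additive over blocks (up to cross-block moves, which we make unprofitable). So a center $\sigma$ with block restrictions $\sigma_j$ is at distance $\dist(\pi,\sigma_i)+\sum_{j\ne i}\dist(\iota_j,\sigma_j)$ from the permutation coming from $\pi\in\Pi_i$.

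Taken alone, this construction forces every block near the identity, which is wrong. We need a choice gadget, which is the main obstacle. Following the standard OR-composition for \cstringfull-type problems, we encode indices in binary using $\log t$ extra selector symbol pairs $(a_\ell,\bar a_\ell)$, appended to each permutation according to the bits of $i$. We also add, for every instance $i$ and every block $j\ne i$, "blocked" permutations forcing the structure so that a center near instance $i$'s selector pays nothing on other blocks. The target radius becomes $d'=d+\log t$. A center encoding index $i$ in its selector bits is then at distance $\le d$ on block $i$ plus exactly the Hamming-like selector discrepancy on the others. The budget must be balanced so that each input of instance $j\neq i$ is also within $d'$, even though block $j$ of the center is the identity. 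For this we would pad with $d$ additional identity-swapping symbols so that each $\pi\in\Pi_j$ differs from the identity by a controlled amount.

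The correctness proof has two directions. The forward direction is a direct distance computation. For the backward direction we argue as follows. Separating symbols repeated $d'+1$ times forbid cross-block moves, so distances decompose. The selector part of any center within $d'$ of all inputs must then match some index $i$ exactly, by a counting argument on the complementary selector patterns. Then the block-$i$ restriction of the center is a valid $1$-center of $\Pi_i$ with radius $d$. The parameter is $k+d'=O(d+\log t)$, so the standard framework implies that no polynomial kernel exists unless $\NP\subseteq\coNP/\poly$. I expect balancing the selector gadget against additive block costs to be the delicate step. If it fails, the fallback is to use $k=\log t$-many centers, one per selector bit, in a Vertex-Cover-style construction.
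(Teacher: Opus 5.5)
Your proposal has a genuine gap. For a min--max objective, the whole difficulty of an OR-cross-composition is the selection/balancing gadget, and you never construct one. The concrete pieces you suggest also fail.

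First, with identity blocks as defaults, an input $\pi\in\Pi_j$ of a non-selected instance $j$ is at distance at least $\dist(\pi,\iota_j)+\dist(\iota_i,\sigma_i)$ from a center that selects $i$, where $\sigma_i$ is a $1$-center of $\Pi_i$. Both terms depend only on the arbitrary input permutations and can be $\Theta(n)$. Appending padding symbols cannot decrease $\dist(\pi,\iota_j)$, so even the forward direction fails for $d'=d+\log t$.

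Second, suppose you instead use a fixed input permutation of each $\Pi_j$ as its default block, after discarding instances with two inputs at distance more than $2d$. These terms then drop to at most $2d$ and $d$. But now the selected instance pays at most $d$ on the blocks while every other instance pays up to $3d$. The selector would have to \emph{reward} a mismatch by about $2d$. A selector measuring Hamming distance to the binary code of $i$ charges the selected instance $0$ and all others at least $1$, which pushes in the wrong direction.

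Most importantly, the backward direction is only asserted, via ``a counting argument on complementary selector patterns''. You would need to show that any center within $d'$ of all inputs concentrates its deviation on a single block and reaches radius $d$ there. As it stands, nothing rules out a center that spreads its deviation over several blocks or picks selector bits matching no index. The $k=\log t$ fallback is not specified at all.

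The missing idea is that no composition is needed. Since the lower bound may already be shown for $k=1$, a polynomial parameter transformation from a problem known to have no polynomial kernel suffices. The paper uses binary \textsc{Closest String} parameterized by $d$~\cite{BASAVARAJU201821}.

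Each string $s$ is mapped to $f(s)$, which writes bit $i$ as the pair $2i{-}1,2i$ in increasing or decreasing order. Lemma~\ref{lem:map_from_bin_strings} gives $\dist(f(s),f(t))=\distH(s,t)$.

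For the backward direction, an arbitrary center $\pi_c$ is rounded to the pair-scheme permutation $\pi_c'$ that copies, for each $i$, the relative order of $2i-1$ and $2i$ in $\pi_c$. A common subsequence of $\pi_c$ and an input $\pi$ can keep only one symbol of each pair whose order they disagree on. Hence $\dist(\pi_c',\pi)$, which equals the number of such disagreeing pairs, is at most $\dist(\pi_c,\pi)\le d$. So $f^{-1}(\pi_c')$ is a valid closest string. This keeps $k'=1$ and $d'=d$ and transfers the kernel lower bound directly.
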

The lower bound of \Cref{thm:center_dk_no_poly_kernel} holds already for $k=1$.
The proof is based on a Polynomial Parameterized Transformation (PPT) reduction from the classical $\cstringfull$ problem to \ucenterfull (\Cref{lem:red_string_to_center}).
To build such a transformation, we employ a \emph{distance-preserving} transformation between binary strings and permutations, where for every pair of binary strings their Hamming distance is the same as the Ulam distance between the permutations to which they translate.
The reduction establishes that from the perspective of parameterized complexity with respect to the radius $d$, $\ucenterfull$ is at least as hard as $\cstringfull$.
\Cref{thm:center_dk_no_poly_kernel} then follows by pipelining the PPT with established lower bounds on the kernelization of $\cstringfull$ by Basavaraju \textit{et al.}~\cite{BASAVARAJU201821}.

\subsection{Contributions and Techniques for \texorpdfstring{\umedianfull}{Ulam Metric k-Median}}

 The second group of results is concerned with \umedianfull. 
 Recall that the problem is $\NP$-hard for $k=1$ \cite{BachmeierBGH_2015}.
 While \ucenterfull is also $\NP$-hard for $d=1$, this is not the case for \umedianfull.
 It is easy to obtain a polynomial-time algorithm for this problem for each fixed value of $d$. 
 
  \begin{restatable}{theorem}{medianXP}
 \label{thm:median_d_xp}
     \umedianfull can be solved in $\Ocal((mn)^{2d})$ time, where $m\coloneqq |\Pi| $ and $n\coloneqq |\Sigma|$.
\end{restatable}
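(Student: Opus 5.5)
The plan is an exhaustive "guess-and-check" algorithm. It is justified by a structural observation about optimal solutions whose total cost is at most $d$. Let $\Pi$ have $D$ distinct permutations. If $D\le k$, we answer \yes{} by taking the distinct input permutations (padded arbitrarily) as medians. Otherwise, fix a hypothetical solution $S$ with cost at most $d$; since $\dist$ is a metric, every input copy at a permutation not in $S$ pays at least $1$. Let $N$ be the set of distinct input permutations not in $S$. Then $|N|\le d$, because each of them contributes at least one unit of cost. The distinct input permutations outside $N$ are medians themselves and cost $0$, so we may assume every copy of them is assigned to itself.

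Next we describe the medians in $T\coloneqq S\setminus\Pi$, the non-input medians. We may discard any median to which no input is assigned. Every input permutation that is not in $S$ belongs to $N$, so each $\sigma\in T$ has a nonempty cluster consisting only of copies of permutations in $N$. For each $\sigma\in T$ pick one such $\pi_\sigma\in N$ in its cluster and let $t_\sigma=\dist(\pi_\sigma,\sigma)$. Each copy in the cluster of $\sigma$ pays at least $1$, and $\pi_\sigma$ pays exactly $t_\sigma$, so $\sum_{\sigma\in T} t_\sigma\le d$ and also $|T|\le d$. Moreover, $\sigma$ arises from $\pi_\sigma$ by a sequence of $t_\sigma$ single moves, each specified by an element and a target position, so by at most $n^2$ choices per move.

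The algorithm enumerates all candidates directly. First, it guesses $N$ as a set of at most $d$ distinct input permutations, giving at most $m^{d}$ options. Second, it guesses a multiset of at most $d$ bases from $N$, together with a total of at most $d$ moves distributed among them. Equivalently, this is a sequence of at most $d$ (base, element, position) choices, where each choice either starts a new median from a base in $N$ or applies a move to the current one. This yields at most $(|N|\, n^2)^{d}\le d^{d} n^{2d}$ options; the $d^d$ factor can be absorbed or avoided by encoding bases via their index in $\Pi$, keeping everything within $m^{d}n^{2d}$. For each guess we form
\[
S=(\Pi_{\mathrm{distinct}}\setminus N)\cup T .
\]
We accept if $|S|\le k$ and $\sum_{\pi\in\Pi}\min_{\sigma\in S}\dist(\pi,\sigma)\le d$, padding $S$ to exactly $k$ permutations if needed. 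Correctness in the forward direction is the structural argument above, since the optimal solution is among the guesses. The backward direction is immediate, because every accepted guess is an explicit feasible solution.

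For the running time, the number of guesses is $m^{d}\cdot m^{d}n^{2d}\le (mn)^{2d}$ up to the encoding issue above. Each check computes, for each input, its distance to the $\Ocal(d)$ medians of $N$-relevant type, plus a membership test for the rest. Ulam distance is computable in $\Ocal(n\log n)$ via LCS, so each check takes polynomial time in $m$, $n$, $d$. The main technical care is in the bookkeeping that brings the product of guesses and the per-guess cost within the claimed $\Ocal((mn)^{2d})$ bound, e.g.\ by noting that only copies of permutations in $N$ need nontrivial distance computations. The combinatorial heart is simply the bound $|N|+\sum_{\sigma\in T} t_\sigma\le 2d$, obtained by charging each unit of cost.
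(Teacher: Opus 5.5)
Your algorithm is correct, so you do obtain an XP algorithm. The decomposition of an optimal solution into kept input medians, the dropped set $N$, and non-input medians $T$ (each generated from one representative in $N$ by $t_\sigma\ge 1$ moves) is sound, and every accepted guess is an explicit feasible solution. What the argument does not give is the stated bound $\Ocal((mn)^{2d})$. The step you defer as ``bookkeeping'' is exactly where it fails. By your own count, guessing $N$ costs about $m^d$ and guessing $T$ costs about $m^d n^{2d}$. More precisely it is $(m+1)^d n^{2d}$ or $(2d)^d n^{2d}$, because each step must also encode whether it starts a new median. So the number of guesses alone already reaches $(mn)^{2d}$ up to a $2^{\Ocal(d)}$ factor. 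Essentially every guess then needs $\Omega(n)$ time just to write down the medians in $T$, plus Ulam-distance computations. Making the per-guess check cheaper, as you suggest, cannot compensate for this. A smaller inaccuracy: a copy of some $\pi\in N$ may be closest to a kept input median, so its cost needs distances to all of $S$, not only to the $\Ocal(d)$ medians in $T$. Precomputing all pairwise input distances once handles this.

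The extra factor $m^d$ comes from guessing $N$ separately from the moves. The remedy, which is the paper's proof, is to merge the two guesses. A solution of cost at most $d$ is the same thing as a sequence of at most $d$ moves applied to individual input copies, after which at most $k$ distinct permutations remain. Here each copy $\pi$ receives $\dist(\pi,\sigma(\pi))$ moves turning it into the median $\sigma(\pi)$ serving it. Conversely, any such sequence yields medians of total cost at most $d$. Each move is a choice of copy, element and position, so there are $\Ocal((mn^2)^d)=\Ocal(m^d n^{2d})$ sequences. The test is just counting distinct permutations, with no distance computations at all. In this view your $N$ is the set of permutations whose copies get moved, and your $T$ is the set of their final states lying outside $\Pi$, so your structural argument is subsumed. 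The remaining factor $m^d$ of slack below $(mn)^{2d}$ is left over to pay for the check.
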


The algorithm builds on a simple branching technique, where we observe that the medians can be obtained by applying a total of $d$ move-operations on the input set, and we branch on these $d$ moves by brute-force.
The result in \Cref{thm:median_d_xp} is tight in the sense that the existence of an \FPT-algorithm for the parameterization by $d$ is unlikely due to the following lower bound.
 
\begin{restatable}{theorem}{medianWhard}\label{thm:median_d_whard}
    \umedianfull is $\W[1]$-hard parameterized by $d$.
\end{restatable}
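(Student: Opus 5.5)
We prove W[1]-hardness by a parameterized reduction from \textsc{Multicolored Clique} (equivalently \textsc{Clique}) parameterized by the solution size $\ell$. The new budget will be $d = f(\ell)$, while $k$ may grow polynomially with the input; this is allowed, since the statement concerns the parameter $d$ alone. The underlying principle is as follows. A permutation of large multiplicity forces a median to coincide with it, because any deviation costs more than $d$. A permutation that is not covered by such a forced median must instead be served by a median at Ulam distance $1$ or $2$, and the total of these small distances encodes how many edges the chosen vertices ``share''.

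\textbf{Construction.} We fix a base ordering $\beta$ of an alphabet that contains, for every vertex $v$ of $G$, a pair of symbols $v,\bar v$ placed consecutively as $v\,\bar v$, exactly as in the reduction behind \Cref{thm:center_d_paraNPhard}. For a vertex $v$, let $\beta_v$ be $\beta$ with the pair of $v$ swapped. For an edge $uv$, let $\beta_{uv}$ be $\beta$ with both pairs swapped. Then $\dist(\beta_{uv},\beta_u)=\dist(\beta_{uv},\beta_v)=1$ and $\dist(\beta_{uv},\beta)=2$, and every permutation with two swapped pairs is at distance $\ge 2$ from any permutation with a single swapped pair other than its own two. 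The multiset $\Pi$ consists of one copy of $\beta_{uv}$ for every edge $uv\in E(G)$, together with $M=d+1$ copies of $\beta$. We set $k=|E(G)|-\binom{\ell}{2}+\ell+1$ and $d=\ell+\binom{\ell}{2}+\ldots$, where the budget is calibrated as follows.

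The intended solution uses $\beta$, the vertices $\beta_v$ for the $\ell$ clique vertices, and one median placed exactly on each remaining edge permutation. The $\binom{\ell}{2}$ clique edges are then each served at distance $1$ and everything else at distance $0$, so the total cost is $\binom{\ell}{2}$. To make medians costly, we let $k=|E(G)|-\binom{\ell}{2}+\ell+1$. Any $k$ medians then leave at least $\binom{\ell}{2}-\ell$ edge permutations uncovered by a zero-cost median. A set of $t$ extra medians can serve at distance $1$ at most as many edge permutations as there are edges induced by at most $t$ vertices. Refining, a ``one-swap'' median cannot serve two edges unless they share a vertex, and the triangle-freeness trick of \Cref{thm:center_d_paraNPhard} is avoided here because we count rather than cover. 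We set $d=\binom{\ell}{2}$.

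\textbf{Correctness.} Given an $\ell$-clique, the solution described above has cost $\binom{\ell}{2}$. For the converse, the heavy copies of $\beta$ force a median at distance $0$ from $\beta$. A short exchange argument shows that we may assume every other median is either an exact edge permutation or serves its cluster with each member at distance $\ge 1$. The key lemma is that a permutation at distance $1$ from two distinct edge permutations $\beta_{uv}$ and $\beta_{xy}$ must equal $\beta_w$ for a shared endpoint $w$; this is the counting version of the triangle argument. It follows that the uncovered edges are served at distance $1$ only by vertex-type medians, at distance $2$ by $\beta$, or at distance $\ge 2$ otherwise. A budget of $\binom{\ell}{2}$ with $\ell$ vertex medians requires that exactly $\binom{\ell}{2}$ edges, all of them served at distance $1$, lie among $\ell$ vertices, which means those vertices form a clique. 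Edges with endpoints outside the chosen vertices can be handled by edge medians, and a counting argument ensures that the numbers match.

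\textbf{Main obstacle.} The hardest step is the converse in its general form: medians are arbitrary permutations, so we must rule out ``mixed'' medians that serve several edge permutations at distances that sum cheaply. We first make the structural lemma precise: any median at distance $1$ from two edge encodings is a vertex encoding, and any median serves a set of edge encodings at total cost at least (number of edges) minus (the maximum number sharing a common vertex). We then balance $k$ against $d$ so that tradeoffs become impossible, using weights via multiplicities if needed. Padding each pair with extra symbols, as in the general-$d$ version of \Cref{thm:center_d_paraNPhard}, is a fallback if distance-$1$ coincidences cause trouble.
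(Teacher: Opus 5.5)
\textbf{The reduction does not work: the converse direction fails, and recalibrating $k$ and $d$ cannot fix it.} A vertex median $\beta_w$ is at distance $1$ from $\beta_{wx}$ for \emph{every} neighbour $x$ of $w$. So the edges your extra medians serve cheaply are those \emph{incident} to the chosen vertices, not those \emph{induced} by them. Your sentence about ``edges induced by at most $t$ vertices'' is where this goes wrong, and it carries over into the claim that a budget of $\binom{\ell}{2}$ forces the served edges to lie among $\ell$ vertices.

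Here is a counterexample. Let $w$ have degree at least $\binom{\ell}{2}-\ell+1$. Take as medians $\beta$, $\beta_w$, and the edge permutations of all edges except $\binom{\ell}{2}-\ell+1$ chosen edges at $w$. These are exactly $k$ medians, with cost $\binom{\ell}{2}-\ell+1\le d$, yet a star has no $\ell$-clique. For instance, $\ell=4$ and $G=K_{1,3}$ give $k=2$ and $d=6$, and the medians $\beta,\beta_w$ cost $3$.

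More generally, $t$ vertex medians serving $u$ edges at distance $1$ save $u-t$ medians at cost $u$. So the cost equals the savings plus $t$, and one high-degree vertex always beats a clique. Let $s=|E(G)|+1-k$ be the number of medians that must be saved. For any $k$ and $d$ that make your intended clique solution feasible, we have $s\le\binom{\ell}{2}-\ell$ and $d\ge s+\ell$. Then every vertex of degree greater than $s$ already gives a solution of cost $s+1$. The only graphs that might be mapped to no-instances therefore have maximum degree at most $s\le\binom{\ell}{2}$, and on those \textsc{Clique} is trivially \FPT{} in $\ell$.

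Your ``key lemma'' is also false as stated. The permutation $\beta_{uwx}$, with three pairs swapped, is at distance $1$ from both $\beta_{uw}$ and $\beta_{wx}$.

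\textbf{What is missing.} You need two ingredients: a cost that depends on \emph{both} endpoints of an edge being selected, and a mechanism that rules out trading the number of medians against coverage.

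The paper gets the second by giving each edge permutation $q$ swapped $x/y$ pairs. Distinct edge permutations are then at distance at least $2q$, so any cluster $C$ with $|C|\ge 2$ costs at least $|C|q$. With $q=\binom{k'}{2}(k'-2)+1$, $k=m'-\binom{k'}{2}+1$ and $d=\binom{k'}{2}(q+k'-2)$, this forces a single non-singleton cluster of exactly $\binom{k'}{2}$ edges sharing one median.

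It gets the first with one symbol per colour class. Each edge permutation places the colour symbols of its two endpoints directly behind those endpoints' vertex blocks, and the other $k'-2$ colour symbols far away. The shared median can put each colour symbol behind only one vertex block. The tight budget then forces the $\binom{k'}{2}$ clustered edges to span exactly $k'$ vertices, one per colour, i.e.\ a multicoloured clique.
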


Our proof of \cref{thm:median_d_whard} relies on a careful reduction from \textsc{Multicolored Clique}.
The instance is designed in a way that all but one of the medians in any hypothetical solution are already present in the input set $\Pi$.
The challenge is to find that additional median in a way that it is sufficiently close to a number of input permutations. 
Crucially, the permutations are designed in such a way that this is only possible if all edges covered by the median form a clique, see Figure~\ref{fig:multicolored-clique} for a schematic example.
On a high level, we construct the instance such that there is one special symbol per color, and the position of that symbol in the additional median encodes which vertex of that color is selected for the multi-colored clique. 

\newcommand{\fourpoints}[1]{%
    \node[draw=none] at (-2em,0) {#1};
  \foreach \i in {0,...,3} {%
    \fill (\i*1em,0) circle (3pt);
  }%
}

\newcommand{\fourpointsuline}[1]{%
    \fourpoints{#1}
  \draw[line width=1pt, color=gray] 
    (-.5em,-0.5em) -- (3.5em,-0.5em);
}

\newcommand{\vw}{{\color{cb_blue}w}}
\newcommand{\vx}{{\color{cb_red}x}}
\newcommand{\vy}{{\color{cb_orange}y}}
\newcommand{\vz}{{\color{cb_violet}z}}
\newcommand{\vyy}{{\color{cb_orange}y'}}
\newcommand{\vzz}{{\color{cb_violet}z'}}

\begin{figure}[ht]
    
    \begin{minipage}{.34\textwidth}
    \begin{tikzpicture}[
        every node/.style={circle, fill, inner sep = 2pt, minimum size = 7pt},
        scale=.8]

            \def\dx{2} 
            
            \node (w)[color=cb_blue, label={above:$\vw$}] at (0,\dx) {};
            \node (x)[color=cb_red, label={above:$\vx$}] at (\dx,\dx) {};
            \node (y)[color=cb_orange, label={below:$\vy$}] at (0,0) {};
            \node (z)[color=cb_violet, label={below:$\vz$}] at (\dx,0) {};
            \node (y1)[color=cb_orange, label={above:$\vyy$}] at (-.6*\dx,\dx) {};
            \node (z1)[color=cb_violet, label={above:$\vzz$}] at (1.6*\dx,\dx) {};
            
            \draw (w) -- (x);
            \draw (w) -- (y1);
            \draw (w) -- (z);
            \draw (w) -- (y);
            
            \draw (x) -- (z1);
            \draw (x) -- (z);
            \draw (x) -- (y);
            
            \draw (y) -- (z);

            \draw[rounded corners = 5ex, dotted] (-.6, 3) rectangle (2.6, -1);
            \node[fill = none] at (1, 2.6) {$Q$};

        \end{tikzpicture}
    \end{minipage}
    \begin{minipage}{.65\textwidth}
        \begin{tikzpicture}[
            every node/.style={circle, draw, minimum size=18pt},
            scale=.8]

            \begin{scope}[shift={(15,2)}]
                \draw[rounded corners] (-7.5em, 1em) rectangle (4.5em, -1em);
                \fourpointsuline{$\pi_{\vw\vyy} = \sigma_{\vw\vyy}\quad\quad\quad$}
            \end{scope}
            
            \begin{scope}[shift={(15,0)}]
                \draw[rounded corners] (-7.5em, 1em) rectangle (4.5em, -1em);
                \fourpointsuline{$\pi_{\vy\vzz} = \sigma_{\vy\vzz}\quad\quad\quad$}
            \end{scope}

            \draw[rounded corners] (5.6,3) rectangle (11.6,-1);

            \begin{scope}[shift={(7,1.5)}]
                \fourpoints{$\pi_{\vw\vy}$}
            \end{scope}
            \begin{scope}[shift={(10,1.5)}]
                \fourpoints{$\pi_{\vw\vy}$}
            \end{scope}
            \begin{scope}[shift={(7,0.5)}]
                \fourpoints{$\pi_{\vw\vz}$}
            \end{scope}
            \begin{scope}[shift={(10,0.5)}]
                \fourpoints{$\pi_{\vy\vy}$}
            \end{scope}
            \begin{scope}[shift={(7,-0.5)}]
                \fourpoints{$\pi_{\vy\vz}$}
            \end{scope}
            \begin{scope}[shift={(10,-0.5)}]
                \fourpoints{$\pi_{\vy\vz}$}
            \end{scope}

            \begin{scope}[shift={(8.5,2.5)}, color=gray]
                \fourpointsuline{\large$\sigma_{Q}$}
            \end{scope}
        
        \end{tikzpicture}
    \end{minipage}
    \caption{Illustration of the reduction from \textsc{Multicolored Clique} to \umedianfull.
        Left shows an instance of \textsc{Multicolored Clique} with four color classes ($\{w\}, \{x\}, \{y,y'\},$ and $\{z,z'\}$) and a colorful clique $Q = \{w, x, y, z\}$.
        Right shows the schematic solution to the constructed \umedianfull instance.
        Each edge is encoded as an input permutation; edges whose endpoints form a multicolored clique are covered by a single external median $\sigma_{Q}$ (underlined), while all other edges each serve as their own median in singleton clusters (also underlined).}
        \label{fig:multicolored-clique}
\end{figure}
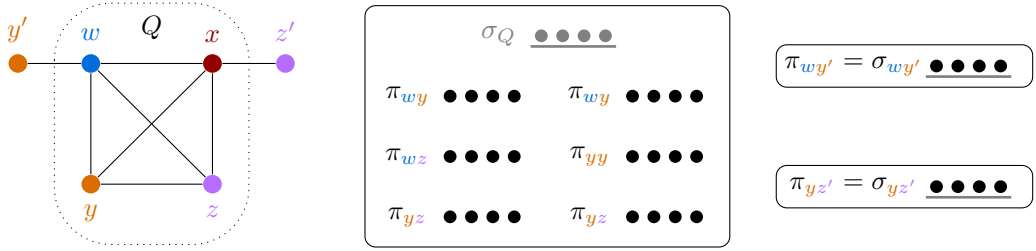

Interestingly, our proof indicates that the hardness of the problem does not so much depend on identifying which symbols need to be moved but also importantly on deciding where to move these symbols.  
 
 In contrast to the lower bound on the polynomial kernel for \ucenterfull obtained in \Cref{thm:center_dk_no_poly_kernel}, \umedianfull admits a polynomial kernel. 
 
\begin{restatable}{theorem}{mediankernel}\label{thm:median_dk_fpt}
    \umedianfull admits  a kernel with $\Ocal(d^2+dk)$ permutations over an alphabet of size $\Ocal(d^4 + d^2k^2)$, computable in $\Ocal(m^2 n \log n)$ time, where $m\coloneqq |\Pi| $ and $n\coloneqq |\Sigma|$. 
\end{restatable}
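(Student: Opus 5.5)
We first reduce the number of permutations and then the alphabet, treating $\Pi$ as a multiset. A first easy reduction handles duplicates. If some permutation $\pi$ occurs more than $d$ times in $\Pi$, then any solution of cost at most $d$ must contain $\pi$ itself as a median. Otherwise each of these copies would contribute at least $1$, giving total cost more than $d$. So we may keep exactly $d+1$ copies of each such permutation, which preserves the answer.

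Next we compute all pairwise Ulam distances in $\Ocal(m^2 n\log n)$ time via LCS in permutations, and partition the distinct permutations. Fix a solution. Every input permutation that is not itself a median lies at distance at least $1$ from its median, so at most $d$ input permutations (with multiplicity) are not themselves medians. Hence at most $k+d$ distinct permutations are relevant to the structure of the solution. Permutations with multiplicity at least $d+1$ are forced to be medians; call them heavy, and suppose there are $h\le k$ of them. Every other permutation either is a median or costs at least $1$.

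To get $\Ocal(d^2+dk)$ permutations, we classify each distinct permutation by its distance to the nearest heavy one. Permutations far from all heavy ones (distance greater than $d$) must lie in clusters with non-heavy medians. Each such cluster of size $s$ either has its median equal to a member, in which case the other $s-1$ members cost at least $1$ each, or it costs at least $s$. So the number of non-heavy permutations is at most $(k-h)+d$ plus the number of medians, and if there are more than $k+d$ distinct non-heavy permutations we answer \no. The permutations near heavy ones number at most $d$ per heavy permutation among those actually used, but we may not know which ones are used. So we bound the total count: all but at most $k$ non-heavy copies must pay, and hence if more than $k+d$ non-heavy copies exist the answer is \no. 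Together with the heavy permutations, this yields $\Ocal(k(d+1)+k+d)$ permutations. To reach the claimed $\Ocal(d^2+dk)$ we need a slightly finer argument: medians that are not input permutations can be shown to lie within distance $d$ of some input, and this bounds the relevant heavy permutations to $\Ocal(d+k)$, each with $d+1$ copies.

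Then comes the alphabet reduction, which I expect to be the main obstacle. In a solution, each median lies within total distance $d$ of its cluster, and all $\Ocal(d^2+dk)$ remaining permutations are within bounded distance of each other inside clusters. We mark as ``active'' every symbol that is moved in some optimal alignment. A symbol that appears in a common subsequence of all permutations of all clusters is never moved. The plan is to find a long common subsequence $C$ of all kernel permutations, computed by iterated LCS. Its complement has size $\Ocal(\text{number of perms}\cdot \text{pairwise distance})$; because clusters may be mutually far apart, we instead treat each of the at most $k$ non-trivial clusters separately, giving a total of $\Ocal(d^2+dk)$ non-common symbols. We then compress maximal runs of symbols of $C$ that are consecutive in all permutations, keeping only $\Ocal(d)$ representatives between consecutive active symbols. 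The justification is that a median needs only $\Ocal(d)$ distinct insertion slots around any anchor, and replacing a long common block by $d+1$ symbols preserves all Ulam distances up to $d$: no solution of cost at most $d$ would break a block of length greater than $d$, since breaking it costs too much. The delicate point is proving that an optimal median can be assumed to preserve these common blocks, via an exchange argument that reinserts moved block symbols. This yields alphabet size $\Ocal((d^2+dk)\cdot d)$, which we then square-bound to the stated $\Ocal(d^4+d^2k^2)$.
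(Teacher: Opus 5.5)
\textbf{Gap: the alphabet reduction depends on an unknown solution.} You rightly identify this step as the crux, but it is driven by the clusters of a hypothetical solution: ``active'' symbols are those moved in an optimal alignment, and you treat each of the at most $k$ non-trivial clusters separately. A kernelization algorithm has no solution in hand, so this is not a procedure you can run. The solution-independent variant, compressing blocks common to all of $\Pi$, is safe but bounds nothing. Far-apart groups need not share any long common substring; a permutation and its reverse share none.

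\textbf{The missing idea is a computable grouping that contains every cluster.} If two input permutations share a median $\sigma$, then $\dist(\pi_1,\pi_2)\le \dist(\pi_1,\sigma)+\dist(\pi_2,\sigma)\le d$. So every cluster lies inside one maximal connectivity component of the graph on $\Pi$ that links permutations at Ulam distance at most $d$. The paper compresses each maximal substring common to a component to $d+1$ fresh symbols.

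It then glues the differently compressed components back into one instance over one alphabet. It relabels each component to consecutive integers and pads with the missing symbols as a suffix. It also prepends $f\bigl(0^{(d+1)(i-1)}\,1^{d+1}\,0^{(d+1)(r-i)}\bigr)$, which puts permutations of different components at distance $2(d+1)>d$, so no median can serve two of them. Your proposal has no analogue of this gluing step. Without it, independently compressed and relabeled groups can become close or even identical, and equivalence fails.

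\textbf{The size bound must come from components, not clusters.} A component has at most $k+d$ distinct permutations, each within $d(d+k-1)$ of a fixed one. This gives $\Ocal(d(d+k)^2)$ moved symbols and insertion points, hence that many maximal shared substrings, each of length at most $d+1$. The resulting alphabet has size $\Ocal(d^2(d+k)^2)$. Your $\Ocal((d^2+dk)d)$ count is not backed by anything computable.

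\textbf{Block preservation.} Breaking a block costs only one move, so ``breaking it costs too much'' is not the reason. The reason is the exchange you mention. Across a cluster, at most $d<|s|$ symbols are omitted from the chosen LCSs, so some symbol of $s$ lies in all of them. Regrouping $s$ contiguously around that symbol does not increase any distance.

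\textbf{The permutation count contains a false step.} The claim ``if more than $k+d$ non-heavy copies exist the answer is no'' fails for $k=2$, $d=3$ with three copies each of two distinct permutations. That instance has cost $0$, because all copies of a non-heavy permutation that is itself a median pay nothing. The ``finer argument'' about heavy permutations is also unnecessary. Your own observation already suffices: a yes-instance has at most $k+d$ distinct permutations, so you reject otherwise. Capping multiplicities at $d+1$ then gives $(k+d)(d+1)=\Ocal(d^2+dk)$ permutations, exactly as in the paper's first two reduction rules.
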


Our kernelization result is based on the observation that for any \yesinstance 
of \umedianfull, $\Pi$ contains at most $k+d$ distinct permutations. 
 This allows us to reduce the size of $\Pi$. Then we reduce the size of the alphabet $\Sigma$ by making replacements of identical subpermutations by shorter ones. Here, we use the fact that two permutations at Ulam distance at least $d+1$ cannot belong to the same median in any solution. Thus, we can identify some sets of permutations such that every set of permutations that shares a median in a hypothetical solution is a subset to one of the sets. We show that the pairwise Ulam distance within each set are small. Combined with the fact that there are only few distinct permutations, this implies that in each set there are many (or long) identical subpermutations, which we can then shorten to reduce the size of the alphabet $\Sigma$.

Combining \Cref{thm:median_dk_fpt,thm:median_d_xp}, we immediately obtain an \FPT-algorithm with the following running time.

\begin{corollary}\label{cor:fpt-median}
\umedianfull can be solved in $(kd)^{\Ocal(d)} + \Ocal(m^2 n \log n)$ time, where $m\coloneqq |\Pi| $ and $n\coloneqq |\Sigma|$.
\end{corollary}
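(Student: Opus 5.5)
The plan is to compose the two earlier results directly. Given an instance $(\Sigma,\Pi,k,d)$, we first run the kernelization of \Cref{thm:median_dk_fpt}. In $\Ocal(m^2 n\log n)$ time it produces an equivalent instance $(\Sigma',\Pi',k',d')$ with $m'=|\Pi'| = \Ocal(d^2+dk)$ permutations over an alphabet of size $n'=|\Sigma'| = \Ocal(d^4+d^2k^2)$. We will check that the kernel does not increase the parameters, i.e., $k'\le k$ and $d'\le d$. If the kernel construction only shrinks $\Pi$ and $\Sigma$ while keeping the budgets, this is immediate. Otherwise, any polynomial change of $d$ only affects the constant in the exponent, provided $d'=\Ocal(d)$.

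Next we apply the XP algorithm of \Cref{thm:median_d_xp} to the kernel. Its running time is $\Ocal((m'n')^{2d'})$. Substituting the kernel bounds gives $m'n' = \Ocal\big((d^2+dk)(d^4+d^2k^2)\big)$. Since $d^2+dk\le 2d\max(d,k)$ and $d^4+d^2k^2 \le 2d^2\max(d,k)^2$, we get $m'n' = \Ocal(d^3\max(d,k)^3)$, which is at most $\Ocal((dk)^{6})$ once $d,k\ge 1$. Raising to the power $2d'=\Ocal(d)$ yields $(kd)^{\Ocal(d)}$.

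Adding the kernelization time gives the claimed bound $(kd)^{\Ocal(d)}+\Ocal(m^2n\log n)$. Correctness follows because the kernel is equivalent to the original instance and the XP algorithm is exact.

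Two degenerate cases need separate handling, as the bound $(kd)^{\Ocal(d)}$ is meaningless or constant there.
\begin{itemize}
\item If $d=0$, the answer is decided by checking whether $\Pi$ has at most $k$ distinct permutations. This takes $\Ocal(m^2 n)$ time by pairwise comparison, which fits within the polynomial term.
\item If $k=0$, the instance is trivially a no-instance whenever $\Pi\neq\emptyset$.
\end{itemize}

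The only genuine point to verify is that the kernel keeps $d'$ bounded by $\Ocal(d)$ and $k'$ polynomially bounded. Everything else is arithmetic. I expect \Cref{thm:median_dk_fpt} to preserve $k$ and $d$ exactly, since it only deletes duplicate permutations and shortens identical subpermutations.
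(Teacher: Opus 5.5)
Your proposal is correct and is the same argument as the paper's: run the kernel of \Cref{thm:median_dk_fpt}, then run the $\XP$ algorithm of \Cref{thm:median_d_xp} on the resulting instance, whose size satisfies $m'n' = (kd)^{\Ocal(1)}$. Your expectation about the parameters also holds, since Reduction rules~\ref{R1}--\ref{R3} never change $k$ or $d$ (Rule~\ref{R1} just outputs a constant-size trivial \noinstance), so $(m'n')^{2d} = (kd)^{\Ocal(d)}$ follows directly.
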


\section{Preliminaries}\label{sec:preliminaries}
    \subparagraph*{Basic notation.}
We use $[n]$ to denote the set $\{1, 2, \ldots, n\}$.
Let $\Sigma$ be a finite alphabet with $|\Sigma| = n$.  
A \emph{permutation} of length $n$ over $\Sigma$ is a bijection $\pi \coloneq [n] \to \Sigma$.
We write $\Scal_n(\Sigma)$ for the set of all permutations of length $n$ over $\Sigma$.
Thus, each $\pi \in \Scal_n(\Sigma)$ is an ordering of the symbols of~$\Sigma$.
We often write a permutation as a sequence $\pi = \pi(1)\pi(2)\cdots\pi(n)$.

A \emph{subsequence} of a permutation $\pi$ is obtained by deleting zero or more elements from $\pi$ while maintaining the relative order of the remaining elements.
A \emph{common subsequence} of two permutations $\pi_1$ and $\pi_2$ is a subsequence that appears in both permutations.
We denote by $\LCS(\pi_1,\pi_2)$ the length of the longest common subsequence of $\pi_1$ and $\pi_2$.

\subparagraph*{Ulam distance.}
A \emph{move operation} on a permutation $\pi$ removes one element from its current position and inserts it at another position, while maintaining the relative order of all other elements.

For two permutations $\pi,\sigma \in \Scal_n(\Sigma)$, the \emph{Ulam distance} $\dist(\pi,\sigma)$ is the minimum number of move operations needed to transform $\pi$ into $\sigma$:
\begin{equation*}
    \dist(\pi,\sigma) = \min \{\, d : \text{$\pi$ can be turned into $\sigma$ using $d$ moves} \,\}.
\end{equation*}

\noindent A classical identity characterizes this metric in terms of the longest common subsequence:
\begin{equation*}
    \dist(\pi,\sigma) = n - \LCS(\pi,\sigma).
\end{equation*}

Thus, the Ulam distance measures the minimum number of elements that must be
relocated to make the permutations consistent in relative order. We remark that a longest common subsequence, and thereby the Ulam distance, between two permutations of length $n$ can be computed in time in $O(n \log n)$ by relabeling the alphabet such that the first permutation has increasing order and then finding a longest increasing sequence in the other (relabeled) permutation.
The Ulam distance on permutations is analogous to the Levenshtein edit distance
on strings in the following sense: both distances identify a longest common
subsequence and keep it untouched while performing edit operations on the
remaining symbols.

For a set $\Scal\subseteq \Scal_n(\Sigma)$ of permutations and $\pi\in \Scal_n(\Sigma)$, we write $\dist(\pi, \Scal)$ to denote the minimum distance from $\pi$ to any permutation in $\Scal$, that is, $\dist(\pi, \Scal) = \min_{\sigma \in \Scal} \dist(\pi, \sigma)$.

For two strings $s, t$ of equal length over an alphabet, the \emph{Hamming distance} $\distH(s,t)$ counts the number of positions where the corresponding symbols differ.

\subparagraph*{Graphs.}
We work with simple undirected graphs. For a graph $G$, we denote by $V(G)$ its vertex set and by $E(G)$ its edge set.
A \emph{vertex cover} of a graph $G$ is a subset $X \subseteq V(G)$ such that every edge in $E(G)$ has at least one endpoint in $X$.
A \emph{clique} in a graph $G$ is a subset of vertices such that every pair of vertices in the subset is connected by an edge.
A graph is \emph{triangle-free} if it contains no clique of size $3$.
A \emph{proper vertex coloring} of a graph $G$ assigns colors to vertices such that no two adjacent vertices share the same color.  

\subparagraph*{Parameterized Complexity.} We refer to the textbooks~\cite{Cygan2015parameterized,fomin2019kernelization} for an introduction to the parameterized complexity theory and kernelization. 

A \emph{parameterized problem} is a language $L\subseteq\Sigma^*\times\mathbb{N}$  where $\Sigma^*$ is a set of strings over a finite alphabet $\Sigma$.
An input of a parameterized problem is a pair $(x,k)$ where $x$ is a string over $\Sigma$ and $k\in \mathbb{N}$ is a \emph{parameter}. 
A parameterized problem is \emph{fixed-parameter tractable} (or \FPT) if it can be solved in  $f(k)\cdot |x|^{\mathcal{O}(1)}$ time for some computable function~$f$.  
The complexity class \FPT{} consists of all fixed-parameter tractable parameterized problems.
A parameterized problem is in the class \XP{} if it can be solved in $|x|^{f(k)}$ time for a computable function $f(\cdot)$.
The standard way to rule out the existence of an \FPT-algorithm for a parameterized problem under the standard parameterized complexity assumption that $\FPT\neq \W[1]$
is to show that it is  \W[1]-hard, that is, at least as hard as any problem in the class of parameterized problems $\W[1]$.
A problem is para-$\NP$-hard if it $\NP$-hard even when the parameter $k$ is a constant.   

A \emph{kernelization algorithm} or \emph{kernel} for a parameterized problem~$L$ is a polynomial-time algorithm that takes as its input an instance $(x,k)$ of $L$ and returns an instance~$(x',k')$ of the same problem such that (i) $(x,k)\in L$ if and only if $(x',k')\in L$ and (ii)~$|x'|+k'\leq f(k)$ for some computable function~$f\colon \mathbb{N}\rightarrow \mathbb{N}$.
The function $f(\cdot)$ is the \emph{size} of the kernel; a kernel is \emph{polynomial} if $f$ is a polynomial.
A decidable parameterized problem is \FPT{} if and only if it admits a kernel.
However, up to some standard complexity-theoretic assumptions, there are \FPT problems that have no polynomial kernels.
One of the ways to exclude a polynomial kernel for a parameterized problem is to show that there is 
a \emph{polynomial parameter transformation} from a parameterized problem for which the existence of a polynomial kernel does not exist unless $\mathrm{NP} \subseteq \mathrm{coNP} / \mathrm{poly}$, where a polynomial parameter transformation is a polynomial-time reduction such that the parameter in the new instance is bounded by a polynomial in the original parameter.

\section{The Complexity of \texorpdfstring{\ucenterfull}{Ulam Metric k-Center}}\label{sec:center}
    
In this section, we study the \ucenterfull problem.
We first show that the problem is already $\NP$-hard for $d = 1$.
Our main result in this section is the fixed-parameter tractability of the problem parameterized by $k + d$.
Nevertheless, under standard complexity assumptions, the problem does not admit a polynomial kernel parameterized by $k + d$.

\subsection{Lower Bound for Constant Radius \texorpdfstring{$d$}{d}}

We prove that \ucenterfull is $\NP$-hard when the radius is fixed to the constant $d$, via a reduction from \vcfull.
Notice that \vcfull remains $\NP$-hard in triangle-free graphs~\cite{Poljak1974}.
This follows from a standard $2$-subdivision transformation: replacing each edge by a path of length $3$ eliminates all triangles and increases the size of a minimum vertex cover by exactly one per original edge.
We may thus assume without loss of generality that the input graph is triangle-free, which suffices to prove \Cref{thm:center_d_paraNPhard}.
The idea of the reduction is to have two symbols per vertex. All permutations relevant to the reduction look almost entirely the same, except that only the pair of symbols for one or two vertices are swapped to encode that respective vertex or edge.
If we let the input set of permutations encode all edges of the graph this way, then a vertex cover corresponds to a set of permutations encoding vertices.
This holds as each permutation encoding an edge has Ulam distance $1$ to only those permutations encoding vertices to which the edge is incident. 

\centerNPhard*
\begin{proof}
    We start by giving a reduction from \vcfull in triangle-free graphs for the case $d=1$.
    Let $(G, k')$ be an instance of \vcfull such that $G$ is triangle-free and $n'\coloneqq |V(G)|$
    We construct an instance $(\Pi, k, d)$ of $\ucenterfull$  as follows.
    Consider the alphabet set $\Sigma = \{v_i,\bar{v}_{i} : i \in [n'] \}$.
    We first construct a base permutation. Every permutation constructed later will only minimally differ from this permutation.
    \begin{equation*}
        \pi_{\base} \coloneq \highlight{v_1\ \bar{v}_{1}}\ \highlight{v_2\ \bar{v}_{2}} \cdots \highlight{v_{n'}\ \bar{v}_{n'}}.
    \end{equation*}
    Then, for each vertex $v_i \in V$, we introduce a permutation $\sigma_{v_i}$ as the permutation obtained by flipping the positions of $v_i,\bar{v}_{i}$ in $\pi_{\base}$, i.e., 
    \begin{equation*}
        \sigma_{v_i} = \highlight{v_1\ \bar{v}_{1}}\ \highlight{v_2\ \bar{v}_{2}} \cdots \HIGHlight{\bar{v}_{i}\ v_i} \cdots \highlight{v_{n'}\ \bar{v}_{n'}}.
    \end{equation*}
    Similarly, for an edge $e = v_{i}v_{j} \in E(G)$, we also introduce the permutation $\pi_{e}$ obtained from $\pi_{\base}$ by flipping the positions of $v_i, \bar{v}_{i}$ and $v_j, \bar{v}_{j}$, respectively.
    That is, 
    \begin{align*}
        \pi_{e} = \highlight{v_1\ \bar{v}_{1}}\ \highlight{v_2\ \bar{v}_{2}} & \cdots 
        \highlight{v_{i-1}\ \bar{v}_{i-1}}\ 
        \HIGHlight{\bar{v}_{i}\ v_i}\ 
        \highlight{v_{i+1}\ \bar{v}_{i+1}} \cdots \\
        & \cdots \highlight{v_{j-1}\ \bar{v}_{j-1}}\ 
        \HIGHlight{\bar{v}_{j}\ v_j}\ 
        \highlight{v_{j+1}\ \bar{v}_{j+1}} \cdots 
        \highlight{v_{n'}\ \bar{v}_{n'}}.
    \end{align*}
    Next, we set $d=1$, $k=k'$, and let the set of permutations consist of all permutations for edges in the graph, i.e., $\Pi =\{ \pi_{e} :  e\in E(G) \}$.
    Clearly, the instance can be constructed in polynomial time. An exemplary construction is depicted in Figure~\ref{fig:exampleVertexCoverReduction}.

    If $G$ has a vertex cover $X$ of size $k$, then consider $S = \{\sigma_{v_{i}} : v_{i} \in X \}$ as a of center permutations. 
    For every edge $e = v_{i}v_{j} \in E(G)$, $X$ contains at least one of $v_i$ and $v_j$, so that $\sigma_{v_i} \in S$ or $\sigma_{v_j} \in S$.
    In addition, the \ulamdistance between $\pi_{e}$ and $\sigma_{v_i}$ or $\sigma_{v_j}$ is exactly $1$.
    As a result, we have a solution $S$ for our \ucenterfull instance.

    For the reverse direction, suppose we have a \yesinstance for \ucenterfull witnessed by a set $S$ of permutations with $|S| \leq k$, then we create a vertex cover $X$ for $G$ as follows.
    Note that $S$ partitions $\Pi$ into at most $k$ clusters, where we have one cluster per center permutation in $S$, and each permutation in $\Pi$ is assigned to the cluster of the closest center in Ulam distance (breaking ties arbitrarily).
    Consider a cluster $C$.
    We first observe that for any two permutations $\pi_{e_1}, \pi_{e_2} \in C$, the corresponding edges $e_1$ and $e_2$ must share a common endpoint.
    Otherwise, the distance between $\pi_{e_1}$ and $\pi_{e_2}$ would be $4$, and hence there would be no center permutation at distance at most $1$ from both.

    If $|C| \geq 4$, then the only way every pair of edges corresponding to permutations in $C$ can share a common endpoint is that all these edges share a single vertex.
    Otherwise, there would exist a pair of edges with no common endpoint, again yielding distance $4$ between the corresponding permutations.
    In this case, we add the vertex shared by all edges in $C$ to the vertex cover $X$.

    If $|C| = 3$, then the shared-endpoint condition implies that the edges $e_1, e_2,$ and $e_3$ either form a triangle or all share a common endpoint.
    Since $G$ is triangle-free, only the latter is possible and thus, we add the vertex shared by all three edges to the vertex cover $X$.

    Finally, for $|C|=2$, we add the unique vertex shared by the two edges associated with $C$ to the vertex cover $X$ and if $|C|=1$ we add an arbitrary endpoint of the unique edge associated with $C$ to the vertex cover $X$.
    
    Since each permutation in $\Pi$ belongs to some cluster, the edge corresponding to that permutation is covered by the vertex chosen for that cluster.
    The set $X$ of vertices selected by the clusters forms a vertex cover of $G$.
    Moreover, because each cluster contributes exactly one vertex, we have $|X| = |S| \leq k = k'$.
    Thus, our \vcfull instance $(G, k')$ should be a \yesinstance, completing the proof.

    Now we consider any constant $d$ larger than $1$, employ the same construction but extend the alphabet set to $\Sigma = \{v_{i, j}, \bar{v}_{i, j} : i \in [n'], j \in [d] \}$.
    Then, the base permutation $\pi_{\base}$ is defined as
    \begin{equation*}
        \pi_{\base} = \highlight{v_{1,1}\ \bar{v}_{1,1}} \cdots \highlight{v_{1,d}\ \bar{v}_{1,d}}\ \highlight{v_{2,1}\ \bar{v}_{2,1}} \cdots \highlight{v_{2,d}\ \bar{v}_{2,d}} \cdots \highlight{v_{n',1}\ \bar{v}_{n',1}} \cdots \highlight{v_{n',d}\ \bar{v}_{n',d}}.
    \end{equation*}
    Compared to $\pi_{\base}$, let every input permutation $\pi_e$ swap the two sets of $d$ symbols, each corresponding to its incident vertices.
    The Ulam distance to $\pi_{\base}$ then is $2d$.
    Analogously, a solution permutation $\sigma_{v_i}$ now swaps the $d$ pairs of symbols corresponding to vertex $v_i$, compared to $\pi_{\base}$.
    Notice that the Ulam distance between any two permutations $\pi_{e_1}$ and $\pi_{e_2}$ is $4d$ if the corresponding edges $e_1$ and $e_2$ do not share a common endpoint, and the Ulam distance between $\pi_e$ and $\sigma_{v_i}$ is exactly $d$ if and only if $v_i$ is an endpoint of $e$.
    The above arguments still apply, except that any solution now requires a radius of $d$ around the centers.
\end{proof}

\subsection{Fixed-Parameter Tractability by \texorpdfstring{$k + d$}{k+d}}

Since the problem is $\NP$-hard even for constant $d$ or $k$, neither parameter yields tractability on its own.
We therefore consider the combined parameter $k + d$, and show that the Ulam center problem is fixed-parameter tractable under this parameterization.

Our algorithm relies on iteratively updating candidate permutations to eventually converge towards the center permutations by moving one symbol in a candidate in each step.
Specifically, given a \emph{candidate} permutation $\pi_c$ and a \emph{guide} permutation $\pi_g$, we can enumerate (in $\FPT$ time) a set $P$ of permutations that differ from $\pi_{c}$ by one move.
Among them, at least one gets closer to any true center that covers both $\pi_c$ and $\pi_g$.
The key technical ingredient is captured in Lemma~\ref{lem:center_dk_fpt_guide}.

Our approach is based on the $2$-colorings of the symbols in $\pi_g$ and $\pi_c$.
The $2$-colorings encode which symbols are allowed to be moved toward a specific permutation $\sigma \in P$.  
We formalize when such a $2$-coloring is fitting with respect to $\sigma$, an exemplary coloring is visualized in Figure~\ref{fig:dfitting}.

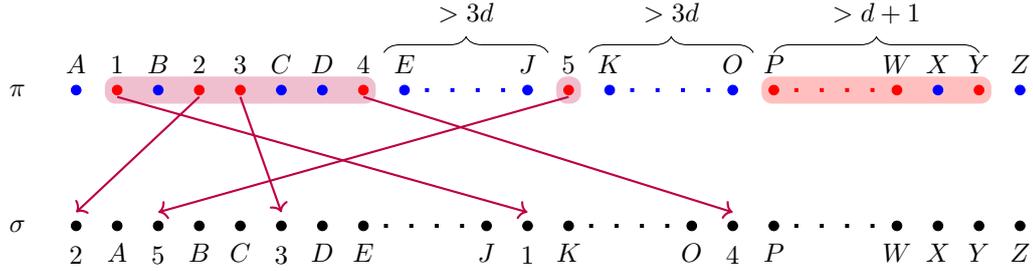
\begin{figure}[h]
\centering
\begin{tikzpicture}[scale=.9]

\def\dx{0.6}

\begin{pgfonlayer}{background}

    \fill[purple!25, rounded corners]
        (1.7*\dx, 1.8) rectangle (8.3*\dx, 2.2);

    \fill[purple!25, rounded corners]
        (12.7*\dx, 1.8) rectangle (13.3*\dx, 2.2);
    
    \fill[red!25, rounded corners]
        (17.7*\dx, 1.8) rectangle (23.3*\dx, 2.2);
    
\end{pgfonlayer}


\filldraw[blue] (1*\dx, 2) circle (2pt) node[black,above=3pt] {$A$}; 
\filldraw[red] (2*\dx, 2) circle (2pt) node[black,above=3pt] {$1$}; 
\filldraw[blue] (3*\dx, 2) circle (2pt)node[black,above=3pt] {$B$}; 
\filldraw[red] (4*\dx, 2) circle (2pt) node[black,above=3pt] {$2$};
\filldraw[red] (5*\dx, 2) circle (2pt) node[black,above=3pt] {$3$};
\filldraw[blue] (6*\dx, 2) circle (2pt) node[black,above=3pt] {$C$}; 
\filldraw[blue] (7*\dx, 2) circle (2pt) node[black,above=3pt] {$D$}; 
\filldraw[red] (8*\dx, 2) circle (2pt) node[black,above=3pt] {$4$};
\filldraw[blue] (9*\dx, 2) circle (2pt) node[black,above=3pt] {$E$};  
\draw[
    color = blue,
    mark=*,
    line width=1.5pt,
    dash pattern=on 1.5pt off 8pt
]
    (9.5*\dx, 2) -- (11.5*\dx, 2) ;

\filldraw[blue] (12*\dx, 2) circle (2pt) node[black,above=3pt] {$J$};; 
\filldraw[red] (13*\dx, 2) circle (2pt) node[black,above=3pt] {$5$};; 
\filldraw[blue] (14*\dx, 2) circle (2pt) node[black,above=3pt] {$K$};; 

\draw[
    color = blue,
    mark=*,
    line width=1.5pt,
    dash pattern=on 1.5pt off 8pt
]
    (14.5*\dx, 2) -- (16.5*\dx, 2);

\filldraw[blue] (17*\dx, 2) circle (2pt) node[black,above=3pt] {$O$};
\filldraw[red] (18*\dx, 2) circle (2pt) node[black,above=3pt] {$P$};

\draw[
    color = red,
    mark=*,
    line width=1.5pt,
    dash pattern=on 1.5pt off 8pt
]
    (18.5*\dx, 2) -- (20.5*\dx, 2);

\filldraw[red] (21*\dx, 2) circle (2pt) node[black,above=3pt] {$W$};
\filldraw[blue] (22*\dx, 2) circle (2pt) node[black,above=3pt] {$X$};
\filldraw[red] (23*\dx, 2) circle (2pt)node[black,above=3pt] {$Y$};
\filldraw[blue] (24*\dx, 2) circle (2pt)node[black,above=3pt] {$Z$};

\node[left] at (0,2) {$\pi$};


\foreach \i in {1,...,24} {
    
}

\filldraw ( 1*\dx, 0 ) circle (2pt); \node[below=7pt] at (1*\dx, 0.1){$2$};
\filldraw ( 2*\dx, 0 ) circle (2pt);\node[below=7pt] at (2*\dx, 0.15){$A$};
\filldraw ( 3*\dx, 0 ) circle (2pt);\node[below=7pt] at (3*\dx, 0.1){$5$};
\filldraw ( 4*\dx, 0 ) circle (2pt);\node[below=7pt] at (4*\dx, 0.15){$B$};
\filldraw ( 5*\dx, 0 ) circle (2pt);\node[below=7pt] at (5*\dx, 0.15){$C$};
\filldraw ( 6*\dx, 0 ) circle (2pt);\node[below=7pt] at (6*\dx, 0.1){$3$};
\filldraw ( 7*\dx, 0 ) circle (2pt);\node[below=7pt] at (7*\dx, 0.15){$D$};
\filldraw ( 8*\dx, 0 ) circle (2pt);\node[below=7pt] at (8*\dx, 0.15){$E$};
\draw[
    mark=*,
    line width=1.5pt,
    dash pattern=on 1.5pt off 8pt
]
    (8.5*\dx, 0) -- (10.5*\dx, 0);

\filldraw ( 11*\dx, 0 ) circle (2pt);\node[below=7pt] at (11*\dx, 0.15){$J$};
\filldraw ( 12*\dx, 0 ) circle (2pt);\node[below=7pt] at (12*\dx, 0.1){$1$};
\filldraw ( 13*\dx, 0 ) circle (2pt);\node[below=7pt] at (13*\dx, 0.15){$K$};
\draw[
    mark=*,
    line width=1.5pt,
    dash pattern=on 1.5pt off 8pt
]
    (13.5*\dx, 0) -- (15.5*\dx, 0);

\filldraw ( 16*\dx, 0 ) circle (2pt);\node[below=7pt] at (16*\dx, 0.15){$O$};
\filldraw ( 17*\dx, 0 ) circle (2pt);\node[below=7pt] at (17*\dx, 0.1){$4$};
\filldraw ( 18*\dx, 0 ) circle (2pt);\node[below=7pt] at (18*\dx, 0.15){$P$};

\draw[
    mark=*,
    line width=1.5pt,
    dash pattern=on 1.5pt off 8pt
]
    (18.5*\dx, 0) -- (20.5*\dx, 0);

\filldraw ( 21*\dx, 0 ) circle (2pt);\node[below=7pt] at (21*\dx, 0.15){$W$};
\filldraw ( 22*\dx, 0 ) circle (2pt);\node[below=7pt] at (22*\dx, 0.15){$X$};
\filldraw ( 23*\dx, 0 ) circle (2pt);\node[below=7pt] at (23*\dx, 0.15){$Y$};
\filldraw ( 24*\dx, 0 ) circle (2pt);\node[below=7pt] at (24*\dx, 0.15){$Z$};

\node[left] at (0,0) {$\sigma$};

\draw[->, thick, purple] (2*\dx, 1.9) -- (12*\dx, 0.2);
\draw[->, thick, purple] (4*\dx, 1.9) -- (1*\dx, 0.2);
\draw[->, thick, purple] (5*\dx, 1.9) -- (6*\dx, 0.2);
\draw[->, thick, purple] (8*\dx, 1.9) -- (17*\dx, 0.2);

\draw[->, thick, purple] (13*\dx, 1.9) -- (3*\dx, 0.2);

\draw[
    decorate,
    decoration={brace, amplitude=6pt}
]
    (13.5*\dx, 2.55) -- (17.5*\dx, 2.55)
    node[midway, yshift=15pt] {$> 3d$};

\draw[
    decorate,
    decoration={brace, amplitude=6pt}
]
    (8.5*\dx, 2.55) -- (12.5*\dx, 2.55)
    node[midway, yshift=15pt] {$> 3d$};

    \draw[
    decorate,
    decoration={brace, amplitude=6pt}
]
    (18*\dx, 2.55) -- (23*\dx, 2.55)
    node[midway, yshift=15pt] {$> d + 1$};

\end{tikzpicture}
\caption{A permutation $\pi$ with a \emph{$d$-fitting} coloring with respect to a permutation $\sigma$, with \emph{witness} $\Sigma_{\pi} = \{1,2,3,4,5\}$.
Substrings highlighted in red form blocks.
The rightmost block contains more than $d$ red symbols, meaning that it cannot contain any moved symbol.}
\label{fig:dfitting}
\end{figure}

\begin{definition}[$d$-fitting coloring and witness] \label{def:fitting_coloring}
    Let $d\in \Nbb$, $\pi \in \Scal_{n}(\Sigma)$, and $\col \colon \Sigma \to \{\red, \blue\}$ be a coloring function of $\pi$.
    We say $\col$ is a \emph{$d$-fitting} coloring (of $\pi$) with respect to some permutation $\sigma \in \Scal_{n}(\Sigma)$, if there is a subset of symbols $\Sigma_{\pi}\subseteq \Sigma$ such that
    \begin{itemize}
        \item $|\Sigma_{\pi}|=\dist(\pi, \sigma)$ and $\sigma$ can be obtained from $\pi$ by moving the symbols in $\Sigma_{\pi}$;
        \item for every symbol in $x \in \Sigma_{\pi}$, we have $\col(x) = \red$; and
        \item for every symbol $x \in \Sigma \setminus \Sigma_{\pi}$, if $x$ appears in $\pi$ within $3d$ symbols (before or after) of some symbol in $\Sigma_{\pi}$, then $\col(x) = \blue$.
    \end{itemize}
    Moreover, such a subset of symbols $\Sigma_{\pi}$ is called a \emph{witness} of the coloring $\col$ to be $d$-fitting.
\end{definition} 

Intuitively, red symbols are candidates for being moved (i.e., they may or may not actually be moved), but blue symbols are definitely not moved.
Moreover, if a red symbol is indeed one of the moved symbols (i.e., it belongs to $\Sigma_{\pi}$), then all symbols within distance $3d$ of it must be correctly colored blue (if they are not moved) or red (if they are moved).
In other words, whenever the coloring correctly identifies a moved symbol as red, it must also correctly color its entire $3d$-neighborhood.

Consequently, the red symbols in a $d$-fitting coloring $\col$ of $\pi$ w.r.t.\ $\sigma$ that are not separated by $3d$ consecutive blue symbols are tied to each other in the sense that either both or none of them belong to any witness $\Sigma_{\pi}$.
This motivates the following definitions of blocks.

\begin{definition}[Blocks]
    Let $d\in \Nbb$, $\pi \in \Scal_{n}(\Sigma)$ and $\col \colon \Sigma \to \{\red,\blue\}$.
    A \emph{block} is a maximal substring (i.e., consecutive sequence) $\beta$ of $\pi$ such that
    \begin{itemize}
        \item the first and last symbols of $\beta$ are colored red; and
        \item $\beta$ contains no $3d$ consecutive blue symbols.
    \end{itemize}
\end{definition}

See also Figure~\ref{fig:dfitting} for a visualization of blocks.
By definition, any two blocks are disjoint and separated by at least $3d$ blue vertices in $\pi$.
We immediately obtain the following observation by recoloring all symbols in a block blue if it contains more than $d$ red symbols.

\begin{observation}\label{obs:blocks}
     Let $d \in \Nbb$, $\pi \in \Scal_{n}(\Sigma)$.
     Suppose $\col$ is a $d$-fitting coloring of $\pi$ w.r.t.\ some permutation $\sigma \in \Scal_{n}(\Sigma)$.
     If $\dist(\pi, \sigma) \le d$, there is a $d$-fitting coloring $\col'$ of $\pi$ w.r.t.\ $\sigma$ such that there are at most $d$ red symbols in each block.
     Moreover, such a coloring $\col'$ can be obtained from $\col$ in $\Ocal(n)$ time.
\end{observation}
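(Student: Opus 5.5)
The plan is to define $\col'$ from $\col$ by recoloring every symbol of each block that contains more than $d$ red symbols to $\blue$, leaving all other colors unchanged. I then argue that the same witness $\Sigma_{\pi}$ that certifies $\col$ also certifies $\col'$.

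The key claim is that a block $\beta$ with more than $d$ red symbols contains no symbol of $\Sigma_{\pi}$. Suppose some $x\in\Sigma_{\pi}$ lies in $\beta$. I would show that every red symbol of $\beta$ then belongs to $\Sigma_{\pi}$, by propagating along $\beta$. Let $y$ be the red symbol of $\beta$ closest to $x$ on one side. Because $\beta$ contains no $3d$ consecutive blue symbols, $y$ lies within $3d$ positions of $x$. By the third condition of Definition~\ref{def:fitting_coloring}, every non-witness symbol within $3d$ of $x$ is blue, so the red symbol $y$ must be in $\Sigma_{\pi}$. Repeating this argument from $y$ covers all red symbols of $\beta$. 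Hence $|\Sigma_{\pi}|\ge |\text{red symbols of }\beta|>d\ge \dist(\pi,\sigma)=|\Sigma_{\pi}|$, a contradiction.

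With the claim in hand, I check the three conditions of the definition for $\col'$ with the same witness $\Sigma_{\pi}$. The first condition is unchanged. For the second, every witness symbol lies in no recolored block, so it keeps color $\red$. For the third, symbols that were $\blue$ under $\col$ stay $\blue$, since we only recolor to blue. Every remaining block of $\col'$ is a block of $\col$ with at most $d$ red symbols; recoloring whole blocks blue only merges blue gaps and never creates new blocks containing red symbols. The most delicate point is this last one: I must confirm that the block structure of $\col'$ consists exactly of the surviving blocks of $\col$. This holds because blocks are separated by at least $3d$ blue symbols, and removing red symbols only lengthens these separations.

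For the running time, blocks can be computed in a single left-to-right scan that tracks the current run of blue symbols and the red count of the current block, which takes $\Ocal(n)$. A second pass performs the recoloring, also in $\Ocal(n)$.
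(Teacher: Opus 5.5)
Your proposal is correct and takes the same approach as the paper, which obtains $\col'$ by recoloring blue every block that contains more than $d$ red symbols. Your proof also supplies the justification the paper leaves implicit: propagation through red symbols at most $3d$ apart shows that such a block contains no symbol of $\Sigma_{\pi}$ (since $\dist(\pi,\sigma)\le d$), so the same witness certifies $\col'$, and the blocks of $\col'$ are exactly the surviving blocks of $\col$.
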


We next argue how the notion of $d$-fitting colorings and that of witnesses provide a means to find a permutation that is one step closer to any true center $\sigma$ than the candidate $\pi_c$.

\begin{lemma}\label{lem:progress_from_witnesses}
    Let $d\in \Nbb$, $\pi_c, \pi_g \in \Scal_{n}(\Sigma)$.
    Let $\col_c$ and $\col_g$ be $d$-fitting colorings of $\pi_c$ and $\pi_g$ w.r.t.\ some permutation $\sigma \in \Scal_{n}(\Sigma)$ with $1 \le \dist(\pi_c, \sigma) \le d$ and $\dist(\pi_g, \sigma) \le d$.
    Consider any symbol $x\in \Sigma$ such that there are witnesses $\Sigma_c$ and $\Sigma_g$ of $\col_c$ and $\col_g$, respectively, with $x\in \Sigma_c\setminus \Sigma_g$. 
    Then in time $\Ocal(dn)$ we can compute and enumerate a set $P$ with $|P|\le 6d+2$ and there is $\tilde{\pi}\in P$ with $\dist(\tilde{\pi}, \sigma)=\dist(\pi_c, \sigma) -1$.
    The computation does not require the sets $\Sigma_c$ and $\Sigma_g$ as input.
\end{lemma}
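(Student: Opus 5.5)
The plan is to find the position in $\pi_c$ where $x$ should be reinserted by reading off $x$'s surroundings in $\pi_g$. We can only use that neighbourhood, not the witnesses themselves.

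Write $N_c=\Sigma\setminus\Sigma_c$ and $N_g=\Sigma\setminus\Sigma_g$ for the unmoved symbols. Then $N_c$ is a longest common subsequence of $\pi_c$ and $\sigma$, and its relative order is the same in both, and likewise for $N_g$. Since $x\notin\Sigma_g$, $x$ sits in $\pi_g$ exactly as in $\sigma$ relative to $N_g$.

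The basic observation is this. Suppose we move $x$ in $\pi_c$ so that $N_c\cup\{x\}$ appears in the same relative order as in $\sigma$. Then $N_c\cup\{x\}$ becomes a common subsequence of size $n-\dist(\pi_c,\sigma)+1$. Because one move changes the distance by at most one, the new permutation $\tilde\pi$ has $\dist(\tilde\pi,\sigma)=\dist(\pi_c,\sigma)-1$. So it suffices to insert $x$ in $\pi_c$ immediately after $w$, the predecessor of $x$ in $\sigma$ among $N_c$, or immediately before $w'$, the corresponding successor (or at the very front or end if one does not exist). Deleting $x$ from $\pi_c$ does not disturb the order of $N_c$.

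We therefore propose the candidate set
\[
P=\{\text{$\pi_c$ with $x$ moved directly after $y$}\}\cup\{\text{$\pi_c$ with $x$ moved directly before $y$}\},
\]
where in the first family $y$ ranges over $x$'s $3d$ predecessors in $\pi_g$, and in the second over its $3d$ successors in $\pi_g$. We add the two placements at the front and at the end of $\pi_c$. This gives $|P|\le 6d+2$. Each candidate is produced in $\Ocal(n)$ time, so the total is $\Ocal(dn)$. This computation clearly uses only $\pi_c$, $\pi_g$ and $x$, not $\Sigma_c$ or $\Sigma_g$.

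It remains to prove that $w$ or $w'$ lies within $3d$ positions of $x$ in $\pi_g$. Scan leftwards from $x$ in $\pi_g$ and let $z$ be the first symbol in $N_c\cap N_g$. Since $|\Sigma_c|+|\Sigma_g|\le 2d$, $z$ is within $2d+1$ positions of $x$. Every symbol strictly between $z$ and $x$ in $\pi_g$ lies in $\Sigma_c\cup\Sigma_g$. Because $z,x\in N_g$, $z$ precedes $x$ in $\sigma$. Any $N_c$-symbol lying in $\sigma$ strictly between $z$ and $x$ is then either one of the symbols between them in $\pi_g$ (hence within $3d$), or a symbol of $\Sigma_g\cap N_c$.

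The same argument applies on the right of $x$. So the predecessor $w$ is within $3d$ of $x$ in $\pi_g$ unless it belongs to $\Sigma_g$, and similarly for $w'$.

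The main obstacle is the remaining case, where both $w$ and $w'$ are symbols of $\Sigma_g$ that were moved in $\pi_g$ from far away. For this case we intend to exploit the $d$-fitting colorings. Since $x\in\Sigma_c$, the whole $3d$-neighbourhood of $x$ in $\pi_c$ is correctly coloured by $\col_c$. In particular, $x$'s neighbours in $\pi_c$ that lie in $N_c$ are known to be blue, and a blue symbol $v$ within $3d$ of a red symbol in $\pi_g$ is certainly in $N_g$.

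We would try to show the following: there is an $N_c\cap N_g$ symbol $y$ within $3d$ of $x$ in $\pi_g$, lying between $w$ and $w'$ in $\sigma$ relative to $N_c$. Inserting $x$ next to such a $y$, on the side dictated by $\pi_g$, is still consistent with the $\sigma$-order on $N_c\cup\{x\}$. This works because $w,w'$ are the nearest $N_c$ symbols around $x$, so no $N_c$ symbol lies between $y$ and $x$ in $\sigma$ except through $\Sigma_g$ elements, which we must rule out by a counting argument on the $\le d$ moved symbols.

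If this refinement fails, the fallback is to show that the bad configuration contradicts $\dist(\pi_g,\sigma)\le d$ via the $>3d$ gap. That is, the $3d$ symbols of $\pi_g$ adjacent to $x$ cannot all be absent from $N_c\cap N_g$.
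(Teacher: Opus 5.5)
Your reduction to placing $x$ in its $\sigma$-slot relative to $N_c$ is correct. So is your easy case, where $w$ or $w'$ does not exist or lies in $N_g$. The gap is the remaining case, and it is fatal for the set $P$ you chose: $P$ depends only on $\pi_c$, $\pi_g$ and $x$, and it can miss every improving move. Take $d=2$, $A=a_1\cdots a_6$, $B=b_1\cdots b_6$, and
\[
\sigma = A\,z\,w\,x\,w'\,z'\,B,\qquad \pi_c = x\,A\,z\,w\,w'\,z'\,B,\qquad \pi_g = w\,w'\,A\,z\,x\,z'\,B.
\]
Here $\dist(\pi_c,\sigma)=1$ and $\dist(\pi_g,\sigma)=2$, with unique witnesses $\Sigma_c=\{x\}$ and $\Sigma_g=\{w,w'\}$. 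Coloring exactly these symbols red gives $d$-fitting colorings with $x\in\Sigma_c\setminus\Sigma_g$. Moreover $d<\dist(\pi_c,\pi_g)=3\le 2d$, as in the intended application. The only improving move of $x$ produces $\sigma$, i.e., puts $x$ between $w$ and $w'$. However, the six predecessors of $x$ in $\pi_g$ are $a_2,\dots,a_6,z$ and its six successors are $z',b_1,\dots,b_5$. Hence none of your candidates, nor the front or end placement, equals $\sigma$.

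The underlying reason is that $w,w'$ were moved in $\pi_g$, so they can be arbitrarily far from $x$ there while being adjacent in $\pi_c$. Your proposed refinement cannot repair this. The symbols $w$ and $w'$ are consecutive in the $\sigma$-order of $N_c$, so no symbol of $N_c$ (in particular of $N_c\cap N_g$) lies strictly between them. Inserting $x$ next to any $y\in N_c\setminus\{w,w'\}$ is therefore wrong. Since your $P$ ignores the colorings, no later appeal to the $d$-fitting property can rescue it.

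The missing idea is to use the $\pi_g$-neighbourhood of $x$ only to find \emph{anchors}, and then to enumerate \emph{all} slots of $\pi_c$ between them. The paper takes $b_l$ and $b_r$ to be the nearest symbols to the left and right of $x$ in $\pi_g$ that are blue in both $\col_c$ and $\col_g$, with phantom symbols at the ends if none exist. Being blue in both certifies $b_l,b_r\in N_c\cap N_g$ without knowing the witnesses, which is exactly what the colorings are for.

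Since $b_l,x,b_r\in N_g$, $x$ lies between $b_l$ and $b_r$ in $\sigma$. Since $b_l,b_r\in N_c$, the correct slot between $w$ and $w'$ lies between $b_l$ and $b_r$ in $\pi_c$. So $P$ is the set of all placements of $x$ in $\pi_c$ between $b_l$ and $b_r$. In the example, $b_l=z$ and $b_r=z'$, and the slot between $w$ and $w'$ is among these placements.

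The bound $|P|\le 6d+2$ then follows from two facts. First, the block structure (at most $d$ red symbols per block, with blocks separated by $3d$ blue symbols) guarantees that $b_l$ and $b_r$ exist within $2d+1$ positions of $x$ in $\pi_g$. Second, $\dist(\pi_c,\pi_g)\le 2d$ bounds their span in $\pi_c$.
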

\begin{proof}
    Consider an arbitrary pair $\Sigma_c$ and $\Sigma_g$ of witnesses.
    Intuitively, $x\in \Sigma_c$ means that by moving $x$ to a specific new location in $\pi_c$, we make $\pi_c$ closer to $\sigma$. As $x\notin \Sigma_g$, the position of $x$ in $\pi_g$ gives an indication for that location.
    
    To see where to move $x$ in $\pi_c$, let $b_l$ be the closest symbol to the left of $x$ in $\pi_g$ that is still blue in both permutations, and let $b_r$ be the closest such symbol to the right of $x$ in $\pi_g$. 
    If there is no such symbol on the right and/or left, consider $b_l$ or $b_r$ to be a \emph{phantom} blue symbol before/after the first/last symbol of the permutation, respectively.
    As $\Sigma_g$ does not contain any blue symbols, it does not contain any of $x, b_l,$ and $b_r$. Thus, as $x$ lies between $b_l$ and $b_r$ in $\pi_g$, it lies between them in $\sigma$ as well.
    As $b_l$ and $b_r$ are blue in $\col_c$ as well, they do not change their position between $\pi_c$ and $\sigma$.
    We add all permutations to $P$ that are obtained by moving $x$ in $\pi_c$ to some location between $b_l$ and $b_r$.
    This ensures that we consider all possible ways of moving $x$ closer to its position in $\sigma$, leading to the fact that $P$ always contains a permutation $\tilde{\pi}$ that is one step closer to $\sigma$ than $\pi_c$.

    We argue that we only need to add at most $6d+2$ such permutations to $P$.
    Consider the next $2d + 1$ symbols to the left of $x$ in $\pi_g$.
    Note that they belong to at most one block in $\pi_g$.
    In each block, there are at most $d$ red symbols, and at least $d+1$ of these symbols are blue in $\col_g$. Thereby, these symbols do not appear in any witness $\Sigma_g$ of $\col_g$. Let $L$ be the set of these blue symbols (select the closest $d+1$ to $x$, if there are more than $d+1$).
    As $\dist(\pi_g, \sigma) \le d$, the symbols in $L$ appear as a subsequence of $\sigma$ which is interrupted only by at most $d$ other symbols. Call the set of these interrupting symbols $I$.
    As the symbols in $L\cup I$ occur consecutively in $\sigma$, the symbols in $(L \cup I) \setminus \Sigma_c$ appear consecutively in $\pi_c$, except some potential interruptions by symbols in $\Sigma_c$. Thus, the symbols in $L\setminus \Sigma_c$ have a pairwise maximal distance of at most $|L|+|I|+|\Sigma_c| -1\le 3d$ in $\pi_c$. Thus, they belong to at most one block in $\pi_c$ and so at most $d$ of them are red in $\col_c$. Hence, there is a symbol $b_l\in L$ which is blue in both $\col_g$ and $\col_r$. 
    The same line of argument shows that a suitable symbol $b_r$ exists among the $2d+1$ symbols to the right of $x$ in $\pi_g$.
    
    As the distance between $b_l$ and $b_r$ is at most $4d+2$ in $\pi_g$ and $\dist(\pi_c, \pi_g) \le 2d$, the distance between $b_l$ and $b_r$ in $\pi_c$ is at most $6d+2$, giving that many possibilities on where to insert $x$.
    If one or both of $b_l$ and $b_r$ are phantom symbols at the beginning/end of a permutation, the same upper bound applies.

    We can find $b_\ell$ and $b_r$ in time $O(d)$ by starting a search outwards from $x$ in $\pi_c$. Listing all obtained permutations takes time in $\Ocal(dn)$.
\end{proof}

It remains to show how to find such a symbol $x$.
For technical reasons, we require a weak notion of consistency between two $d$-fitting colorings, which motivates the following definition.

\begin{definition}[Consistent $d$-fitting colorings] \label{def:consistent_coloring}
    Let $d\in \Nbb$, $\pi_c, \pi_g \in \Scal_{n}(\Sigma)$, and $\col_c, \col_g$ be $d$-fitting colorings of $\pi_c$ and $\pi_g$ with respect to some permutation $\sigma \in \Scal_{n}(\Sigma)$, respectively.
    We say that $\col_c$ and $\col_g$ are \emph{consistent $d$-fitting} with respect to $\sigma$ if there are respective witnesses $\Sigma_{c}, \Sigma_{g} \subseteq \Sigma$, such that for every symbol $x\in \Sigma_{c} \cup \Sigma_{g}$, we have 
    $\col_g(x) = \red$ if and only if $x \in \Sigma_{g}$, and $\col_c(x) = \red$ if and only if $x \in \Sigma_{c}$.
\end{definition} 

We first show how to enumerate a set of candidate pairs of colorings to be consistent $d$-fitting with respect to any $\sigma$.

\begin{lemma} \label{lem:fitting_candidates}
    Let $d \in \Nbb$ and $\pi_c, \pi_g \in \Scal_{n}(\Sigma)$.
    There is a set $\Ccal$ of $2$-coloring pairs such that 
    \begin{itemize}
        \item the size of $\Ccal$ is bounded by $|\Ccal| \le 2^{\Ocal(d^{2})} \log n$; and
        \item for every $\sigma \in \Scal_{n}(\Sigma)$ with $\dist(\pi_c, \sigma) \le d$ and $\dist(\pi_g, \sigma) \le d$, $\Ccal$ contains a consistent $d$-fitting coloring pair of $\pi_c$ and $\pi_g$ w.r.t.\ $\sigma$.
    \end{itemize}
    Moreover, such a set can be computed and listed in $2^{\Ocal(d^{2})} n\log n$ time, where $n \coloneqq |\Sigma|$. 
\end{lemma}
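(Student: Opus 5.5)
This is a derandomized random-separation argument. Fix $\sigma$ with $\dist(\pi_c,\sigma)\le d$ and $\dist(\pi_g,\sigma)\le d$. Fix witnesses $\Sigma_c$ and $\Sigma_g$, that is, minimum sets of moved symbols, each of size at most $d$.

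\textbf{Relevant symbols.} Define $N_c$ as the set of symbols of $\pi_c$ within $3d$ positions of some symbol of $\Sigma_c$, and define $N_g$ analogously in $\pi_g$. Then $|N_c|,|N_g|\le d(6d+1)$. Let $T=N_c\cup N_g\cup\Sigma_c\cup\Sigma_g$, so $|T|=\Ocal(d^2)$.

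\textbf{What the colorings must satisfy.} A coloring pair $(\col_c,\col_g)$ is consistent $d$-fitting with respect to $\sigma$ with these witnesses if it meets three conditions, all constraining only symbols of $T$:
\begin{itemize}
\item $\col_c$ is red on $\Sigma_c$ and blue on $N_c\setminus\Sigma_c$;
\item $\col_g$ is red on $\Sigma_g$ and blue on $N_g\setminus\Sigma_g$;
\item for the consistency condition of Definition~\ref{def:consistent_coloring}, $\col_c$ is blue on $\Sigma_g\setminus\Sigma_c$ and $\col_g$ is blue on $\Sigma_c\setminus\Sigma_g$.
\end{itemize}
So there are target functions $f_c,f_g\colon T\to\{\red,\blue\}$, and it suffices that $\col_c$ agrees with $f_c$ and $\col_g$ agrees with $f_g$ on $T$. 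All other symbols may be colored arbitrarily, since Definition~\ref{def:fitting_coloring} puts no constraint on them.

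\textbf{Reduction to one coloring.} Take a single coloring $\chi\colon\Sigma\to\{0,1,2,3\}$ and decode it as $\col_c(x)=\red$ iff bit one of $\chi(x)$ is set, and $\col_g(x)=\red$ iff bit two is set. A suitable pair then exists whenever $\chi$ realizes a prescribed pattern $h\colon T\to\{0,1,2,3\}$ with $|T|=\Ocal(d^2)$. Uniformly random $\chi$ succeeds with probability $4^{-|T|}=2^{-\Ocal(d^2)}$.

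\textbf{Derandomization.} I would use an $(n,t)$-universal set (a perfect-hash-style family) with $t=\Ocal(d^2)$. Such a family realizes every $2$-coloring of every $t$-subset of $[n]$, for example via the Naor--Schulman--Srinivasan construction, and has size $2^{t+\Ocal(\log^2 t)}\log n$. Four colors are obtained by taking pairs from two such families, or by taking a $(2n,2t)$-universal set on two copies of $\Sigma$, one per permutation. This gives a family of size $2^{\Ocal(d^2)}\log n$, listable in $2^{\Ocal(d^2)}n\log n$ time. Since $|T|$ may be smaller than $t$, I pad $T$ to exactly $t$ symbols, which is harmless (if $n<t$, enumerate all colorings). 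The set $\Ccal$ consists of the decoded pairs, and the bounds stated in the lemma follow.

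\textbf{Main obstacle.} The key step is checking that consistency and fitting are determined entirely by the colors on a set of size $\Ocal(d^2)$, independent of $n$. In particular, the consistency requirement involves only $\Sigma_c\cup\Sigma_g$ and does not force colors outside $T$, so a universal set over $T$ suffices. The remainder is a standard citation of the universal-set construction.
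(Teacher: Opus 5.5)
Your proposal is correct and follows essentially the same route as the paper: encode the pair $(\col_c,\col_g)$ as a bitstring over two copies of $\Sigma$, observe that fitting plus consistency constrain only $\Ocal(d^2)$ of the $2n$ coordinates, and take a Naor--Schulman--Srinivasan $(2n,\Ocal(d^2))$-universal set. One small caveat: your alternative of pairing two independent $(n,t)$-universal families would give size $2^{\Ocal(d^2)}\log^2 n$, which exceeds the stated bound, so the doubled-ground-set construction (the one the paper uses) is the option you need.
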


\begin{proof}
    We construct the collection $\Ccal$ using the notion of $(a, b)$-universality.
    An $(a, b)$-universal set $\Ucal \subseteq \{0,1\}^{a}$ is a set of bitstrings of length $a$, such that for each subset $\Scal \subseteq [a]$ of size $b$, the projection of $\Ucal$ onto the coordinates in $\Scal$ yields all $2^b$ possible bitstrings of length $b$.
    Equivalently, if we restrict each string in $\Ucal$ to any fixed set of $b$ coordinates, all possible $2^b$ binary patterns appear.
    Naor, Schulman, and Srinivasan~\cite{NaorSS95} showed that such a set exists with size at most $2^b b^{\Ocal(\log b)} \log a$, and can be listed in time $2^b b^{\Ocal(\log b)} a \log a$.

    In our setting, we take $a = 2n$ and $b = 12d^2 + 4d$.
    Applying the above, we obtain an $(a, b)$-universal set $\Ccal \subseteq \{0,1\}^{a}$ of size
    \begin{equation*}
        2^b b^{\Ocal(\log b)} \log (a) = 2^{\Ocal(d^{2} + \log^{2} d)} \log n = 2^{\Ocal(d^{2})} \log n,
    \end{equation*}
    which can be listed it in time $2^{\Ocal(d^{2})} n \log n$.
    Each bitstring in $\{0,1\}^{2n}$ naturally defines a pair of $2$-colorings of the symbols in $\Sigma$: the first $n$ bits define the coloring $\col_c$ of $\pi_c$ and the last $n$ bits define a coloring $\col_g$ of $\pi_g$.

    Now, consider any permutation $\sigma$ with $\dist(\pi_c, \sigma) \le d$ and $\dist(\pi_g, \sigma) \le d$.
    Let $\Sigma_{c}$ and $\Sigma_{g}$ be minimum-sized sets of moved symbols in $\pi_c$ and $\pi_g$, respectively, to obtain~$\sigma$.
    By Definitions~\ref{def:fitting_coloring} and~\ref{def:consistent_coloring}, the $2$-colorings $\col_c$ and $\col_g$ are consistent $d$-fitting w.r.t.\ $\sigma$ if
    \begin{enumerate}[(i)]
        \item every symbol in $\Sigma_{c}$ (resp., $\Sigma_{g}$) is colored red in $\col_c$ (resp., $\col_g$);
        \item every symbol in $\Sigma \setminus \Sigma_{c}$ (resp., $\Sigma \setminus \Sigma_{g}$) within distance $3d$ of a moved symbol in $\Sigma_{c}$ (resp., $\Sigma_{g}$) is colored blue in $\col_c$ (resp., $\col_g$); and
        \item every symbol in $\Sigma_{g} \setminus \Sigma_{c}$ (resp., $\Sigma_{c} \setminus \Sigma_{g}$) is colored blue in $\col_g$ (resp., $\col_c$).
    \end{enumerate}
    
    Recall that both sets have size at most $d$.
    The number of symbols involved in these conditions is at most 
    \begin{equation*}
        (|\Sigma_{c}| + |\Sigma_{g}|) + 2 \cdot 3d(|\Sigma_{c}| + |\Sigma_{g}|) + (|\Sigma_{c}| + |\Sigma_{g}|) \leq 12d^{2} + 4d = b.
    \end{equation*}
    Because our $(a, b)$-universal set contains all possible pairs of colorings on any set of $b$ positions, it must contain a pair that assigns suitable colors to all $b$ involved symbols.
    Hence, this pair is consistent $d$-fitting w.r.t.\ $\sigma$.
\end{proof}

Given a guide $\pi_g$ and the candidate $\pi_c$, \Cref{lem:fitting_candidates} allows us to guess a consistent $d$-fitting coloring pair $(\col_c, \col_g)$ w.r.t.\ some center $\sigma$ within distance $d$ of both.
We now use the colorings $\col_c$ and $\col_g$ to capture the structural relationship between the two permutations.
To this end, we consider a certain vertex cover of their \emph{permutation graph}.
Here, a permutation graph $G_{\pi_c, \pi_g}$ of two permutations $\pi_c, \pi_g \in \Scal_{n}(\Sigma)$ is the graph whose vertex set is the alphabet $\Sigma$ and its edges are exactly the (unordered) pairs of vertices whose order is reversed between $\pi_c$ and $\pi_g$.
For each edge in $G_{\pi_c, \pi_g}$, exactly one of $\pi_c$ and $\pi_g$ has the two incident symbols in the same order as the center $\sigma$.
A vertex cover of $G_{\pi_c, \pi_g}$ thus gives an indication of which symbols need to be moved to transform $\pi_g$ and $\pi_c$ into $\sigma$. 

\begin{lemma}\label{lem:exists_vertexcover_crossing}
    Let $d\in \Nbb$, $\pi_c, \pi_g \in \Scal_{n}(\Sigma)$.
    Let $\col_c$ and $\col_g$ be consistent $d$-fitting colorings of $\pi_c$ and $\pi_g$ w.r.t.\ some permutation $\sigma \in \Scal_{n}(\Sigma)$.
    For every pair of witnesses $\Sigma_c, \Sigma_g$ for $\col_c$ and $\col_g$ there is an inclusion-wise minimal vertex cover $X$ of $G_{\pi_c, \pi_g}$ with $X\subseteq \Sigma_c \cup \Sigma_g$.
\end{lemma}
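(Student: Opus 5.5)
The plan is to show that $\Sigma_c \cup \Sigma_g$ is itself a vertex cover of $G_{\pi_c, \pi_g}$. Once this is established, we take any inclusion-wise minimal vertex cover $X \subseteq \Sigma_c \cup \Sigma_g$. Such an $X$ exists by deleting vertices one at a time from $\Sigma_c\cup\Sigma_g$ for as long as the remaining set is still a vertex cover. Since vertex covers are closed under supersets and the graph is finite, this process terminates in a minimal cover contained in $\Sigma_c\cup\Sigma_g$.

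The main step is the covering property. Fix an edge $\{x,y\}$ of $G_{\pi_c,\pi_g}$, so $x$ and $y$ appear in opposite relative orders in $\pi_c$ and $\pi_g$. Suppose for contradiction that $x,y\notin \Sigma_c\cup\Sigma_g$.

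By the first condition of Definition~\ref{def:fitting_coloring}, $\sigma$ is obtained from $\pi_c$ by moving only the symbols of $\Sigma_c$. Hence the symbols of $\Sigma\setminus\Sigma_c$ form a common subsequence of $\pi_c$ and $\sigma$, and they keep their relative order. The key fact used here is that moving a symbol $z$ never changes the relative order of two symbols both different from $z$. Since $x,y\notin\Sigma_c$, their relative order in $\sigma$ equals that in $\pi_c$. By the symmetric argument with $\Sigma_g$, their order in $\sigma$ also equals that in $\pi_g$. Thus $x,y$ have the same order in $\pi_c$ and $\pi_g$, contradicting that $\{x,y\}$ is an edge. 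So every edge has an endpoint in $\Sigma_c\cup\Sigma_g$.

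The argument does not actually need consistency or the colorings beyond the existence of witnesses. I expect the only delicate point to be justifying rigorously that non-moved symbols preserve their relative order. I would handle this by noting that a sequence of moves of symbols in $\Sigma_c$ restricts, on $\Sigma\setminus\Sigma_c$, to the identity ordering. If the sets $\Sigma_c,\Sigma_g$ are instead read as complements of a longest common subsequence, the same conclusion follows, because the non-moved symbols form a common subsequence.
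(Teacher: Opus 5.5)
Your proposal is correct and follows essentially the same route as the paper: you show that $\Sigma_c \cup \Sigma_g$ is a vertex cover because symbols outside $\Sigma_c$ (resp.\ $\Sigma_g$) keep their relative order from $\pi_c$ (resp.\ $\pi_g$) to $\sigma$, and then prune greedily to an inclusion-wise minimal cover. Your remark that consistency of the colorings is never used also holds for the paper's argument, which relies only on the defining property of witnesses.
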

\begin{proof}
    We begin by showing that $\Sigma_{c} \cup \Sigma_{g}$ is a vertex cover of $G_{\pi_c, \pi_g}$.
    Consider an edge $xy$ in $G_{\pi_c, \pi_g}$.
    By definition of the permutation graphs, $x$ and $y$ appear in reversed order in $\pi_c$ and $\pi_g$.
    Since $\sigma$ is obtained from $\pi_{c}$ (resp., $\pi_{g}$) by moving symbols in $\Sigma_{c}$ (resp., $\Sigma_{g}$), at least one of $x$ and $y$ must be moved in one of the two permutations.
    Hence, at least one of $x$ and $y$ belongs to $\Sigma_{c} \cup \Sigma_{g}$, showing that $\Sigma_{c} \cup \Sigma_{g}$ is a vertex cover of $G_{\pi_c, \pi_g}$.

    The lemma follows by exhaustively removing elements from $\Sigma_{c} \cup \Sigma_{g}$ upon whose removal all edges are still covered.
\end{proof}

Such a vertex cover indeed suffices to find a suitable vertex $x$ as required for Lemma~\ref{lem:progress_from_witnesses}.

\begin{lemma}\label{lem:findXfromVertexCover}
     Let $d\in \Nbb$, $\pi_c, \pi_g \in \Scal_{n}(\Sigma)$.
    Let $\col_c$ and $\col_g$ be consistent $d$-fitting colorings of $\pi_c$ and $\pi_g$ w.r.t.\ some permutation $\sigma \in \Scal_{n}(\Sigma)$ with $\dist(\pi_c, \sigma) \le d$ and $\dist(\pi_g, \sigma) \le d$.
    Consider an inclusion-wise minimal vertex cover $X$ of $G_{\pi_c, \pi_g}$ with $X\subseteq \Sigma_c \cup \Sigma_g$ for a pair of witnesses $\Sigma_c, \Sigma_g$ for $\col_c$ and $\col_g$.
    Then in time $\Ocal(|X|)$ we can find a symbol $x\in \Sigma$ with $x\in \Sigma_c\setminus \Sigma_g$. The computation does not require the sets $\Sigma_c$ and $\Sigma_g$ as input.
\end{lemma}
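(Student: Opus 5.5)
The plan is to read the required symbol directly off the colorings, restricted to $X$. By \Cref{def:consistent_coloring}, every symbol $y \in \Sigma_c \cup \Sigma_g$ satisfies two equivalences: $\col_c(y) = \red$ if and only if $y \in \Sigma_c$, and $\col_g(y) = \red$ if and only if $y \in \Sigma_g$. Since $X \subseteq \Sigma_c \cup \Sigma_g$, these equivalences hold for every $y \in X$. Consequently, for $y \in X$ we have $y \in \Sigma_c \setminus \Sigma_g$ exactly when $\col_c(y) = \red$ and $\col_g(y) = \blue$.

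The algorithm scans $X$ and returns any $x \in X$ with $\col_c(x)=\red$ and $\col_g(x)=\blue$. It takes $\Ocal(|X|)$ time and uses only $X$, $\col_c$ and $\col_g$, never $\Sigma_c$ or $\Sigma_g$. Correctness of the output is immediate from the equivalences above. What remains is to show that such an $x$ exists, i.e., that $X \cap (\Sigma_c \setminus \Sigma_g) \neq \emptyset$.

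For existence, suppose instead that $X \cap (\Sigma_c\setminus\Sigma_g)=\emptyset$. Because $X \subseteq \Sigma_c \cup \Sigma_g$, this forces $X \subseteq \Sigma_g$, so $|X| \le |\Sigma_g| = \dist(\pi_g,\sigma) \le d$. Since $X$ is a vertex cover of $G_{\pi_c,\pi_g}$, no pair of symbols outside $X$ is ordered differently in $\pi_c$ and $\pi_g$. Hence $\pi_c$ and $\pi_g$ restricted to $\Sigma\setminus X$ are the same sequence, and this sequence is a common subsequence of length $n-|X|$. It follows that $\dist(\pi_c,\pi_g)\le |X| \le d$.

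The main obstacle is that the statement as written does not by itself exclude $\dist(\pi_c,\pi_g) \le d$. For instance, if $\pi_c=\sigma$ then $\Sigma_c=\emptyset$ and no valid $x$ exists at all. So I would invoke the standing assumption of the surrounding algorithm: the guide $\pi_g$ satisfies $\dist(\pi_c,\pi_g)\ge d+1$, being at distance at least $d+1$ from all candidates. Under this assumption the bound $\dist(\pi_c,\pi_g)\le d$ is a contradiction, which establishes existence. If one prefers not to use the guide property, the same argument shows a weaker dichotomy: either the scan finds a valid $x$, or $\dist(\pi_c,\pi_g)\le d$, and the caller can detect the latter case in advance.
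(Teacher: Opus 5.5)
Your proposal is correct and follows the paper's proof essentially step for step. The paper also observes that $\pi_g$ can be turned into $\pi_c$ by moving only the symbols of $X$, so $|X|>d\ge|\Sigma_g|$ forces some $x\in X$ to lie in $\Sigma_c\setminus\Sigma_g$; it then returns any $x\in X$ with $\col_c(x)=\red$ and $\col_g(x)=\blue$. You are also right that the argument needs $\dist(\pi_c,\pi_g)>d$, which is missing from the statement. The paper's proof uses it without stating it, taking it from the setting of \Cref{lem:center_dk_fpt_guide}, exactly as you do.
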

\begin{proof}
    Observe that $\pi_g$ can be transformed into $\pi_c$ by deleting and reinserting the symbols in $X$, so $\dist(\pi_g, \pi_c) > d$ implies that $|X| > d$.  
    As $\dist(\pi_g, \sigma) = |\Sigma_g| \le d$, we have that there is a symbol $x\in X$ such that $x\in \Sigma_c \setminus \Sigma_g$.
    As $\col_g$ and $\col_c$ are consistent $d$-fitting, we have that $\col_c(x)=\red$ and $\col_g(x)=\blue$.
    It suffices to consider any such colored symbol $x'$, as $\col_g(x')=\blue$ implies $x'\notin \Sigma_g$ and thus, by $X\subseteq \Sigma_c \cup \Sigma_g$ we have $x'\in \Sigma_c\setminus \Sigma_g$.
\end{proof}
With this, we are ready to forge our central tool:

\begin{lemma}
    \label{lem:center_dk_fpt_guide}
    Let $d\in \Nbb$, and $\pi_g, \pi_c \in \Scal_{n}(\Sigma)$ be two permutations with $d < \dist(\pi_g, \pi_c) \leq 2d$. 
    There exists a set of permutations $P \subseteq \Scal_{n}(\Sigma)$ satisfying
    \begin{itemize}
        \item the size of $P$ is bounded by $|P| \le 2^{\Ocal(d^{2})} \log n$; and
        \item for each permutation $\sigma \in \Scal_{n}(\Sigma)$ with $\dist(\pi_g, \sigma) \leq d$ and $\dist(\pi_c, \sigma) \leq d$, $P$ contains a permutation $\tilde{\pi} \in P$ that is one step closer to $\sigma$ than $\pi_{c}$, i.e., $\dist(\tilde{\pi}, \sigma) = \dist(\pi_c, \sigma) -1$.
    \end{itemize}
    Furthermore, such a set can be computed in $2^{\Ocal(d^{2})} n \log n$ time.
\end{lemma}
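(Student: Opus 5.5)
The plan is to assemble the set $P$ from the three preceding lemmas, one branch per coloring pair. First, apply \Cref{lem:fitting_candidates} to $\pi_c$ and $\pi_g$. This yields a family $\Ccal$ of at most $2^{\Ocal(d^2)}\log n$ coloring pairs, computable in $2^{\Ocal(d^2)} n\log n$ time.

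For each pair $(\col_c,\col_g)\in\Ccal$, process the colorings as follows.
\begin{enumerate}
\item Normalize both colorings with \Cref{obs:blocks}, so that every block contains at most $d$ red symbols. This recoloring only turns whole blocks with more than $d$ red symbols blue. Such a block cannot contain a witness symbol, since all red symbols in a block share membership in a witness and a witness has at most $d$ elements. Hence the recoloring preserves both $d$-fittingness and consistency with respect to the same witnesses. I will check this preservation explicitly.
\item Select a symbol $x$ with $\col_c(x)=\red$ and $\col_g(x)=\blue$, as in the proof of \Cref{lem:findXfromVertexCover}. The algorithm does not need the cover $X$ itself. The lemma only guarantees that such a symbol exists, and that any symbol colored this way lies in $\Sigma_c\setminus\Sigma_g$, given that it lies in $\Sigma_c\cup\Sigma_g$.
\end{enumerate}

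The main obstacle is that an arbitrary red/blue symbol need not lie in $\Sigma_c\cup\Sigma_g$. To handle this, I restrict attention to symbols of $G_{\pi_c,\pi_g}$ that are incident to an edge whose other endpoint is blue in both colorings. Consider an edge $xy$ with $y$ blue in both $\col_c$ and $\col_g$. Then $y\notin\Sigma_c\cup\Sigma_g$, so the vertex cover property of $\Sigma_c\cup\Sigma_g$ from \Cref{lem:exists_vertexcover_crossing} forces $x\in\Sigma_c\cup\Sigma_g$.

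It remains to show that some $x\in\Sigma_c\setminus\Sigma_g$ has such an edge. This follows from the minimal cover $X\subseteq\Sigma_c\cup\Sigma_g$ having size greater than $d$, as argued in \Cref{lem:findXfromVertexCover}. Take $x\in X\cap(\Sigma_c\setminus\Sigma_g)$. By minimality, $x$ has a private edge $xy$ with $y\notin X$. I must still argue that $y$ can be chosen outside $\Sigma_c\cup\Sigma_g$, for instance by choosing $X$ to minimize $|X\cap(\Sigma_c\cup\Sigma_g)|$ with care. If this fails, I will instead branch over the at most $2^{2d}$ subsets of the first $2d+1$ red/blue candidates, which adds only a $2^{\Ocal(d)}$ factor. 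I expect this branching to be the delicate point.

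Then invoke \Cref{lem:progress_from_witnesses} with $\pi_c,\pi_g,\col_c,\col_g,x$. It requires $\dist(\pi_c,\sigma)\ge1$, which holds because $\dist(\pi_c,\pi_g)>d\ge\dist(\pi_g,\sigma)$. The lemma produces at most $6d+2$ permutations without needing the witnesses. Let $P$ be the union of these sets over all pairs in $\Ccal$ (and any extra branches).

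For correctness, fix a $\sigma$ with both distances at most $d$. Then $\Ccal$ contains a consistent $d$-fitting pair for $\sigma$, and the corresponding branch yields some $\tilde\pi$ with $\dist(\tilde\pi,\sigma)=\dist(\pi_c,\sigma)-1$. For the bounds, $|P|\le(6d+2)2^{\Ocal(d^2)}\log n=2^{\Ocal(d^2)}\log n$. Each branch costs $\Ocal(dn)$ time plus $\Ocal(n\log n)$ for computing the permutation graph locally, giving total time $2^{\Ocal(d^2)}n\log n$.
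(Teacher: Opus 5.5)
\textbf{There is a genuine gap.} The step that fails is the existence claim your filter depends on. Restricting to symbols with a $G_{\pi_c,\pi_g}$-neighbour that is blue in both colourings is sound, but nothing guarantees that some $x\in\Sigma_c\setminus\Sigma_g$ has such a neighbour. This already fails when the witnesses are unique.

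Take $d=2$ and
\[
\sigma = a\,x_1\,x_2\,w_1\,w_2\,b\,b'\,b'', \qquad
\pi_c = a\,w_1\,w_2\,x_2\,x_1\,b\,b'\,b'', \qquad
\pi_g = a\,x_1\,x_2\,b\,b'\,b''\,w_1\,w_2 .
\]
\begin{itemize}
\item Each $x_i$ is inverted with $x_{3-i}$ and with both $w_j$ between $\pi_c$ and $\sigma$. Each $w_j$ is inverted with $b,b',b''$ between $\pi_g$ and $\sigma$.
\item Hence $\dist(\pi_c,\sigma)=\dist(\pi_g,\sigma)=2$, with unique witnesses $\Sigma_c=\{x_1,x_2\}$ and $\Sigma_g=\{w_1,w_2\}$.
\item Moreover $\dist(\pi_c,\pi_g)=3\in(d,2d]$.
\item In $G_{\pi_c,\pi_g}$ the neighbours of $x_1$ are exactly $x_2,w_1,w_2$, and symmetrically for $x_2$. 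All of them lie in $\Sigma_c\cup\Sigma_g$.
\end{itemize}
So under the consistent colouring pair for this $\sigma$, every neighbour of $x_i$ is red in one of the two colourings. By your own argument, any symbol surviving the filter would lie in $\Sigma_c\setminus\Sigma_g=\{x_1,x_2\}$. Therefore the filter returns nothing for the correct pair.

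The private-edge idea breaks in the same place. The minimal covers inside $\Sigma_c\cup\Sigma_g$ are $\{w_1,w_2,x_i\}$, and the private edge of $x_i$ is $x_1x_2$. Since the witnesses are unique, re-choosing them cannot help.

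Your fallback does not repair this either. Colourings are unconstrained at distance more than $3d$ from the witnesses. So there can be $\Theta(n)$ symbols that are red in $\col_c$ and blue in $\col_g$. Nothing places a symbol of $\Sigma_c$ among any fixed $2d+1$ of them.

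\textbf{What is missing is the vertex cover you set aside.}
\begin{itemize}
\item Enumerate, by the standard bounded search tree, all inclusion-wise minimal vertex covers of $G_{\pi_c,\pi_g}$ of size at most $2d$. There are at most $4^d$ of them, and each is reached by always branching into an endpoint that lies in it.
\item For each such cover, apply Lemma~\ref{lem:findXfromVertexCover}: pick a symbol inside the cover that is red in $\col_c$ and blue in $\col_g$.
\item Because $|\Sigma_c\cup\Sigma_g|\le 2d$, the cover $X\subseteq\Sigma_c\cup\Sigma_g$ from Lemma~\ref{lem:exists_vertexcover_crossing} is among those listed. For it, $|X|\ge\dist(\pi_c,\pi_g)>d\ge|\Sigma_g|$ forces the chosen symbol into $\Sigma_c\setminus\Sigma_g$.
\end{itemize}
This costs only a factor $4^d$, giving $|P|\le|\Ccal|\cdot 4^d(6d+2)=2^{\Ocal(d^2)}\log n$.

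The rest of your plan agrees with the paper's proof. That includes the normalisation via Observation~\ref{obs:blocks} with your preservation check, the use of Lemma~\ref{lem:progress_from_witnesses} with $\dist(\pi_c,\sigma)\ge 1$, and the counting.
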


\begin{proof}
    We compute a collection $\Ccal$ of pairs of $2$-colorings of $\Sigma$ according to Lemma~\ref{lem:fitting_candidates}.
    For each $\sigma \in \Scal_{n}(\Sigma)$ with $\dist(\pi_g, \sigma) \leq d$ and $\dist(\pi_c, \sigma) \leq d$, there exists a consistent $d$-fitting pair of colorings $(\col_c,\col_g) \in \Ccal$ w.r.t.\ $\sigma$.
    Thus, for each pair in $\Ccal$ we perform the following procedure, adding permutations to $P$ as we go along.

    Based on Observation~\ref{obs:blocks}, we can assume that both $\col_c$ and $\col_g$ have at most $d$ red symbols in each block (if not, we recolor some blocks blue accordingly).
    We list all inclusion-wise minimal vertex covers of $G_{\pi_c, \pi_g}$ that have size at most $2d$. 
    We can do so by a simple and well-known branching technique: Start with an empty cover, iteratively take any uncovered edge $vw$ and create two branches where you add $v$ or $w$ to the vertex cover, respectively, and repeat this procedure $2d$ times (or until all edges are covered). This list of at most $2^{2d}=4^d$ covers may contain some that are not inclusion-wise minimal, but these can be filtered out in time  $\Ocal(4^d \cdot |V(G_{\pi_c, \pi_g})|)=\Ocal(4^d n)$.

    By Lemma~\ref{lem:exists_vertexcover_crossing} and as every witness for $\col_c$ or $\col_g$ has size at most $d$, at least one of the listed vertex covers consists only of vertices in $\Sigma_c \cup \Sigma_g$, where $\Sigma_c$ and $\Sigma_g$ are witnesses for the respective colorings. This holds true for \emph{every} permutation $\sigma$ for which the current colorings are consistent $d$-fitting.
    Using Lemma~\ref{lem:findXfromVertexCover} on each of the vertex covers, we obtain a set of at most $4^d$ symbols, and for every suitable $\sigma$ at least one of these symbols is included in $\Sigma_c \setminus \Sigma_g$ for respective witnesses. 
    We note that applying Lemma~\ref{lem:findXfromVertexCover} to vertex covers containing vertices which do not stem from union of witnesses might yield incorrect symbols, but for each of the vertex covers we still add at most one symbol.
    We apply Lemma~\ref{lem:progress_from_witnesses} to each of the $4^d$ symbols, thereby producing a set of $4^d \cdot (6d+2)$ permutations, which we all add to $P$. 
    Note that the lemmas guarantee that for every $\sigma$ for which the colorings are consistent $d$-fitting, we added at least one permutation which is one step closer to $\sigma$ than the current candidate $\pi_c$.

    The final size of $P$ is bounded by
    \begin{equation*}
        |P| \le |\Ccal| \cdot 4^d (6d + 2) \le 2^{\Ocal(d^{2})} \log n.
    \end{equation*}

    The runtime estimate is based on combining the size of $\Ccal$ from Lemma~\ref{lem:fitting_candidates} with the time to find the unique vertex cover respecting a given pair of colorings.
    All other steps, like processing the blocks, are asymptotically dominated by that time.
    There are at most $2^{\Ocal(d^{2})} \log n$ pairs of colorings to enumerate.
    For each pair, the branching procedure on the vertex covers takes 
    $
        \Ocal\bigl(
                4^d \cdot (|E(G_{\pi_c, \pi_g})| + |V(G_{\pi_c, \pi_g})| ) 
             \bigr)
        = \Ocal(4^d dn)
    $ 
    time as $|V(G_{\pi_c, \pi_g})|=n$ and the existence of a vertex cover of size $2d$ implies that $|E(G_{\pi_c, \pi_g})|\le 2dn$.
    Thus, the total time to compute $P$ is in 
    \begin{equation*}
        2^{\Ocal(d^{2})} \log n \cdot \Ocal(4^{d} dn) = 2^{\Ocal(d^{2})} n \log n.\qedhere
    \end{equation*}
\end{proof}

Now we are ready to prove \Cref{thm:center_dk_fpt}. 

\centerFPTdk*

\begin{proof}
    We may assume without loss of generality that any \yesinstance admits a solution with exactly $k$ centers.
    Indeed, if a solution exists with fewer than $k$ centers, we can always add arbitrary permutations as centers until we reach exactly $k$ centers.
    In addition, we exclude the trivial \yesinstances where $k > m$.
    Now, consider a hypothetical solution $S^{*} = \{\sigma_1^{*}, \sigma_2^{*}, \ldots, \sigma_k^{*}\}$ consisting of $k$ centers.
    For each $i \in [k]$, let the $i$th cluster be the set of permutations in $\Pi$ with distance at most $d$ to $\sigma_{i}^{*}$.
    If a permutation would fit into multiple clusters, we assign it to only one of them arbitrarily. 

    We describe our algorithm as a branching procedure.
    It maintains a candidate set $S \subseteq \Scal_{n}(\Sigma)$ of size at most $k$ along with a budget $b_i$ for each candidate $\sigma_i \in S$.
    Here, $S$ is intended to evolve into a solution to the \ucenterfull instance, and each budget $b_i$ indicates how many modifications to $\sigma_i$ remain (or have been used so far).
    In each branch, the algorithm checks whether the current candidate set $S$ covers all permutations in $\Pi$ within distance $d$.
    The algorithm halts and accepts the instance if and only if at least one branch finds a valid solution.
    We now describe the procedure in detail; see Algorithm~\ref{alg:fpt_center_dk} for a high-level overview.

    \begin{algorithm}[t!]
    \SetAlgoLined
    \caption{Solving \ucenterfull for radius $d$ on permutations $\Pi$.}\label{alg:fpt_center_dk}
    \KwIn{An alphabet $\Sigma$ of size $n$, a set of permutations $\Pi$ if size $m$, integers $k$ and $d$.}
    \KwOut{Decide the whether there exist $S \subseteq \Scal_{n}(\Sigma)$ of size $k$ such that for each $\pi \in \Pi$, there exists $\sigma \in S$ with $\dist(\pi, \sigma) \le d$.}
    
    $\sigma_{1} \gets \pi_{1} \in \Pi$; \tcp{Initialize the first candidate center}
    $b_{1} \gets d$; \tcp{Initialize the budget of $\sigma_{1}$}
    \Return $\textsc{Alg}\big(\Sigma, \Pi, k, d, \{\sigma_{1}\}, \{b_{1}\}\big)$\;
    \SetKwProg{Fn}{Procedure}{}{}
    \Fn{$\textsc{Alg}\big(\Sigma$, $\Pi$, $k$, $d$, $S=\{\sigma_1, \ldots, \sigma_{|S|}\}$, $\{b_i\}_{i = 1}^{|S|}\big)$}{
        \lIf{for every $\pi \in \Pi$ there is $\sigma_{i} \in S$ such that $\dist(\pi, \sigma_{i}) \le d$}{\textbf{accept instance}}
        Pick arbitrary $\pi \in \Pi$ with $\dist(\pi, \sigma_{i}) > d$ for all $\sigma_{i} \in S$\;
        \tcp{Branching on how a permutation not yet covered by $S$ is covered}
        \For {each $\sigma_{i} \in \{\sigma_i \in S : b_i > 0, \dist(\pi, \sigma_i) \le 2d\}$}
        {
            \tcp{The case that $\pi$ is in the $i$th cluster.}
            $P \gets $ invoke~\cref{lem:center_dk_fpt_guide} with $(d, \pi_g, \pi_c)$, where $\pi_g \gets \pi$ and $\pi_c \gets \sigma_{i}$\;
            \For {each $\sigma_{i}'\in P$}
            {
                $\sigma_{i} \gets \sigma_{i}'$; \tcp{Modify the candidate $\sigma_{i}$}
                $b_{i} \gets b_{i} -1$; \tcp{Update the budget of $\sigma_{i}$}
                 \textbf{accept instance} if $\textsc{Alg}\big(\Sigma, \Pi, k, d, S, \{b_i\}_{i = 1}^{|S|}\big)$ accepts\;
            }
        }
        \If {$|S| < k$}
        {   \tcp{The case that $\pi$ belongs to a new cluster.}
            $\sigma_{j} \gets \pi$, where $j = |S| + 1$; \tcp{Add $\pi$ as a new candidate center}
            $b_{j} \gets d$; \tcp{Initialize the budget of $\pi$.}
            \textbf{accept instance} if $\textsc{Alg}\big(\Sigma, \Pi, k, d, S \cup \{ \sigma_{j} \}, \{b_i\}_{i = 1}^{j}\big)$ accepts; 
        }
        \textbf{reject branch};
    }
    \end{algorithm}

    In the course of the algorithm, we one by one initialize each candidate $\sigma_i$ as a permutation from $\Pi$, and evolve it into the $i$th center.
    Initially, pick the permutation $\pi_{1} \in \Pi$ and set the first candidate $\sigma_1$ to equal this permutation: $\sigma_1 \coloneqq \pi$.
    Then, the algorithm adds it as the first candidate to $S$ and initializes its budget as $b_1 \coloneqq d$.
    
    Next, the algorithm branches on a permutation $\pi \in \Pi$ that has distance more than $d$ to each of the candidates in $S$. 
    If there is none, then the current candidates form a solution, and we accept the instance.
    Otherwise, the permutation $\pi$ either belongs to one of the existing clusters or to a new cluster.
    \begin{itemize}
        \item \textbf{Existing cluster:} For each candidate $\sigma_i \in S$ with remaining budget $b_i > 0$ and distance at most $2d$ to $\pi$, we consider the case where $\pi$ belongs to the $i$th cluster.
        We invoke \cref{lem:center_dk_fpt_guide} with $d$, guide $\pi_g = \pi$, and $\pi_c = \sigma_i$ to obtain the set $P$ of permutations.
        For each permutation $\sigma_i' \in P$, we create a new branch where we replace $\sigma_i$ by $\sigma_i'$ and decrease its budget by one.
        \item \textbf{New cluster:} If $|S| < k$, we create a branch where $\pi$ belongs to a new cluster.
        Specifically, we let $\sigma_{|S| + 1} = \pi$ as a new candidate, initialize its budget as $b_{|S| + 1} = d$, and add it to $S$.
    \end{itemize}
    If none of the two above cases applies, we reject the branch. We accept the instance if and only if at least one branch accepts it.
    If none of the branches accepts the instance, we reject the branch.

    Observe that if the algorithm accepts an instance, it indeed finds a solution.
    Conversely, consider a \yesinstance of \ucenterfull witnessed by some solution $S^{*} = \{\sigma_1^{*}, \sigma_2^{*}, \ldots, \sigma_k^{*}\}$.
    When branching on a permutation $\pi \in \Pi$ not covered by the current candidates $S$, there exists a branch where $\pi$ is assigned to the cluster to which it belongs in $S^{*}$.
    Moreover, when $\pi$ is assigned to the $i$th cluster from the current candidates $S$, by \cref{lem:center_dk_fpt_guide}, there is a permutation $\sigma_i' \in P$ that is one step closer to $\sigma_i^{*}$ than $\sigma_i$.
    Thus, in the respective branch, we replace $\sigma_i$ by $\sigma_i'$, moving it one step closer to $\sigma_i^{*}$.
    Since each candidate starts with the distance at most $d$ to its respective center in $S^{*}$, after receiving at most $d$ updates, it will have converged to that center.
    From this point on, it will not receive any more updates since there are no permutations assigned to that cluster with a distance of more than $d$.
    Therefore, there is a branch in which the algorithm finds the solution $S^{*}$ and accepts the instance.

    Finally, we analyze the running time of the algorithm.
    Recall that the input of the branching procedure contains a set of candidates $S$ along with their budgets $\{ b_{i} : \sigma_i \in S \}$.
    Define the measure of the current input of a branch as
    \begin{equation*}
        \mu \coloneqq (k - |S|)(d + 1) + \sum_{i=1}^{|S|} b_i.
    \end{equation*}
    and denote by $T(\mu)$ the worst-case running time of the algorithm on inputs with measure $\mu$.

    Observe that each time we add a new candidate to $S$ or modify an existing candidate in $S$, the measure $\mu$ decreases by one.
    On the one hand, when we add a new candidate to $S$, the size of $S$ increases by one while the budget of the new candidate is initialized to $d$.
    Thus, the measure decreases by $(d + 1) - d = 1$ and there is at most one such branch.
    On the other hand, when we modify a candidate to a permutation from $P$, its budget decreases by one while the size of $S$ remains unchanged.
    Thus, the measure decreases by $1$ as well, and there are at most $|S| \cdot |P| \leq k \cdot 2^{\Ocal(d^{2})} \log n$ such branches by \cref{lem:center_dk_fpt_guide}.
    Moreover, computing $P$ takes time in $2^{\Ocal(d^{2})} n \log n$, and the remaining time spend on this iteration is dominated by the computation of all Ulam distances to the current candidates (achievable in $\Ocal(kmn \log n)$ time).   
    Therefore, we have the recurrence
    \begin{equation*}
        T(\mu) \leq 2^{\Ocal(d^{2})} n \log n + \Ocal(kmn \log n) + \big(1 + k \cdot 2^{\Ocal(d^{2})} \log n\big) \cdot T(\mu - 1).
    \end{equation*}
    Notice that the measure satisfies $\mu \leq k(d + 1)$ and the algorithm halts when $\mu = 0$.
    Solving the recurrence yields
    \begin{equation*}
        \begin{aligned}
            T(k(d + 1)) 
            ={}& \big(1 + k \cdot 2^{\Ocal(d^{2})} \log n\big)^{k(d + 1)} \cdot (2^{\Ocal(d^{2})} n \log n + kmn \log n) \\
            ={}& 2^{\Ocal(d^{3}k)} \cdot k^{\Ocal(dk)} \cdot (\log n)^{k(d+1)} \cdot \Ocal(mn\log n).
        \end{aligned}
    \end{equation*}
    Observe that
    \begin{equation*}
        (\log n)^{k(d+1)} 
        = 2^{k(d+1)\log\log n} 
        \le 2^{k^2(d+1)^2 + (\log\log n)^2} 
        = 2^{k^2(d+1)^2} \cdot n^{(\log \log n)^{2} / \log n},
    \end{equation*}
    which bounds the total running time by
    \begin{equation*}
        \begin{aligned}
        &2^{\Ocal(d^{3} k + d^{2}k^{2})} \cdot k^{\Ocal(dk)} \cdot \Ocal\big(mn^{1 + (\log \log n)^{2} / \log n} \log n\big) 
        = 2^{\Ocal(d^{2} k(d + k))}  \cdot mn^{1 + o(1)}.
        \end{aligned}
    \end{equation*}
    This shows that \ucenterfull is in $\FPT$ by $k+d$. \qedhere

\end{proof}

\subsection{Lower Bound on Kernelization by \texorpdfstring{$k+d$}{k+d}}\label{sec:nokernel}

We next prove that the fixed-parameter tractability of \ulamdistance with respect to $k + d$ does not extend to a polynomial kernel.
To this end, we provide a reduction from a problem on binary strings.

A crucial tool is a \emph{distance-preserving} translation $f \colon \{0,1\}^\ell \to \Scal_{2\ell}([2\ell])$ between binary strings and permutations such that the Hamming distances between two strings equal the Ulam distances between their corresponding permutations.
Let a binary string be given as $s = s_1 s_2 \cdots s_\ell$ with $s_i \in \{0,1\}$.
Intuitively, we employ two symbols per bit; the order of these symbols in the permutation then reflects the value of that bit.
Formally, we define $f(s) = \highlight{x_1\ y_1}\ \highlight{x_2\ y_2} \cdots \highlight{x_\ell\ y_\ell},$
where
\begin{equation*}
    (x_i, y_i) =
    \begin{cases}
        (2i-1,\; 2i), & \text{if } s_i = 0; \\[6pt]
        (2i,\; 2i-1), & \text{if } s_i = 1.
    \end{cases}  
\end{equation*}
For example, $f(00101) = \highlight{1\ 2}\ \highlight{3\ 4}\ \highlight{6\ 5}\ \highlight{7\ 8}\ \highlight{10\ 9}$. 

It is straightforward to see that $f(\cdot)$ is a bijection between the set of binary strings of length $\ell$ and permutations over $\Sigma = [2\ell]$ respecting the \emph{pair scheme}.
We say that a permutation $\pi$ over $[2\ell]$ \textit{respects the pair scheme} if and only if it can be written as
\begin{equation*}
    \pi = \highlight{x_1 \ y_1} \ \highlight{x_2 \ y_2} \cdots \highlight{x_\ell \ y_\ell},
\end{equation*} 
where for each $i \in [\ell]$, $(x_i, y_i) = (2i-1, 2i)$ or $(2i, 2i-1)$.
We now argue that $f$ is distance-preserving.

\begin{lemma}
    \label{lem:map_from_bin_strings}
    Let $s, t \in \{0,1\}^\ell$ be binary strings and $f$ be as defined above.
    Then the Hamming distance between $s$ and $t$ satisfies
    $\distH (s, t) = \dist(f(s), f(t))$.
\end{lemma}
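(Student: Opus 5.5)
I will prove the two inequalities $\dist(f(s),f(t)) \le \distH(s,t)$ and $\dist(f(s),f(t)) \ge \distH(s,t)$ separately, working throughout with the identity $\dist(\pi,\sigma) = 2\ell - \LCS(\pi,\sigma)$. Let $D = \{ i \in [\ell] : s_i \neq t_i\}$, so $|D| = \distH(s,t)$.

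For the upper bound, I exhibit a long common subsequence. For every $i \notin D$, the pair $x_i\,y_i$ is identical in $f(s)$ and $f(t)$. For every $i \in D$, the two permutations contain the pair $\{2i-1,2i\}$ in opposite orders, so keeping just one of the two symbols (say $2i-1$) leaves a common subsequence of that pair. Concatenating these pieces block by block gives a common subsequence, because both permutations arrange the blocks $\{2i-1,2i\}$ in the same order $i=1,\dots,\ell$. Its length is $2\ell - |D|$, so $\dist(f(s),f(t)) \le |D|$. Equivalently, one move per differing position suffices: move $2i$ to the other side of $2i-1$.

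For the lower bound, I take an arbitrary common subsequence $C$ of $f(s)$ and $f(t)$ and show $|C| \le 2\ell - |D|$. In both permutations, every symbol of block $i$ precedes every symbol of block $j$ whenever $i<j$, so the only possible order conflicts are inside a single block. For $i \in D$, the symbols $2i-1$ and $2i$ appear in opposite relative order in the two permutations, so $C$ cannot contain both of them. Hence $C$ contains at most one symbol from each block indexed by $D$ and at most two from every other block, which gives $|C| \le 2(\ell-|D|) + |D| = 2\ell - |D|$. Combining the two bounds yields equality.

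I do not expect any real obstacle here. The only point requiring care is the lower bound, and it is immediate once one notes that order conflicts between $f(s)$ and $f(t)$ occur only within a single pair: the edges of the permutation graph $G_{f(s),f(t)}$ form a perfect matching on the differing pairs, and each of those edges forces the deletion of at least one symbol.
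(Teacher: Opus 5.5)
Your proof is correct and follows the paper's argument essentially verbatim. You obtain a common subsequence of length $2\ell - \distH(s,t)$ by keeping both symbols of each agreeing pair and one symbol of each disagreeing pair, and you bound every common subsequence by noting that it contains at most one of $2i-1, 2i$ for each disagreeing index $i$. Your remarks about cross-block order and the matching structure of the permutation graph are accurate, but the lower bound does not need them.
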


\begin{proof}
    We first show that $\distH (s, t) \geq \dist(f(s), f(t))$.
    Recall that the Ulam distance between the two permutations $f(s)$ and $f(t)$ is defined as the length of the permutations minus the length $\LCS(f(s),f(t))$ of a longest common subsequence.
    To construct a common subsequence between $f(s)$ and $f(t)$, observe that $1\ 3\ 5 \cdots 2\ell-1$ is a common subsequence of $f(s)$ and $f(t)$.
    Next, consider each index $i \in [\ell]$ such that the $s_i = t_i$.
    The common subsequence can be extended by inserting the symbol $2i$ right before (if $s_i=t_i=1$) or after (if $s_i=t_i=0$) the symbol $2i-1$.
    Thus, we have 
    \begin{equation*}
        \dist(f(s), f(t)) = |\Sigma| - \LCS(f(s), f(t)) \leq 2\ell - (\ell + \ell - \distH (s,t)) = \distH (s,t).
    \end{equation*}
    
    For the other direction, observe that for each index $i \in [\ell]$ such that $s_i \neq t_i$, any longest common subsequence of $f(s)$ and $f(t)$ can only pick either $2i-1$ or $2i$.
    Thus, we have 
    \begin{equation*}
        \LCS(f(s),f(t)) 
        \leq 2\ell - \distH (s,t),
    \end{equation*}
    and
    \begin{equation*}
        \dist(f(s),f(t)) 
        = 2\ell - \LCS(f(s),f(t)) \geq \distH (s,t).
    \end{equation*}
    Combining both inequalities, we have the desired statement.
\end{proof}

 We are now ready to state our reduction from the \cstringfull problem. %

  \defproblema{\cstringfull}
    {An alphabet $\Sigma$ of size $n$, a set $\Scal = \{s_1,s_2,\ldots, s_m\}$ of strings of length $\ell$ over $\Sigma$, and an integer $d$.}
    {Does there exists a string $s$ length $\ell$ such that $\distH (s, s_i) \leq d$ for all $i \in [m]$,
    where $\distH $ denotes the Hamming distance?
    \vspace{2mm}}

\begin{lemma}
    \label{lem:red_string_to_center}
    There is a polynomial parameterized transformation (PPT) from \cstringfull to $\ucenterfull$, for parameter $d$ in each problem.
\end{lemma}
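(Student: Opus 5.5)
The plan is to reduce from the binary version of \cstringfull, which is the version that the kernelization lower bound of Basavaraju \textit{et al.}~\cite{BASAVARAJU201821} addresses (parameter $d$, alphabet $\{0,1\}$). Given an instance $(\Scal=\{s_1,\dots,s_m\}, d)$ with strings of length $\ell$, we output the \ucenterfull instance with alphabet $[2\ell]$, permutation set $\Pi=\{f(s_1),\dots,f(s_m)\}$, $k=1$, and the same $d$. Here $f$ is the distance-preserving map from \Cref{lem:map_from_bin_strings}. The construction clearly runs in polynomial time. The new parameter $k+d=d+1$ is linear in the old one, so this is a polynomial parameter transformation once correctness is shown.

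For the forward direction, let $s$ be a closest string with $\distH(s,s_j)\le d$ for all $j$. By \Cref{lem:map_from_bin_strings}, the single center $\sigma=f(s)$ satisfies $\dist(\sigma,f(s_j))=\distH(s,s_j)\le d$ for every $j$.

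The main obstacle is the backward direction. A center $\sigma$ need not respect the pair scheme, so it does not directly correspond to a string. I would project $\sigma$ to a string. Define $s\in\{0,1\}^\ell$ by setting $s_i=0$ if $2i-1$ precedes $2i$ in $\sigma$, and $s_i=1$ otherwise. Then I would show that $\distH(s,s_j)\le\dist(\sigma,f(s_j))$ for every $j$.

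To prove this, fix $j$ and a longest common subsequence $L$ of $\sigma$ and $f(s_j)$, so $|L|=2\ell-\dist(\sigma,f(s_j))$. Suppose that for some index $i$, $L$ contains both symbols $2i-1$ and $2i$. Then their relative order is the same in $\sigma$ and in $f(s_j)$, and hence $s_i=(s_j)_i$ by the definition of $f$ and of the projection. Consequently, every position $i$ with $s_i\neq (s_j)_i$ has at least one of its two symbols missing from $L$. Distinct positions involve disjoint symbol pairs, so the number of such positions is at most $2\ell-|L|=\dist(\sigma,f(s_j))\le d$. Therefore $s$ is a closest string for $\Scal$, which completes the equivalence.

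If the lower bound one wants to pipeline with is stated for general alphabets, I would first check whether~\cite{BASAVARAJU201821} already gives it for the binary case. I expect it does, so that no extra alphabet-reduction step is needed. Pipelining the resulting PPT with that kernel lower bound then yields \Cref{thm:center_dk_no_poly_kernel}, already for $k=1$.
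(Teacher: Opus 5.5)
Your proposal is correct and follows the paper's proof. You use the same reduction $\Pi=\{f(s_i)\}$ with $k=1$ and the same $d$, and your backward direction rests on the same observation: a common subsequence keeps at most one symbol of each pair $(2i-1,2i)$ whose internal order disagrees. The only difference is cosmetic. You project the center directly to a string, whereas the paper first rounds it to a pair-scheme permutation $\pi'_c$ and then applies \Cref{lem:map_from_bin_strings}.
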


\begin{proof} 
    Let $(\Scal, d)$ be an instance of \cstringfull, where $\mathcal{S} = \{s_1, s_2, \ldots, s_m\} \subseteq \{0,1\}^\ell$.
    We construct an instance of \ucenterfull $(\Pi,k',d')$ as $\Pi = \{f(s_i) : s_i \in S\}$, $k'= 1$, and $d' = d$.
    Here, $f$ is the distance-preserving function defined above.

    Suppose $(\mathcal{S}, d)$ is a \yesinstance of \cstringfull, and let $s \in \{0,1\}^\ell$ be a string such that $\distH(s, s_i) \leq d$ for all $s_i \in \mathcal{S}$.
    By \Cref{lem:map_from_bin_strings}, we have $\dist(f(s), f(s_i)) = \distH(s, s_i) \leq d$ for all $i \in [m]$, so $\{f(s)\}$ is a solution to the \ucenterfull instance $(\Pi,k',d')$.

    For the other direction, assume $(\Pi,k',d)$ is a \yesinstance of $\ucenterfull$, witnessed by some center permutation $\pi_c$ such that $\dist(\pi_c,\pi) \leq d$ for all $\pi \in \Pi$.
    
    Next, consider the structure of the center permutation $\pi_c$.
    If $\pi_c$ already respects the pair scheme, then by \Cref{lem:map_from_bin_strings}, the string $f^{-1}(\pi_c)$ is a solution to the instance $(\Scal, d)$ of $\cstringfull$, and we are done.
    Otherwise, we construct a new permutation $\pi'_{c}$ that respects the pair scheme as follows.
    For each $i \in [\ell]$, if $2i-1$ precedes $2i$ in $\pi_c$, then we set the $i$th pair of $\pi'_{c}$ as $\highlight{2i{-}1\ 2i}$; otherwise, set it as $\highlight{2i\ 2i{-}1}$.
    The only thing left is to show that $\pi'_{c}$ is also a solution to the \ucenterfull instance $(\Pi, k', d')$.

    Define $\rho(\pi, \pi_c)$ as the number of indices $i \in [\ell]$ for which the relative order of $2i-1$ and $2i$ differs between $\pi$ and $\pi_c$. 
    Formally,
    \begin{equation*}
        \rho(\pi, \pi_c) \coloneq \big|\big\{i \in [n] : (\pi^{-1}(2i-1)-\pi_c^{-1}(2i)) \cdot (\pi^{-1}(2i-1)-\pi_c^{-1}(2i)) <0 \big\} \big|.
    \end{equation*}
    Notice that any common subsequence of $\pi \in \Pi$ and $\pi_c$ can contain at most one element from each disagreeing pair $(2i-1, 2i)$ with $i \in [\ell]$.
    By the definition of Ulam distance, we have
    \begin{equation*}
        \dist(\pi_c, \pi) \geq \rho(\pi, \pi_c).
    \end{equation*}
    Additionally, by the construction of $\pi'_{c}$, the two permutations $\pi_c$ and $\pi'_{c}$ agree on the order within every pair $(2i-1, 2i)$ with $i \in [\ell]$.
    Hence, for any $\pi \in \Pi$, it holds that
    \begin{equation*}
        \dist(\pi'_{c}, \pi) = \rho(\pi, \pi'_{c}).
    \end{equation*}
    Combining these facts, we obtain
    \begin{equation*}
        \dist(\pi'_{c}, \pi) \leq \dist(\pi_c, \pi) \leq d,
    \end{equation*}
    for all $\pi \in \Pi$.

    Finally, since $\pi'_{c}$ respects the pair scheme, \Cref{lem:map_from_bin_strings} directly implies that the string $f^{-1}(\pi'_{c})$ is a solution to the instance $(\Scal, d)$ of \cstringfull.
    The reduction runs in polynomial time and preserves $d' = d$, yielding a PPT from \cstringfull to \ucenterfull with $k' = 1$.
    This completes the proof.
\end{proof}

Under well-established complexity assumptions, \cstringfull does not admit a polynomial-sized kernel with respect to $d$ \cite{BASAVARAJU201821}. For $\ucenterfull$ with $k = 1$ the same holds by \Cref{lem:red_string_to_center}.
By extension, this implies that there is no polynomial-sized kernel for the parameter $k+d$ for arbitrary instances of \ucenterfull.

\centernokernel*

\section{The Complexity of \texorpdfstring{\umedianfull}{Ulam Metric k-Median}}\label{sec:median}
    This section studies the parameterized complexity of \umedianfull with respect to the parameters $d$ and $k + d$.
We first prove that \umedianfull is $\W[1]$-hard parameterized by $d$ but is in $\XP$ for this parameterization.
Then we show a polynomial kernel parameterized by both $k$ and $d$, implying fixed-parameter tractability for this parameterization.

\subsection{Parameterization by \texorpdfstring{$d$}{d}}

In this subsection, we first rule out fixed-parameter tractability by $d$ alone under established complexity assumptions.
We then contrast it with a simple $\XP$-algorithm for \umedianfull.

For the hardness result, we give a reduction from 
    \textsc{Multicolored Clique}.
    An instance of \textsc{Multicolored Clique} consists of a graph $G$ with a proper vertex coloring using $k'$ colors, and asks whether $G$ contains a clique of size $k'$.
    The problem is known to be $\W[1]$-hard when parameterized by $k'$~\cite{Pietrzak03}.
    For clarity, we here describe the reduction and then separately prove that it preserves \yesinstances and \noinstances.

    Let $\Ccal = \{ c_{1}, c_{2}, \ldots, c_{k'}\}$ denote the color set, let $V(G)  = \{v_{1}, v_{2}, \ldots, v_{n'}\}$ be the vertex set of $G$ and let $E(G) = \{e_{1}, e_{2}, \ldots, e_{m'}\}$ be the edge set of $G$, where $n' = |V(G)|$ and $m' = |E(G)|$.
    Without loss of generality, we assume that $k' \geq 4$, as instances with smaller values can be solved in polynomial time by brute-force.
    
    We construct an instance $(\Pi, k, d)$ of \umedianfull with $k= m' - \binom{k'}{2} + 1$.
    The idea of the reduction is to create one permutation for each edge in $G$.
    We build the instance in a way such that the existence of a multicolored clique $Q \subseteq V(G)$ allows the construction of a center permutation $\sigma_{Q}$ such that edges with both endpoints in $Q$ have small total distance to $\sigma_{Q}$, while every other edge can take itself as its center.
    Consequently, the number of centers required is $m' - \binom{|Q|}{2} + 1 = k$, and the median distance is entirely contributed by the edges inside $Q$.
    It remains to choose a suitable value for $d$.

    Formally, let $q = \binom{k'}{2}(k'-2)+1$. Furthermore, we define the alphabet as the union of the color set, some $v$-symbols representing the vertices, and some $x$-, $y$-, and $z$-symbols representing the edges:
    \begin{equation*}
        \Sigma = 
        \Ccal 
        \cup \big\{v_{j}^{i} : i \in[q], v_j \in V(G) \big\}
        \cup \big\{x_{j}^{i}, y_{j}^{i}, z_{j}^{i} : i\in[q], e_{j} \in E(G) \big\}.
    \end{equation*}
    We have $n = |\Sigma| = k' + n'q + 3m'q$.
    We first construct a base permutation $\pi_{\base} \in S_{n}(\Sigma)$ that starts with alternating between the $x$- and $y$-symbols, followed by the $v$- and the $z$-symbols, and finally the color-symbols:
    \begin{align*}
        \pi_{\base} 
        \coloneq 
        {}& \highlight{x_{1}^{1}\;y_{1}^{1} \cdots x_{1}^{q}\;y_{1}^{q}}\
        \highlight{x_{2}^{1}\;y_{2}^{1} \cdots x_{2}^{q}\;y_{2}^{q}}
        \cdots
        \highlight{x_{m'}^{1}\;y_{m'}^{1} \cdots x_{m'}^{q}\;y_{m'}^{q}}\;\\
        {}&\highlight{v_{1}^{1}\;v_{1}^{2} \cdots v_{1}^{q}}\;
        \highlight{v_{2}^{1}\;v_{2}^{2} \cdots v_{2}^{q}}
        \cdots
        \highlight{v_{n'}^{1}\;v_{n'}^{2} \cdots v_{n'}^{q}}\;\\
        {}&\highlight{z_{1}^{1}\;z_{1}^{2} \cdots z_{1}^{q}}\;
        \highlight{z_{2}^{1}\;z_{2}^{2} \cdots z_{2}^{q}}
        \cdots
        \highlight{z_{m'}^{1}\;z_{m'}^{2} \cdots z_{m'}^{q}} \\
        {}&\highlight{c_{1}\;c_{2} \cdots c_{k'}}.
    \end{align*}
    For each edge $e_{j} = v_{a}v_{b} \in E(G)$ with $a < b$, we introduce a permutation $\pi_{e_{j}}$ obtained from $\pi_{\base}$ as follows.
    Suppose $v_{a}$ has color $c_{\alpha}$ and $v_{b}$ has color $c_{\beta}$.
    Swap $x_{j}^{i}$ and $y_{j}^{i}$ for each $i\in [q]$; move $c_{\alpha}$ to the right of $v_{a}^{q}$; move $c_{\beta}$ to the right of $v_{b}^{q}$; and move the remaining color-symbols to the right of $z^{q}_{j}$ without changing their relative order.
    That is,
    \begin{align*}
        \pi_{e_{j}} \coloneq {}&
        \highlight{x_{1}^{1}\;y_{1}^{1}\cdots x_{1}^{q}\;y_{1}^{q}}
        \cdots
        \HIGHlight{y_{j}^{1}\;x_{j}^{1} \cdots y_{j}^{q}\;x_{j}^{q}}
        \cdots
        \highlight{x_{m'}^{1}\;y_{m'}^{1} \cdots x_{m'}^{q}\;y_{m'}^{q}}\\
        {}&\highlight{v_{1}^{1} \cdots v_{1}^{q}}
        \cdots
        \highlight{v_{a}^{1} \cdots v_{a}^{q}}\;
        \HIGHlight{c_{\alpha}}
        \cdots
        \highlight{v_{b}^{1} \cdots v_{b}^{q}}\;
        \HIGHlight{c_{\beta}}
        \cdots
        \highlight{v_{n'}^{1} \cdots v_{n'}^{q}}\\
        {}&\highlight{z_{1}^{1} \cdots z_{1}^{q}}
        \cdots
        \highlight{z_{j}^{1} \cdots z_{j}^{q}}\;
        \HIGHlight{\chi_{\alpha,\beta}}\;
        \highlight{z_{j+1}^{1} \cdots z_{j+1}^{q}}
        \cdots
        \highlight{z_{m'}^{1} \cdots z^{q}_{m'}},
    \end{align*}
    where $\chi_{\alpha,\beta} = \highlight{c_{1} \cdots c_{\alpha-1}\;c_{\alpha+1} \cdots c_{\beta-1}\;c_{\beta+1} \cdots c_{k'}}$.
    Finally, recall that $k = m' - \binom{k'}{2} + 1$.
    We set $d = \binom{k'}{2}(q + (k'- 2))$, and let the set of permutations consist of all permutations for edges in the graph, i.e.,
    $\Pi = \{\pi_{e_{j}} : e_{j} \in E(G)\}$.

    \begin{lemma}\label{lem:w1firstdirection}
        The stated reduction from \textsc{MultiColoredClique} to \umedianfull preserves \yesinstances.
    \end{lemma}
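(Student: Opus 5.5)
We prove the lemma directly. Given a multicolored clique, we build one new median $\sigma_Q$ for its $\binom{k'}{2}$ edges and let every other edge be its own median. It then suffices to \emph{upper bound} the Ulam distances by exhibiting explicit move sequences; no LCS lower bounds are needed in this direction.

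Let $Q=\{u_1,\dots,u_{k'}\}$ be a multicolored clique of $G$, where $u_\alpha$ has color $c_\alpha$. We define $\sigma_Q$ from $\pi_{\base}$ by removing the block $c_1\cdots c_{k'}$ at the end and inserting each $c_\alpha$ immediately to the right of $v^q_{u_\alpha}$, where $v^q_{u_\alpha}$ denotes the symbol $v^q_j$ with $v_j=u_\alpha$. All $x$-, $y$-, $v$- and $z$-symbols keep their order from $\pi_{\base}$. We take
\[
S=\{\sigma_Q\}\cup\{\pi_{e_j} : e_j\notin E(G[Q])\}.
\]
Since $Q$ is a clique of size $k'$, it spans exactly $\binom{k'}{2}$ edges, so $|S|=m'-\binom{k'}{2}+1=k$. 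Each permutation $\pi_{e_j}$ with $e_j\notin E(G[Q])$ is its own median and contributes $0$ to the sum.

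Now fix an edge $e_j=u_\alpha u_\beta$ of $G[Q]$, and write $u_\alpha=v_a$, $u_\beta=v_b$. We show $\dist(\pi_{e_j},\sigma_Q)\le q+(k'-2)$ by transforming $\pi_{e_j}$ into $\sigma_Q$ with that many moves.
\begin{itemize}
\item For each $i\in[q]$, move $y^i_j$ to directly after $x^i_j$. This uses $q$ moves and restores the $x$/$y$ part of $\pi_{\base}$.
\item The colors $c_\alpha$ and $c_\beta$ already sit right after $v^q_a$ and $v^q_b$, exactly as in $\sigma_Q$, because $u_\alpha=v_a$ and $u_\beta=v_b$ are the clique vertices of these colors.
\item The remaining $k'-2$ colors form the block $\chi_{\alpha,\beta}$ after $z^q_j$. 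Move each such $c_\gamma$ to directly after $v^q_{u_\gamma}$. This uses $k'-2$ moves.
\end{itemize}
After these moves the permutation coincides with $\sigma_Q$, since every non-color symbol is in its $\pi_{\base}$ order and every color is in its designated slot. The one point to verify is that the untouched symbols of $\pi_{e_j}$ already appear in the same relative order as in $\sigma_Q$. This is immediate from the construction: $\pi_{e_j}$ differs from $\pi_{\base}$ only in the $x_j/y_j$ swaps and the color placements.

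Summing over the $\binom{k'}{2}$ clique edges, the total cost is
\[
\sum_{\pi\in\Pi}\dist(\pi,S)\le\binom{k'}{2}\bigl(q+(k'-2)\bigr)=d,
\]
so $(\Pi,k,d)$ is a \yesinstance. The main obstacle is only bookkeeping: checking carefully that the $q+(k'-2)$ listed moves really yield $\sigma_Q$, in particular that the placement of $c_\alpha$ and $c_\beta$ in $\pi_{e_j}$ matches $\sigma_Q$ exactly. This uses that $Q$ has exactly one vertex per color and that $e_j$ joins two of them.
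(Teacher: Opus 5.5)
Your proof is correct and follows essentially the same route as the paper. The ingredients match: the same median $\sigma_Q$, obtained from $\pi_{\base}$ by placing each $c_\alpha$ right after $v^q$ of the clique vertex of that color; the same solution $S=\{\sigma_Q\}\cup\{\pi_{e_j} : e_j\notin E(G[Q])\}$; and the same count of $q+(k'-2)$ moves per clique edge. The only difference is that you move the $y^i_j$ where the paper moves the $x^i_j$, which is immaterial. You also verify explicitly that $c_\alpha$ and $c_\beta$ already sit in their $\sigma_Q$ slots because $Q$ has exactly one vertex per color, a step the paper leaves implicit.
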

    \begin{proof}
        Assume that $G$ contains a clique $Q$ of size at least $k'$, which contains exactly one vertex of each color.
    We prove that there is a set $S$ of size $k$ that is a solution for the \umedianfull instance $(\Pi, k, d)$.
    For each vertex $v_{j} \in Q$, let $c_{\varphi(j)}$ be the color of $v_{j}$.
    Define $\sigma_{Q}$ to be the permutation obtained from $\pi_{\base}$ by moving each symbol $c_{\varphi(j)} \in C$ right after the symbol $v_{j}^{q} \in Q$.
    (There is no color-symbol after $v_{j}^{q}$ if $v_{j} \not\in Q$.)
    That is, if $Q = \{v_{j_{1}}, v_{j_{2}}, \ldots, v_{j_{k'}}\}$ with $j_{1} < j_{2} < \cdots < j_{k'}$, then
    \begin{align*}
        \sigma_{Q} \coloneq 
        {}& \highlight{x_{1}^{1}\;y_{1}^{1} \cdots x_{1}^{q}\;y_{1}^{q}}\;
        \highlight{x_{2}^{1}\;y_{2}^{1} \cdots x_{2}^{q}\;y_{2}^{q}}
        \cdots
        \highlight{x_{m'}^{1}\;y_{m'}^{1} \cdots x_{m'}^{q}\;y_{m'}^{q}}\;\\
        {}&\highlight{v_{1}^{1} \cdots v_{1}^{q}}
        \cdots 
        \highlight{v_{j_{1} - 1}^{1} \cdots v_{j_{1} - 1}^{q}} \;
        \highlight{v_{j_{1}}^{1} \cdots v_{j_{1}}^{q}}\;\HIGHlight{c_{\varphi(j_{1})}} \;
        \highlight{v_{j_{1} + 1}^{1} \cdots v_{j_{1} + 1}^{q}} \\
        {}&\phantom{\highlight{v_{1}^{1} \cdots v_{1}^{q}}}
        \cdots
        \highlight{v_{j_{i} - 1}^{1} \cdots v_{j_{i} - 1}^{q}} \;
        \highlight{v_{j_{i}}^{1} \cdots v_{j_{i}}^{q}}\;\HIGHlight{c_{\varphi(j_{i})}} \;
        \highlight{v_{j_{i} + 1}^{1} \cdots v_{j_{i} + 1}^{q}} \\
        {}&\phantom{\highlight{v_{1}^{1} \cdots v_{1}^{q}}}
        \cdots
        \highlight{v_{j_{k'} - 1}^{1} \cdots v_{j_{k'} - 1}^{q}} \;
        \highlight{v_{j_{k'}}^{1} \cdots v_{j_{k'}}^{q}}\;\HIGHlight{c_{\varphi(j_{k'})}} \;
        \highlight{v_{j_{k'} + 1}^{1} \cdots v_{j_{k'} + 1}^{q}}\\
        {}& \cdots
        \highlight{v_{n'}^{1} \cdots v_{n'}^{q}}\; 
        \highlight{z_{1}^{1} \cdots z_{1}^{q}}\;
        \highlight{z_{2}^{1} \cdots z_{2}^{q}}
        \cdots
        \highlight{z_{m'}^{1} \cdots z_{m'}^{q}}.
    \end{align*}

    We then let $S = \{\sigma_{Q}\} \cup \Pi \setminus \Pi_{Q}$, where $\Pi_{Q} \subseteq \Pi$ is the set of $\binom{k'}{2}$ permutations associated with an edge that has both endpoints covered by $Q$.
    Observe that $|S| = m' - \binom{k'}{2} + 1 = k$ and the sum of distances of each permutation to its closest median is at most $\sum_{\pi \in \Pi_{Q}} \dist(\pi, \sigma_{Q})$.
    Any permutation in $\Pi_{Q}$ can be transformed into $\sigma_{Q}$ by moving the $k'-2$ color-symbols that are not yet at the correct position as well as the $q$ symbols of the form $x_{j}^{i}$ with $i \in [q]$.
    Thus, the total sum of distances is at most $\binom{k'}{2}(q + (k'- 2)) = d$.
    \end{proof}

 \begin{lemma}\label{lem:w1seconddirection}
    The stated reduction from \textsc{MultiColoredClique} to \umedianfull preserves \noinstances.
\end{lemma}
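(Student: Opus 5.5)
We argue the contrapositive: assume $(\Pi,k,d)$ has a solution $S$ with $|S|\le k=m'-\binom{k'}{2}+1$ and total cost at most $d=\binom{k'}{2}(q+k'-2)$, and we extract a multicolored clique of $G$. The first step is to record pairwise distances in $\Pi$. Any two distinct edge permutations $\pi_{e_j},\pi_{e_{j'}}$ disagree on the order of $x^i_j,y^i_j$ for all $i\in[q]$ and likewise for $e_{j'}$, so every common subsequence loses at least one symbol per swapped pair, giving $\dist(\pi_{e_j},\pi_{e_{j'}})\ge 2q$. Consequently, for any median $\sigma$ whose cluster $C\subseteq\Pi$ has size $t\ge 2$, the cost of $C$ is at least $t q$: for each $i$, $\sigma$ can agree with the order of the pair $(x^i_j,y^i_j)$ only with the one orientation shared by the $t-1$ non-$e_j$ permutations or with $\pi_{e_j}$, and a counting argument over the $q$ pairs per edge gives at least $(t-1)q$ on the $x/y$ part alone, plus extra from colors.

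The second step is a counting argument on clusters. Since $|S|\le m'-\binom{k'}{2}+1$, the clusters of size at least $2$ must absorb at least $\binom{k'}{2}-1$ "excess" permutations, i.e.\ $\sum_{C}(|C|-1)\ge \binom{k'}{2}-1$. If the permutations are spread over at least two nontrivial clusters, the $x/y$ contribution is at least $\sum_C |C|q \ge (\binom{k'}{2}+1)q$, which exceeds $d$ because $q=\binom{k'}{2}(k'-2)+1>\binom{k'}{2}(k'-2)$. Hence there is exactly one nontrivial cluster $C^\ast$ with $|C^\ast|\ge\binom{k'}{2}$, and its $x/y$ cost is already at least $|C^\ast|q$; again by the choice of $q$, $|C^\ast|=\binom{k'}{2}$ exactly, and the remaining budget for its median $\sigma$ on the color symbols is at most $\binom{k'}{2}(k'-2)$.

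The third step, which I expect to be the main obstacle, is analyzing the color symbols relative to $\sigma$. Each $\pi_{e}$ places its two endpoint colors inside the $v$-region after specific vertex blocks and the other $k'-2$ colors inside the $z$-region after $z^q_j$. I would show that, because the $v$- and $z$-blocks have length $q$ (large), $\sigma$ may be assumed to keep all non-color symbols in base order (moving a block symbol costs too much or can be undone without increasing cost), so the only freedom is where each color sits. Then for each $e\in C^\ast$, the LCS with $\sigma$ keeps at most two colors unless colors are placed in the $z$-region, and colors in the $z$-region after distinct $z_j$ blocks serve at most one edge each; so each permutation pays at least $k'-2$ on colors, with equality only if both endpoint colors of $e$ sit in $\sigma$ exactly after the corresponding endpoint blocks $v_a^q,v_b^q$. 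Tightness of the budget forces equality for all $\binom{k'}{2}$ edges, and must also use the fact that distinct edges in $C^\ast$ are distinct pairs of colors (two edges with the same color pair would force more than $k'-2$ missed colors somewhere, using $k'\ge4$).

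Finally, each color $c_\alpha$ has a single position in $\sigma$, hence determines a single vertex $v$ of color $c_\alpha$ after whose block it sits; define $Q$ as these $k'$ vertices. Every edge of $C^\ast$ has both endpoints in $Q$, and since there are $\binom{k'}{2}$ such edges between distinct color classes, $Q$ is a multicolored clique, completing the proof.
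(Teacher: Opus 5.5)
Your Steps 1--2 essentially match the paper's proof. They give a unique nontrivial cluster $C^\ast$ of size exactly $\binom{k'}{2}$ and leftover budget $\binom{k'}{2}(k'-2)=q-1$ for $\Sigma''$.

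\textbf{The gap is in Step 3.} The claim that \emph{each} permutation pays at least $k'-2$ on colors is false. If $\sigma''$ puts non-endpoint colors of $e_j$ into the gap after $z^q_j$ (e.g.\ $\sigma''=\pi''_{e_j}$), then $\pi_{e_j}$ pays fewer than $k'-2$. Your own ``unless colors are placed in the $z$-region'' concedes this, and ``serve at most one edge each'' does not turn it into a per-permutation bound. Only an aggregate bound holds, and it needs a trade-off argument, which is the heart of the paper's proof.

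Let $\bar k$ be the number of colors that $\sigma''$ places among the $v$-blocks.
\begin{itemize}
\item At most $\binom{\bar k}{2}$ permutations of $C^\ast$ can retain both endpoint colors. Each such color's position fixes one vertex, so the edge is determined by its color pair. Every other permutation loses at least one endpoint color.
\item The $k'-2$ colors that each $\pi_e$ has in the $z$-region are all lost, except for at most $k'-\bar k$ retentions in total.
\end{itemize}
Hence the color cost is at least $\binom{k'}{2}-\binom{\bar k}{2}+\binom{k'}{2}(k'-2)-(k'-\bar k)=\binom{k'}{2}(k'-2)+\tfrac12\bigl(k'(k'-3)-\bar k(\bar k-3)\bigr)$. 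For $k'\ge 4$ this is at least $\binom{k'}{2}(k'-2)$, with equality only when $\bar k=k'$; this is where $k'\ge 4$ genuinely enters. Only this tightness forces every edge of $C^\ast$ to keep both endpoint colors. After that your final paragraph goes through, and your separate remark about repeated color pairs becomes unnecessary.

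\textbf{Two smaller points.}
\begin{itemize}
\item Your justification that $\sigma''$ may keep the $v$/$z$ symbols in base order (``moving a block symbol costs too much'') is not valid as stated. Displacing one block symbol costs each permutation only $1$, i.e.\ $\binom{k'}{2}\le q-1$ in total, so the budget alone does not exclude it. The sound argument is pairwise. Suppose a color $c$ is retained by two permutations in different gaps. Every symbol of a $q$-symbol block lying between those gaps sits on one side of $c$ in $\sigma''$, so it is dropped by one of the two longest common subsequences. That costs at least $q>q-1$. So each color is retained in at most one gap, which is all the counting above needs. The paper is terse on this point too.
\item Your per-pair count in Step 1 actually gives $tq$ directly, since each of the $tq$ swapped pairs of the edges in $C$ costs at least $\min(1,t-1)=1$. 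As written, ``$(t-1)q$ plus extra from colors'' would not justify the $tq$ you use in Step 2, because the color cost is far below $q$.
\end{itemize}
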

\begin{proof}
    We prove by showing that the existence of a solution to the constructed \umedianfull instance implies the existence of a solution to the \textsc{MultiColoredClique} instance.

    Assume the \umedianfull instance is solved by a set $S$ of permutation medians which minimize the sum of distances.
    Consider the clustering of $\Pi$ described by $S$, where we have one cluster per median permutation in $S$ and each permutation in $\Pi$ is assigned to the cluster of the closest median under Ulam metric (breaking ties arbitrarily).
    For ease of presentation, we partition the alphabet $\Sigma$ into two parts: $\Sigma'$ is the set of $x$- and $y$-symbols; and $\Sigma''$ is the set of $v$-, $z$-, and color-symbols.
    By construction, any permutation $\pi_{e}$ in $\Pi$ is a concatenation of two substrings $\pi'_{e} \in \Scal_{2m'q}(\Sigma')$ and $\pi''_{e} \in \Scal_{k' + n'q + m'q}(\Sigma'')$.
    It is easy to see that $\dist(\pi_{e_{1}}, \pi_{e_{2}}) = \dist(\pi'_{e_{1}}, \pi'_{e_{2}}) + \dist(\pi''_{e_{1}}, \pi''_{e_{2}})$ for any pair of permutations $\pi_{e_{1}}, \pi_{e_{2}} \in \Pi$.
    We further define $C' = \{ \pi'_{e} : \pi_{e} \in C \}$ and $C'' = \{ \pi''_{e} : \pi_{e} \in C \}$ for any cluster $C$ in the clustering.

    Clearly, there exists a cluster of size at least two as $k < m'$.
    Consider a cluster $C$ of size $|C| \geq 2$. 
    We first argue that $C$ is the unique cluster of size more than one.
    Let $\sigma$ be its median with optimal total distance to the permutations in $C$.
    By the optimality of $\sigma$, the center $\sigma$ starts with a substring $\sigma' \in \Scal_{2m'q}(\Sigma')$ and end with a substring $\sigma'' \in \Scal_{k' + n'q + m'q}(\Sigma'')$.
    It is easy to see that $\dist(\pi_{e}, \sigma) = \dist(\pi'_{e}, \sigma') + \dist(\pi''_{e}, \sigma'')$ for any permutation $\pi_{e} \in C$.
    This means that we can optimize $\sigma'$ and $\sigma''$ independently to minimize the total distance.
    First, restrict our attention to the $x$- and $y$-symbols.
    For any pair of permutations $\pi'_{e_{1}}, \pi'_{e_{2}} \in C'$, we have $\dist(\pi'_{e_{1}}, \pi'_{e_{2}}) \geq 2q$ as they have swapped the $x$- and $y$-symbols for distinct edges.
    Thus, we can derive that
    \begin{align*}
        \sum_{\pi'_{e} \in C'} \dist(\pi'_{e}, \sigma')
        ={}& \frac{1}{|C'| - 1} \sum_{\{\pi'_{e_{1}}, \pi'_{e_{2}}\} \subseteq C'} (\dist(\pi'_{e_{1}}, \sigma') + \dist(\pi'_{e_{2}}, \sigma')) \\
        \geq{}& \frac{1}{|C'| - 1} \sum_{\{\pi'_{e_{1}}, \pi'_{e_{2}}\} \subseteq C'} \dist(\pi'_{e_{1}}, \pi'_{e_{2}}) \\
        \geq{}& \frac{1}{|C'| - 1} \cdot \binom{|C'|}{2} \cdot 2q = |C'| q = |C|q.
    \end{align*}
    Suppose for the contraction that there were $r$ clusters $C_{1}, C_{2}, \ldots, C_{r}$ ($r \geq 2$) of size at least two with respective medians $\sigma_1', \sigma_2', \ldots \sigma_r'$.
    Summing over all these clusters yields 
    \begin{equation*}
        (|C'_{1}| + |C'_{2}| + \cdots + |C'_{r}|)q 
        \leq \sum_{i \in [r]}\sum_{\pi'_{e} \in C'_{i}} \dist(\pi'_{e}, \sigma_i')
        \leq d 
        = \binom{k'}{2}\big(q + (k'- 2)\big) < \Big(\binom{k'}{2} + 1 \Big)q,
    \end{equation*}
    which implies that $|C_{1}| + |C_{2}| + \cdots + |C_{r}|\leq \binom{k'}{2}$.
    By the definition of $k$ we have
    \begin{align*}
        m' + r - (|C_{1}| + |C_{2}| + \cdots + |C_{r}|) \leq k = m' - \binom{k'}{2} + 1,
    \end{align*}
    leading to $r \leq 1$, a contradiction.
    Hence, we conclude that $C$ is the unique cluster of size more than one, which indicates $|C| \geq m' - k' + 1 = \binom{k'}{2}$.

    Since $(\Pi, k, d)$ is a \yesinstance, we have $d$ is at least
    \begin{equation*}
        \sum_{\pi_{e} \in C} \dist(\pi_{e}, \sigma) 
        = \sum_{\pi'_{e} \in C'} \dist(\pi'_{e}, \sigma') + \sum_{\pi''_{e} \in C''} \dist(\pi''_{e}, \sigma'')
        \geq |C'| q + \sum_{\pi''_{e} \in C''} \dist(\pi''_{e}, \sigma'').    
    \end{equation*}
    Since $d = \binom{k'}{2} (q + (k'-2))$ and $|C'| = |C| \geq \binom{k'}{2}$, we derive that 
    \begin{equation*}
        \sum_{\pi''_{e} \in C''} \dist(\pi''_{e}, \sigma'') \leq \binom{k'}{2}(k'-2) = q - 1.
    \end{equation*}
    This implies that only color-symbols can be moved from $\sigma''$ to each permutation $\pi''_{e}$ in $C''$: there are $q$ copies of each $v$- and $z$-symbol in $\Sigma''$ and they are ordered the same way in all permutations.
    Suppose $\bar{\Ccal} \subseteq \Ccal$ is the set of colors located before $z_{1}^{1}$ in $\sigma''$, and let $\bar{k} = |\bar{\Ccal}|$.
    Recall that each permutation $\pi_{e}$ has two of the $k' \geq 4$ color-symbols placed before $z_{1}^{1}$; one each after the symbols for its incident vertices.
    Thus, for at most $\binom{\bar{k}}{2}$ of the permutations in $C''$, both of these color-symbols are in $\bar{\Ccal}$.
    All other permutations have to move at least one of the two symbols, which requires a total of at least $\binom{k'}{2}-\binom{\bar{k}}{2}$ units of the budget.
    Additionally, for each permutation in $C''$, all $k'-2$ color-symbols after $z_{1}^{1}$ have to be moved, except a total of at most $k'-\bar{k}$ symbols across all permutations.
    As a result, the total number of required move operations is at least
    \begin{align*}
        \sum_{\pi''_{e} \in C''} \dist(\pi''_{e}, \sigma'')
        \geq{} &\binom{k'}{2}-\binom{\bar{k}}{2} + \binom{k'}{2}(k'-2) - \big(k'-\bar{k}\big)\\
        ={}& \binom{k'}{2}(k'-2) + \frac{1}{2} \Big(k'(k'-3) - \bar{k}\big(\bar{k} - 3 \big)\Big)
        \geq \binom{k'}{2}(k'-2).
    \end{align*}
    Combine this with the bound on $\sum_{\pi'_{e} \in C'} \dist(\pi'_{e}, \sigma')$, and we finally have
    \begin{align*}
        \sum_{\pi_{e} \in C} \dist(\pi_{e}, \sigma)
        ={}& \sum_{\pi'_{e} \in C'} \dist(\pi'_{e}, \sigma') + \sum_{\pi''_{e} \in C''} \dist(\pi''_{e}, \sigma'')\\
        \geq{}& \binom{k'}{2} q + \binom{k'}{2}(k'-2) = d.
    \end{align*}
    Hence, all inequalities above are in fact equalities, which yields $|C'| = k'$ and $\bar{k} = k'$.
    In other words, all color-symbols are placed somewhere before $z_{1}^{1}$ in $\sigma$.
    Note further that the budget $d$ is only sufficient if for every permutation $\pi_{e} \in C$ both color-symbols ``incident'' to $e$ are located between the same substrings of $v$-symbols as in $\sigma$.
    Hence, the union of all endpoints of the $\binom{k'}{2}$ distinct edges described by $C$ has size $|C'| = k'$, implying that these edges form a $k'$-clique in $G$.
\end{proof}

The $\W[1]$-hardness of \umedianfull immediately follows from \cref{lem:w1firstdirection,lem:w1seconddirection} and the facts that the described reduction can be computed in polynomial time and determines $d$ by a computable function of $k'$.
\medianWhard*

We contrast this lower bound by designing a simple $\XP$-algorithm for \umedianfull.

\medianXP*
\begin{proof}
    A \yesinstance of \umedianfull can be characterized by a sequence of at most $d$ moves (deletion and insertion) on the input set of permutations $\Pi$.
    If after performing at most $d$ such moves, we obtain at most $k$ distinct permutations then we can say these $k$ permutations are the $k$ medians.
    From the input set $\Pi$, we have $mn^2$ possible choices of single moves: choose a permutation from $\Pi$, choose an element to delete (there are $n$ choices), and choose a position to reinsert it (again $n$ choices).
    As a result, trying all combinations of at most $d$ moves takes time in $\Ocal\big((mn)^{2d}\big)$.
\end{proof}

\subsection{Fixed-Parameter Tractability by \texorpdfstring{$k+d$}{k+d}}

In this subsection, we show that \umedianfull admits a polynomial kernel parameterized by both $k$ and $d$.
Compared to \ucenterfull, this matches the fixed-parameter tractability by $d + k$ but contrasts our lower bound on polynomial kernels.

Our kernelization algorithm is based on three reduction rules.
A reduction rule is \emph{safe} if it transforms an instance into an equivalent instance (preserving \yesinstances and \noinstances).
When introducing a reduction rule, we assume all previous reduction rules cannot be applied.

\begin{reduction} \label{R1}
    If there are more than $k+d$ distinct permutations in $\Pi$, reduce to a trivial \noinstance of constant size.
\end{reduction}

\begin{reduction} \label{R2}
    If a permutation $\pi \in \Pi$ occurs more than $d + 1$ times in $k$, then delete all but $d+1$ occurrences of $\pi$ from $\Pi$.
\end{reduction}

\begin{lemma}
    Reduction rule~\ref{R1} and Reduction rule~\ref{R2} are safe.
\end{lemma}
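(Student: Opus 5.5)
For Reduction rule~\ref{R1}, I would argue directly that any \yesinstance has at most $k+d$ distinct permutations in $\Pi$. Fix a solution $S$ with $|S|\le k$ and total cost at most $d$. Every $\pi\in\Pi$ with $\dist(\pi,S)=0$ coincides with some median, so there are at most $|S|\le k$ distinct such permutations. Every other permutation contributes at least $1$ to the sum, hence at most $d$ entries of the multiset $\Pi$ (and thus at most $d$ distinct permutations) have positive distance. Together these give at most $k+d$ distinct permutations. So if there are more, the instance is a \noinstance, and replacing it by a constant-size \noinstance is safe. One example is $k=1$, $d=0$, $\Pi=\{12,21\}$ over $\Sigma=\{1,2\}$.

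For Reduction rule~\ref{R2}, I read the rule as ``more than $d+1$ times in $\Pi$''. Let $\Pi'$ be the reduced multiset. One direction is immediate: any solution for $\Pi$ is a solution for $\Pi'\subseteq\Pi$, since removing summands only lowers a sum of nonnegative terms.

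For the converse, take a solution $S$ for $\Pi'$ of cost at most $d$. The $d+1$ copies of $\pi$ in $\Pi'$ each contribute $\dist(\pi,S)$. If $\dist(\pi,S)\ge1$, these copies alone would cost at least $d+1>d$, a contradiction. Hence $\dist(\pi,S)=0$, i.e.\ $\pi\in S$. Re-adding the deleted copies therefore adds $0$ to the cost, so $S$ is also a solution for $\Pi$.

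The only mildly delicate point is the counting in R1: one must count multiplicities correctly and note that permutations at distance $0$ must equal medians. I expect no real obstacle beyond stating this carefully.
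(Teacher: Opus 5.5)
Your proof is correct and takes essentially the same approach as the paper. For Reduction rule~\ref{R1}, splitting the input into permutations at distance $0$ (at most $k$ distinct) and permutations at positive distance (at most $d$ entries) is a direct restatement of the paper's observation that each of the at most $d$ moves lowers the number of distinct permutations by at most one. For Reduction rule~\ref{R2}, you use the same fact as the paper, namely that $d+1$ copies force $\pi$ to be a median. Your version is a little more careful: it checks both directions of the equivalence explicitly and applies the forcing argument to the reduced instance, which is where it is actually needed.
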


\begin{proof}
    Note that each move-operation reduces the number of distinct permutations in $\Pi$ by at most one.
    If there are more than $k + d$ distinct permutations, it is impossible to reduce the number of distinct permutations to at most $k$ using at most $d$ move-operations.
    This shows the correctness of rule \ref{R1}.

    Now, we prove the correctness of rule \ref{R2}.
    If the permutation $\pi \in \Pi$ occurs more than $d$ times in $\Pi$, it has to be a median in the solution; otherwise, at least $d + 1$ move-operations are needed to transform all these occurrences to a median distinct from $\pi$.
    This implies that keeping $d + 1$ occurrences of $\pi$ is sufficient to obtain an equivalent instance.
\end{proof}

After exhaustively applying \ref{R1} and \ref{R2}, there are at most $k + d$ distinct permutations in $\Pi$, and each permutation occurs at most $d + 1$ times.
Then we apply the last reduction rule, \ref{R3}, to bound the length of the permutations.
This reduction rule is based on the concepts of a \emph{distance-preserving} translation (defined in Section~\ref{sec:nokernel}) and \emph{maximal connectivity components}.

Recall that the previously defined function $f\colon \{0, 1\}^{\ell} \rightarrow \Scal_{2\ell}([2\ell])$ is distance-preserving as it replaces the $i$-th bit of a bitstring by the pair of symbols $(2i - 1, 2i)$ if the bit is $0$ and by the pair $(2i, 2i - 1)$ if the bit is $1$.
The definition of connectivity components is as follows.

\begin{definition}[Connectivity component]
    Let $\Pi \subseteq \Scal_n(\Sigma)$ be a (multi)set of permutations and $d$ be an integer.
    A sequence of permutations $\pi_{1}, \pi_{2}, \ldots, \pi_{\ell}$ in $\Pi$ is called a \emph{path} if every neighboring pair $\pi_{i}, \pi_{i+1}~(i \in [r - 1])$ has Ulam distance at most $d$.
    A \emph{connectivity component} $C \subseteq \Pi$ is a subset of permutations such that every pair $\pi_{1}, \pi_{2}$ of permutations in $C$ has a path in $C$ with endpoints $\pi_{1}$ and $\pi_{2}$.
    A connectivity component is maximal if it is not a proper subset of another connectivity component.
\end{definition}

According to the definition, for a partition of $\Pi$ into maximal connectivity components, every pair of permutations in distinct components has Ulam distance more than $d$.
It is easy to partition $\Pi$ into maximal connectivity components $(C_{1}, C_{2}, \ldots, C_{r})$ via a greedy algorithm that exhaustively performs breadth-first searches (or depth-first searches) from uncovered permutations. 

\begin{reduction} \label{R3}
    Consider any maximal common substring $s_{i}$ shared by all permutations in a maximal connectivity component $C_i$.
    If $s_i$ consists of more than $d + 1$ symbols, replace it by a sequence of $d+1$ new symbols in all permutations in $C_i$.
    Then create a new instance using a modified alphabet and the modified set of permutations as follows \textup{(}cf. \cref{fig:exampleReductionRule}\textup{)}:
    \begin{enumerate}
        \item Relabel all symbols to be consecutive positive integers starting at $2(d + 1)r + 1$;
        \item Using the symbols in $[2(d+1)r]$, add the prefix $f\big(0^{(d+1)(i-1)}\;1^{d+1}\;0^{(d+1)(r - i)} \big)$\footnote{Here we let $0^\ell$ and $1^\ell$ denote the bitstring of $\ell$ consecutive zeros or ones, respectively.} to every permutation in each $C_{i}$ with $i \in [r]$, where $f$ is the distance-preserving translation defined above.
        \item Let $\Sigma'$ be the set of symbols in the longest permutation.
        To each permutation in $\Pi$, add the missing symbols from $\Sigma'$ in increasing order as a suffix. 
    \end{enumerate}
\end{reduction}

We call an instance of \umedianfull \emph{reduced} if none of the reduction rules \ref{R1}-\ref{R3} is applicable.

\begin{lemma}
    Reduction rule \ref{R3} is safe.
\end{lemma}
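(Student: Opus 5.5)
The plan is to show that a solution of one instance can be converted into a solution of the other, cluster by cluster, without increasing the cost. The first step is to localize clusters. In any solution of cost at most $d$, two permutations $\pi_1, \pi_2$ sharing a median $\sigma$ satisfy $\dist(\pi_1,\pi_2) \le \dist(\pi_1,\sigma)+\dist(\pi_2,\sigma) \le d$ by the triangle inequality. So every cluster of a solution lies inside a single maximal connectivity component $C_i$. In the new instance the same holds, for two reasons. The prefixes of different components are images under $f$ of bitstrings at Hamming distance $2(d+1)$, so by \Cref{lem:map_from_bin_strings} (and since prefix, middle part and suffix are separated in a distance-additive way) permutations from different components are at Ulam distance more than $d$. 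Within one component the prefixes coincide. Hence in both instances we may treat each component independently, and the budget $d$ is shared additively across clusters.

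The central step is an exchange argument. Let $C$ be a cluster inside $C_i$ with median $\sigma$ and cost $\sum_{\pi\in C}\dist(\pi,\sigma)\le d$. Let $s$ be a common substring of all permutations in $C_i$ with $|s|\ge d+1$. For each $\pi\in C$, fix a longest common subsequence $L_\pi$ of $\pi$ and $\sigma$. The symbols outside $L_\pi$ number $\dist(\pi,\sigma)$, so the union over $\pi\in C$ of the symbols outside $L_\pi$ has size at most $d$. Therefore some symbol $a\in s$ lies in every $L_\pi$. Let $\sigma'$ be obtained from $\sigma$ by deleting the other symbols of $s$ and reinserting all of $s$ contiguously, in its order, around $a$. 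I claim $L_\pi\cup s$ is a common subsequence of $\pi$ and $\sigma'$. Indeed, a symbol $y\in L_\pi\setminus s$ that precedes $a$ in $\sigma$ also precedes $a$ in $\pi$, hence precedes all of $s$ in $\pi$ because $s$ is contiguous there. The symmetric statement holds for symbols after $a$. Consequently $\dist(\pi,\sigma')\le\dist(\pi,\sigma)$, and we may assume every median has $s$ as a contiguous block.

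Once $s$ is contiguous in both $\pi$ and $\sigma$, I would verify that contracting $s$ to a block $s'$ of $d+1$ fresh symbols (in both) preserves $\dist$. An LCS can be taken to contain either all of the block or none of it, and the "all" case is forced whenever some symbol of the block survives. This handles the forward direction. The backward direction is the same argument applied in the reduced instance: $s'$ has $d+1$ symbols, so the exchange step again makes $s'$ contiguous in the medians, after which we expand it back to $s$. The relabeling step is harmless. For the padding suffix, all permutations of one component lack exactly the same symbols, so their suffixes are identical. Medians can be chosen to carry that same suffix (by the same exchange argument, or simply because moving unmoved common trailing symbols never helps), and stripping or adding it changes no distance. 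Medians must also be permutations of $\Sigma'$, which the padding ensures.

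The main obstacle I expect is the bookkeeping in the exchange step when several disjoint maximal common substrings of $C_i$ are contracted simultaneously. I would handle this by applying the exchange to one substring at a time: making one block contiguous does not destroy the contiguity of another, and each application never increases any distance. A second point to check carefully is the additivity of Ulam distance over the concatenation (prefix, middle, suffix) for the optimally rearranged medians. I would argue this as in the proof of \Cref{lem:w1seconddirection}: all inputs share the same partition of positions into these three symbol sets, so an optimal median respects it.
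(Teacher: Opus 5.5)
Your proposal is correct and follows the paper's proof. It confines each cluster to one connectivity component (via the triangle inequality, and via prefixes at Ulam distance $2(d+1)$ in the reduced instance), assumes medians contain the long shared substrings, and transports medians through contraction, relabeling and prefix/suffix padding in both directions. Your exchange argument (a symbol of $s$ lying in every chosen LCS, around which $s$ is reassembled) and the projection onto the prefix/middle/suffix symbol sets make rigorous the paper's ``without loss of generality'' and ``drop the prefix and suffix'' steps. Note that contraction preserves $\dist$ only when the distance is at most $d$, which is exactly the case you use.
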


\begin{proof}
    We prove that a reduced instance $\Ical'$ with permutations $\Pi'$ and alphabet $\Sigma'$ is a \yesinstance if and only if so is the original instance $\Ical$.
    For both directions, observe that if all permutations covered by some median share a substring of length more than $d$, then without loss of generality, that median contains the same substring as well.

    First, assume there is a set $S$ of permutations over $\Sigma$ witnessing $\Ical$ to be a \yesinstance and associate each permutation in $\Pi$ with the closest median in $S$ (breaking ties arbitrarily).
    Observe that all permutations associated to the same median belong to the same connectivity component.
    Let $S'$ be the set of permutations over $\Sigma'$ obtained by adding one median $\sigma'$ for each median $\sigma\in S$ with at least one associated permutation as follows.
    Let the associated permutation(s) be in the $i$-th connectivity component $C_i$.
    Note that we can assume that if all permutations in $C_i$ share a substring of length at least $d+1$, then $\sigma$ contains this substring as well. We can thus apply the same contraction of substrings and relabeling to $\sigma$ as we did for the permutations in $C_i$ in the construction of the new instance. We then obtain $\sigma'$ by adding the prefix and suffix of connectivity component $C_i$.
    This direction of the proof is concluded by the fact that now each permutation in $C_i$ has the same Ulam distance to $\sigma$ as its modified permutation has to $\sigma'$.

    For the other direction, assume there is a set $S'$ of permutations over $\Sigma'$ witnessing $\Ical'$ to be a \yesinstance and associate each permutation in $\Ical'$ with the closest median in $S'$ (breaking ties arbitrarily).
    Note that pairs of permutations that originate from distinct connectivity components have Ulam distance more than $d$ by their prefix alone and are thus never associated with the same median.
    Consequently, all permutations associated with the same median have the same prefix and suffix.
    For each median $\sigma'\in S'$ associated with at least one permutation, drop the prefix and suffix and revert the relabeling and the contraction of substrings as applied for the respective connectivity component.
    This yields a permutation $\sigma$ of $\Sigma$ such that, for each modified permutation $\pi'$ associated with $\sigma'$, the Ulam distance between $\sigma'$ and $\pi'$ is the same as the Ulam distance between $\sigma$ and the respective original permutation in $\Pi$.
\end{proof}

\newcommand{\circnum}[1]{\raisebox{0pt}[.3ex][.3ex]{\text{\smash{\textcircled{\scriptsize #1}}}}
\vphantom{\rule{0pt}{\heightof{l}}}
}
\newcommand{\boxnum}[1]{%
    \begingroup
        \setlength{\fboxsep}{.7pt}%
        \setlength{\fboxrule}{.2pt}%
        \mathbin{\raisebox{-.5pt}{\fbox{\scriptsize #1}}}%
    \endgroup}

\begin{figure}
    \centering
    \begin{tikzpicture}
        \node[fill=gray!20, draw=gray!50, rounded corners=2pt, inner sep=4pt, text width=3.8cm,style={align=center}]
            (con1) at (0,0)
            { \small$\text{A B C}\, \highlight{\text{D E} \cdots \text{K}}\;\highlight{\text{L} \cdots \text{Z}}$ \\
                $\text{C A B}\, \highlight{\text{D E} \cdots \text{K}}\;\highlight{\text{L} \cdots \text{Z}}$ \\
                $\text{A B}\, \highlight{\text{D E} \cdots \text{K}}\;\text{C}\;\highlight{\text{L} \cdots \text{Z}}$ \\};
        \node[fill=gray!20, draw=gray!50, rounded corners=2pt, inner sep=4pt, text width=3.8cm,style={align=center}]
            (con2) at (0, -1.8)
            { \small$\text{Z}\, \highlight{\text{Y X W V U} \cdots \text{C B A}\,}$ \\
                $\highlight{\,\text{Y X W V U} \cdots \text{C B A}}\, \text{Z}$ \\};
      
        \node[inner sep=0pt, text width=3.8cm, style={align=center},]
            (con1) at (4.7,0)
            {\small$\text{A B C}\;\highlight{\circnum{1} \cdots \circnum{4}}\;\highlight{\boxnum{1} \cdots \boxnum{4}}$ \\
                $\text{C A B}\;\highlight{\circnum{1} \cdots \circnum{4}}\;\highlight{\boxnum{1} \cdots \boxnum{4}}$ \\
                $\text{A B}\;\highlight{\circnum{1} \cdots \circnum{4}}\;\text{C}\;\highlight{\boxnum{1} \cdots \boxnum{4}}$ \\};

        \node[inner sep=0pt, text width=5cm, style={align=center},]
            (pre1) at (9.2,0)
            {\small $\highlight{2\;1} \cdots \highlight{8\;7}\;\highlight{9\;10} \cdots \highlight{15\;16}$};
            
        \node[inner sep=0pt, text width=5cm, style={align=center},]
            (pre2) at (9.2,-1.8)
            {\small $\highlight{1\;2} \cdots \highlight{7\;8}\;\highlight{10\;9} \cdots \highlight{16\;15}$};

        \node[inner sep=0pt, text width=5cm, style={align=center},]
            (con2) at (4.5,-1.8)
            {\small$ \text{Z}\;\highlight{\circnum{1} \cdots \circnum{4}}$\\ 
                $\highlight{\circnum{1} \cdots \circnum{4}}\;\text{Z}$};

        \draw[-{Straight Barb[length=.8mm]}] (2.2,0) -- (2.7,0);  
        \draw[-{Straight Barb[length=.8mm]}] (2.2,-1.8) -- (2.7,-1.8);  
             
        \node[inner sep=0pt, style={align=center}] (label1) at (0,1.5) {$r=2$ connectivity\\ components of $\Pi$};
        \node[inner sep=0pt, style={align=center}] (label2) at (4.7,1.5) {Contraction of\\ common substrings};
        \node[inner sep=0pt, style={align=center}] (label3) at (9,1.5) {Prefix};
     
        \node[inner sep=0pt, style={align=center}] (fulltranslation) at (4.9,-4.2) {
        \small $\text{A B C}\, \highlight{\text{D E} \cdots \text{K}}\;\highlight{\text{L} \cdots \text{Z}} \rightarrow \highlight{2\;1} \cdots \highlight{8\;7}\;\highlight{9\;10} \cdots \highlight{15\;16}~ 17\;18\;19\;\highlight{20\;21\;22\;23}\;\highlight{24\;25\;26\;27}$ \\
        \small $\text{C A B}\, \highlight{\text{D E} \cdots \text{K}}\;\highlight{\text{L} \cdots \text{Z}} \rightarrow \highlight{2\;1} \cdots \highlight{8\;7}\;\highlight{9\;10} \cdots \highlight{15\;16}~ 19\;17\;18\;\highlight{20\;21\;22\;23}\;\highlight{24\;25\;26\;27}$ \\
        \small $\text{A B}\, \highlight{\text{D E} \cdots \text{K}}\;\text{C}\;\highlight{\text{L} \cdots \text{Z}} \rightarrow \highlight{2\;1} \cdots \highlight{8\;7}\;\highlight{9\;10} \cdots \highlight{15\;16}~ 17\;18\;\highlight{20\;21\;22\;23}\;19\;\highlight{24\;25\;26\;27}$ \\
        \\
        \small $\text{Z}\, \highlight{\text{Y X W V U} \cdots \text{C B A}\,} \rightarrow \highlight{1\;2} \cdots \highlight{7\;8}\;\highlight{10\;9} \cdots \highlight{16\;15}~ 17\;\highlight{18\;19\;20\;21}\;\highlight{22\;23\;24\;25\;26\;27}$ \\
        \small $\highlight{\,\text{Y X W V U} \cdots \text{C B A}}\, \text{Z} \rightarrow \highlight{1\;2} \cdots \highlight{7\;8}\;\highlight{10\;9} \cdots \highlight{16\;15}~ \highlight{18\;19\;20\;21}\;17\;\highlight{22\;23\;24\;25\;26\;27}$
        };
             
        \end{tikzpicture}
        \caption{Application of reduction rule \ref{R3} for an instance with $d=3$. In the top left, five input permutations are partitioned into two maximal connectivity components. Common substrings within the components are replaced by new substrings of length $d+1=4$:
        In the first component,
        $\highlight{\text{D E}\cdots \text{K}}$ becomes 
        $\highlight{\circnum{1}\cdots\circnum{4}}$
        and 
        $\highlight{\text{L}\cdots \text{Z}}$ becomes 
        $\highlight{\boxnum{1}\cdots\boxnum{4}}$.
        In the second component,
        $\highlight{\text{Y X W}\cdots \text{A}}$ becomes 
        $\highlight{\circnum{1}\cdots\circnum{4}}$.
        For each component, all symbols are relabeled to integers starting at $2(d+1)r+1=17$ (i.e., A, B, C, $\circnum{1}, \dots$ respectively become $17$, $18$, $19$, $20, \dots$). 
        Each permutation obtains a prefix of the symbols $1$ to $16$ in increasing order, except that for the first (second) connectivity component the first (second) set of $d+1=4$ pairs of symbols are swapped.
        The last lines of the figure depict the transformed permutations. There, permutations of the second connectivity component have a suffix of the symbols $22$ to $27$, these were used in the first component ($11$ distinct symbols after contraction) but not required in the second one ($5$ distinct symbols after contraction).
        } 
        \label{fig:exampleReductionRule}
    \end{figure}

Finally, we are ready to prove that \umedianfull admits a polynomial kernel parameterized by $k + d$.
    
\mediankernel*

\begin{proof}
   Exhaustively applying rules \ref{R1} and \ref{R2} takes time in $\Ocal(m^2n)$ and reduces the number of permutations to $(d+k)(d+1)$.
   Rule \ref{R3} can be applied in time in $\Ocal(m^2 n \log n)$, as computing the Ulam distance between a pair of permutations (by finding a longest common subsequence) takes time in $\Ocal(n\log n)$.
   To obtain the stated kernel size, we now bound the size of the alphabet after the application of \ref{R3} to $\Ocal(d^2(d+k)^2)$.
   For each $i\in [r]$, note that there are at most $d+k$ distinct permutations in $C_i$ and so each permutation $\pi$ in $C_i$ has Ulam distance at most $d(d+k-1)$ to each other permutation in $C_i$.
   Fix any permutation $\pi\in C_i$. Consider a minimum set of move operations that can be applied to the set of distinct permutations in $C_i$ to make them equal to $\pi$.
   This set contains at most $d' = (d+k-1)^2d$ operations. In $\pi$, mark each symbol red that is moved at least once, and place a red separator between each pair of consecutive symbols where at least one symbol is inserted.
   Then all substrings of $\pi$ that do not contain any red symbol or separator are shared substrings of all permutations in $C_i$. 
   These form at most $2d'+1$ maximal shared substrings, as the total number of red symbols and red separators is at most $2d'$. 
   Recall that each substring is replaced by at most $(d+1)$ symbols. Including the at most $d'$ red symbols and the length of the prefix gives that the number of symbols in the modified alphabet is at most
   \begin{align*}
       |\Sigma'| 
       \le (d+1)(2d'+1) + d' + 2(d+1)(d+k)
       = \Ocal(d^2 (d+k)^2)
       = \Ocal(d^4 + d^2k^2). \tag*\qedhere
   \end{align*}  
\end{proof}
    
  \section{Conclusion}\label{sec:conclusion}
    
We initiated the study of the parameterized complexity of the \ucenterfull and
\umedianfull problems.
Prior to our work, even the parameterized complexity for
the case $k = 1$ with parameter $d$ was unknown for these problems. 

It is known~\cite{Cygan2015parameterized} that \cstringfull cannot be solved in $2^{o(d\log d)}\cdot (nm)^{\Ocal(1)}$ time unless the Exponential Time Hypothesis (ETH) fails (we refer to~\cite{Cygan2015parameterized} for the formal definition of ETH).
Then the Polynomial Parameter Transformation from \Cref{lem:red_string_to_center} implies the same computational lower bound for \ucenterone.
Is it possible to solve \ucenterone for $k=1$ in $d^{\Ocal(d)}\cdot(nm)^{\Ocal(1)}$ time similarly to \cstringfull?

For \umedianfull, it is shown that the problem is $\W[1]$-hard when parameterized by~$d$ but is in \XP{} under the same parameterization.
Furthermore, the problem admits a polynomial kernel for the parameterization by both $k$ and $d$.
The latter result together with an \XP-algorithm implies that \umedianfull can be solved in $(kd)^{\Ocal(d)}\cdot (mn)^{\Ocal(1)}$ time. 
Is it possible to improve this running time? Similarly to \ucenterfull, we find that this question is interesting even for $k=1$. Is there a single-exponential (or even subexponential) in $d$ algorithm?

\clearpage



\bibliography{refs}

\end{document}